\newtheorem{thm}{Theorem}[section]
\newtheorem{prop}[thm]{Proposition}
\theoremstyle{definition}
\newtheorem{defn}[thm]{Definition}
\theoremstyle{remark}
\numberwithin{equation}{section}
\begin{document}

\title{Integrable Hamiltonian systems on the symplectic realizations of $\textbf{e}(3)^*$}

\maketitle 

\begin{center}
A. Odzijewicz, E. Wawreniuk\\
Department of Mathematics\\
University in Bia\l{}ystok\\
Cio\l{}kowskiego 1M, 15-245 Bia\l{}ystok, Poland\\
aodzijew@uwb.edu.pl , ewawreniuk@math.uwb.edu.pl
\end{center}

\begin{abstract}
The phase space of a gyrostat with a fixed point and a heavy top is the Lie-Poisson space $\textbf{e}(3)^*\cong \mathbb{R}^3\times \mathbb{R}^3$ dual to the Lie algebra $\textbf{e}(3)$ of Euclidean group $E(3)$. One has three naturally distinguished Poisson submanifolds of $\textbf{e}(3)^*$: (i) the dense open submanifold $\mathbb{R}^3\times \dot{\mathbb{R}}^3\subset \textbf{e}(3)^*$ which consists of all $4$-dimensional symplectic leaves ($\vec{\Gamma}^2>0$); (ii) the $5$-dimensional Poisson submanifold of $\mathbb{R}^3\times \dot{\mathbb{R}}^3$ defined by $\vec{J}\cdot \vec{\Gamma} = \mu ||\vec{\Gamma}||$; (iii) the $5$-dimensional Poisson submanifold of $\mathbb{R}^3\times \dot{\mathbb{R}}^3$ defined by $\vec{\Gamma}^2 = \nu^2$, where $\dot{\mathbb{R}}^3:= \mathbb{R}^3\backslash \{0\}$, $(\vec{J}, \vec{\Gamma})\in \mathbb{R}^3\times \mathbb{R}^3\cong \textbf{e}(3)^*$ and $\nu < 0 $, $\mu$ are some fixed real parameters. Basing on the $U(2,2)$-invariant symplectic structure of Penrose twistor space  we find  full and complete $E(3)$-equivariant symplectic realizations of these Poisson submanifolds which are $8$-dimensional for (i) and $6$-dimensional for (ii) and (iii). As a consequence of the above Hamiltonian systems on $\textbf{e}(3)^*$ lift to the ones on the above symplectic realizations. In such a way after lifting integrable cases of gyrostat with a fixed point, as well as of heavy top, we obtain a large family of integrable Hamiltonian systems on the phase spaces defined by these symplectic realizations. 
 \end{abstract}

\tableofcontents

\section{Introduction}

The group $U(2,2)$, which contains Poincare and Euclidean groups as subgroups, occupies an exceptional place in mathematical physics. First of all it is the symmetry group of conformal compactification $\bar{M}_{1,3} \cong U(2)$ of Minkowski space-time $M_{1,3}$ \cite{P}.  Hence, it is the underlying group of Penrose twistor theory \cite{P}, where one  considers the space-time points as $2$-dimensional isotropic subspace of the twistor space $\mathbb{T}$. On the other hand, the complexification $\bar{M}_{1,3}^{ \mathbb{C}}$ and the cotangent bundle $T^*\bar{M}_{1,3}$ of $\bar{M}_{1,3}$ could be considered as the phase spaces of conformal particles \cite{O2}. The above leads to investigation of a model of field theory on $\bar{M}_{1,3}^{ \mathbb{C}}$alternative to the one on the real Minkowski space $M_{1,3}$ \cite{O}. Let us mention also that $U(2,2)$ is the dynamical group for the regularized Kepler and MIC-Kepler systems \cite{I,OSW,OS}.

In this paper we will show that $U(2,2)$ plays an important role in the description of the dynamics of a gyrostat with a fixed point and heavy top dynamics in particular. Namely, taking appropriate decomposition of the Lie-Poisson space $\textbf{u}(2,2)^*$ on the subspaces including the dual space $\textbf{e}(3)^*$ of the Lie algebra $\textbf{e}(3)$ of Euclidean group $E(3)$, which is underlying structure of the rigid body theory, we construct four symplectic realizations of the Lie-Poisson space $\textbf{e}(3)^*$ in sense of \cite{W,ZU}. The above allow us to find various integrable Hamiltonian systems on these  symplectic realizations corresponding to the integrable cases of the gyrostat and heavy top dynamics on $\textbf{e}(3)^*$, see for example \cite{BF} and references therein.

The content of the paper is the following.

For self-sufficiency of the paper in Section \ref{sec:1} we present some preliminaries of the theory of  gyrostat and rigid body spinning around of a fixed point.

In Section \ref{sec:2} we discuss necessary questions concerning the twistor space $\mathbb{T}$ as $U(2,2)$-symplectic space. 

In Section \ref{sec:3} using the anti-diagonal spinor representation of the twistor space $\mathbb{T}$ we decompose, see \eqref{419}, the Lie algebra  $ \textbf{u}(2,2)\cong\textbf{u}(2,2)^*$ on the subspaces dual to appropriately chosen Lie subalgebras what allow us to define the respective momentum maps $\textbf{J}_u:\mathbb{T}\to (\textbf{e}(3)\oplus \textbf{a}(2))^*$, $\textbf{J}_e: \mathbb{T}\to \textbf{e}(3)^*$ and $\textbf{J}_a: \mathbb{T}\to \textbf{a}(2)^*$ as superpositions of the conformal group momentum map $\textbf{J}:\mathbb{T} \to \textbf{u}(2,2)^*\cong \textbf{u}(2,2)$ with projections on the components of the above decomposition. Thereby, we show that $\textbf{J}_e: \widetilde{\mathbb{T}}:= \mathbb{T}\backslash \textbf{J}^{-1}_a(0) \to \mathbb{R}^3\times\dot{\mathbb{R}}^3\subset \mathbb{R}^3\times \mathbb{R}^3 \cong \textbf{e}(3)^*$ is a full and complete symplectic realization of the open dense Poisson submanifold $\mathbb{R}^3\times \dot{\mathbb{R}}^3$ of $\textbf{e}(3)^*$ consisting of $4$-dimensional symplectic leaves of $\textbf{e}(3)^*$, see Proposition \ref{prop:41}, Proposition \ref{prop:2} and Corollary \ref{cor:44}. In the diagram \eqref{diagram62} of Proposition \ref{prop:2} the dual symplectic pair involving Poisson submanifold $\mathbb{R}^3\times \dot{\mathbb{R}}^3\subset \textbf{e}(3)^*$ is also presented.

In Section \ref{sec:4} we apply symplectic reduction procedure to the levels of the maps $J_0:\widetilde{\mathbb{T}}\to \mathbb{R}$, $
\Gamma_0:\widetilde{\mathbb{T}}\to \mathbb{R}$ and $\textbf{J}_a:\widetilde{\mathbb{T}}\to \textbf{a}(2)^*$, defined in \eqref{428}, \eqref{429} and \eqref{435}, respectively. As a result  we obtain three other $\textbf{J}_{e, \mu}: \mathbb{R}^3\times \dot{\mathbb{R}}^3\to \textbf{e}(3)^*$, $ \textbf{J}_{e, \nu}: T^*\mathbb{S}^3_{\rho}\to \textbf{e}(3)^*$ and $\textbf{J}_{e, \mu , \nu}: M_{\mu, \nu }\to \textbf{e}(3)^*$ $E(3)$-equivariant  symplectic realizations of $\textbf{e}(3)^*$. See also diagram \eqref{538}.  The Poisson geometric structure of these realizations is investigated too.

In Section \ref{sec:5}  we present a family of integrable Hamiltonian systems on symplectic manifolds $(\widetilde{\mathbb{T}}, \Omega)$, $(\mathbb{R}^3\times \dot{\mathbb{R}}^3, \Omega_\mu )$ and $(T^*\mathbb{S}^3_\rho, \Omega_\nu)$, being the liftings of the integrable cases of gyrostat and heavy top on the above symplectic manifolds, e.g. see \cite{BF}. Some physical interpretation of these Hamiltonian systems is discussed too.

For self-sufficiency of the paper we include short appendices $A$ and $B$ containing some notions and facts from Poisson Geometry.

\section{Preliminaries on heavy top}\label{sec:1}

The Lie algebra $\textbf{e}(3)$ of the Euclidean group $E(3)$ underlies the rigid body theory, i.e. the description of motion of a gyrostat with a fixed point (heavy top in particular) in gravitational field, see e.g. \cite{BF,RM} and references therein. Omitting details, one can consider the gyrostat with a fixed point as a Hamiltonian system on the Lie-Poisson space $\textbf{e}(3)^* \cong \mathbb{R}^3 \times \mathbb{R}^3$ with Hamiltonian in general given by 
\begin{multline}\label{Hht}
H_{\lambda} = \frac{(J_1+ \lambda_1)^2}{2I_1}+ \frac{(J_2+ \lambda_2)^2}{2I_2}+\frac{(J_3+ \lambda_3)^2}{2I_3}+ U(\vec{\Gamma})= \\
 =\frac{1}{2} (\vec{J}+\vec{\lambda})^TI^{-1}  (\vec{J}+\vec{\lambda }) + U(\vec{\Gamma}),
\end{multline}
where $\vec{\lambda}: \mathbb{R}^3 \to \mathbb{R}^3$ and $U:\mathbb{R}^3 \to \mathbb{R}$ are arbitrary smooth functions of  $\vec{\Gamma}\in \mathbb{R}^3$. The fixed parameters $I_1, I_2$ and $I_3$ are the principal moments of inertia and
\begin{equation}
I:= \left(\begin{array}{ccc}
I_1 & 0 & 0 \\
0 & I_2 & 0\\
0 & 0 & I_3
\end{array}\right).
\end{equation}
In the paper we will consider only the case $\vec{\lambda }(\Gamma ) =const$. 
The Lie-Poisson brackets of the dynamical coordinates functions $(\vec{J}, \vec{\Gamma})\in \mathbb{R}^3 \times \mathbb{R}^3 \cong (\textbf{e}(3)^*)^*\cong \textbf{e}(3)$ on the dual space $\textbf{e}(3)^*$ of $\textbf{e}(3)$ are defined by 
\begin{equation}\label{LPBjg}
\{J_k, J_l\}_{_{LP}} = \epsilon_{klm}J_m, \quad \{J_k, \Gamma_l \}_{_{LP}}= \epsilon_{klm}\Gamma_m, \quad \{\Gamma_k, \Gamma_l \}_{_{LP}} =0, 
\end{equation}
where $k,l=1,2,3$.  The  vector quantities $\vec{J}$ and $\vec{\Gamma }$ are the body angular momentum with respect to the fixed point and the gravity field, respectively, as seen from the body. If $U(\vec{\Gamma}) = \vec{\chi}\cdot \vec{\Gamma}$ then $\vec{\chi}$  is the fixed  body  vector on the line segment connecting the body fixed point with its center of a mass multiplied by gravitational acceleration $g$ and total mass $M$ of the body. In the case when the rigid body has cavities entirely filled by an incompressible perfect fluid then its angular momentum is $\vec{J} + \vec{\lambda}$, where the constant vector $\vec{\lambda}$ describes the cyclic motion of the fluid in cavities. The formula on angular momentum remains the same if instead of the liquid in body cavities one consider a gyrostat, i.e. a flywheel in the body fixed in such a way that $\vec{\lambda}$ is the direction of its axis and the body mass density does not depend on the wheel position \cite{Z}. For the case $\vec{\lambda}=0$ and potential $U$ as a quadratic function of $\vec{\Gamma}$ one obtains the dynamics of a rigid body in a liquid \cite{C}.

From \eqref{LPBjg} one easily sees that
\begin{equation}\label{casimirs}
K_1(\vec{J}, \vec{\Gamma}):=\vec{\Gamma }\cdot \vec{J} \qquad\mbox{ and }\qquad K_2(\vec{J}, \vec{\Gamma}):=\vec{\Gamma}^2 
\end{equation}
are Casimir functions for the Lie-Poisson algebra  $(C^\infty(\textbf{e}(3)^*, \mathbb{R}), \{\cdot , \cdot \}_{_{LP}})$ defined by (\ref{LPBjg}), i.e. one has $\{K_1, F\}_{_{LP}}=0= \{K_2 , F\}_{_{LP}}$ for arbitrary function $F= F(\vec{J}, \vec{\Gamma})$ of the dynamical variables $(\vec{J}, \vec{\Gamma})\in \mathbb{R}^3 \times \mathbb{R}^3$. So, for integration of the system one needs only one integral of motion $K=K(\vec{J}, \vec{\Gamma})$ except of the Hamiltonian (\ref{Hht}). For the most general case, when Hamiltonian $H=H(\vec{J}, \vec{\Gamma})$ is an arbitrary function of $\vec{J}$ and $\vec{\Gamma}$,  the heavy top dynamics is described by the Hamilton equations 
\begin{equation}\label{hameq1}
\begin{array}{l}
\frac{d}{dt}\vec{J} = \vec{J}\times \frac{\partial H}{\partial \vec{J}} + \vec{\Gamma}\times \frac{\partial H}{\partial \vec{\Gamma}}, \\
\frac{d}{dt}\vec{\Gamma} = \vec{\Gamma} \times \frac{\partial H}{\partial \vec{J}}.
\end{array}
\end{equation}
Taking Hamiltonian \eqref{Hht} in \eqref{hameq1} we obtain Hamilton equations 
\begin{equation}
\begin{array}{l}
\frac{d}{dt}\vec{J} = \vec{J}\times I^{-1}(\vec{J}+\vec{\lambda}) + \vec{\Gamma}\times \frac{\partial U}{\partial \vec{\Gamma}}, \\
\frac{d}{dt}\vec{\Gamma} = \vec{\Gamma} \times I^{-1}(\vec{J}+\vec{\lambda}).
\end{array}
\end{equation}
for a gyrostat with a fixed point.

There are known many famous integrable cases of the rigid body systems. Following of \cite{BF} we mention some of them in the chronological order as they were investigated by: L. Euler (1750), J.L. Lagrange (1788),  A. Clebsch (1871), N. Zhukovskii (1885),  V. Steklov and A. Lyapunov (1893), S. Kovalevskaya  (1899), D. Goryachev and S.Chaplygin (1899),  L.N. Sretenski (1963), S. Kovalevskaya and H.M. Yahia (1986).

For instance:\\
(i) One obtains the Kovalevskaya top \cite{SK} taking in \eqref{Hht} $\vec{\lambda} = 0$ and putting two moments of inertia equal and twice as large as the third one $I=I_1 = I_2 = 2I_3$. In this case the center of mass lies in the equatorial plane related to the coinciding axes of the inertia ellipsoid
\begin{equation}\label{Kht}
H_K = \frac{J_1^2}{2I} +\frac{J_2^2}{2I} + \frac{J_3^2}{I} + \chi_1 \Gamma_1+ \chi_2\Gamma_2.
\end{equation}
The fourth integral of motion is the famous Kovalevskaya invariant 
\begin{equation}
K(\vec{J}, \vec{\Gamma}) = \left(\frac{J_1^2 - J_2^2 }{2I} + \chi_2 \Gamma_2 - \chi_1 \Gamma_1\right)^2 +\left(\frac{J_1J_2}{I}- \chi_1 \Gamma_2 - \chi_2 \Gamma_1\right)^2.
\end{equation} 
(ii) One obtains the Zhukovskii case \cite{Z} taking in \eqref{Hht} $U(\vec{\Gamma}) =0$, i.e. 
\begin{equation}\label{HZ}
H_Z = \frac{(J_1+ \lambda_1)^2}{2I_1}+ \frac{(J_2+ \lambda_2)^2}{2I_2}+\frac{(J_3+ \lambda_3)^2}{2I_3} . 
\end{equation}
It describes the motion of gyrostat in a gravity field when its fixed point is in the center of a mass. The additional integral of motion is given by 
\begin{equation}
K(\vec{J}, \vec{\Gamma}) = \vec{J}^2. 
\end{equation}
Taking in \eqref{HZ} $\vec{\lambda}=0$ one obtains the Euler top, i.e. a rigid body with fixed point in its center of mass. \\
(iii) One obtains the Clebsh case \cite{C} taking in \eqref{Hht} $\vec{\lambda}=0$ and potential $U(\vec{\Gamma})$ as a quadratic function of $\Gamma_1, \Gamma_2, \Gamma_3$, so that
\begin{equation}
H_C = \frac{J_1^2}{2I_1} +\frac{J_2^2}{2I_2} + \frac{J_3^2}{2I_3} + \frac{\epsilon}{2}(I_1 \Gamma_1^2 + I_2\Gamma_2^2 + I_3 \Gamma_3^2), 
\end{equation}
for $\epsilon \in \mathbb{R}$. The above Hamiltonian describes the motion of a rigid body in a fluid. The quadratic polynomial
\begin{equation}
K(\vec{J}, \vec{\Gamma}) = \frac{1}{2} \vec{J}^2 - \frac{\epsilon}{2}( I_2I_3 \Gamma_1^2 + I_3I_1 \Gamma_2^2 + I_1I_2 \Gamma_3^2)
\end{equation}
is the additional integral of motion for this case.

At the end, let us also stress that the classical version of the Lipkin-Meshkov-Glick Hamiltonian
\begin{equation}\label{Hl}
H_{LMG} = \varepsilon J_3 + \frac{1}{2}V (J_+^2 + J_-^2) + \frac{1}{2}W (J_+J_- + J_-J_+),
\end{equation}
can be obtained from the Hamiltonian (\ref{HZ})  if one takes $\frac{1}{I_1}= W+V$, $\frac{1}{I_2} = W-V$, $\vec{\lambda} = (0,0,\lambda_3)^T$ and the limits $\frac{1}{I_3} \to 0$, $\frac{\lambda_3}{I_3}\to \varepsilon = const$, where the constant $\frac{\lambda_3^2}{2I_3}$ was neglected. 
Quantum version of the above Hamiltonian is  known in nuclear physics as Lipkin-Meshkov-Glick model of nuclei \cite{L}. 

\section{Penrose twistor space as a symplectic $U(2,2)$-manifold}\label{sec:2}

Following \cite{OSW}, in this section we describe the $U(2,2)$-invariant symplectic structure on Penrose twistor space \cite{P} denoted here by $\mathbb{T}$   and the respective momentum map of $\mathbb{T}$ into the dual $\textbf{u}(2,2)^*$ of Lie algbera $\textbf{u}(2,2)$ of the group $U(2,2)$.  The explicit forms of Hamilton equation on $\mathbb{T}$ and on $\textbf{u}(2,2)^*$ are presented too.

Let us recall that twistor space $\mathbb{T}$ by definition is $\mathbb{C}^4$   equipped with the Hermitian scalar product 
\begin{equation}\label{sp1}
\langle v, w \rangle := v^+\phi w =  \bar{v}^k \phi_{kl}w^l
\end{equation}
of $v,w \in \mathbb{C}^4$, $k,l=1,2,3,4$, where $\phi=[\phi_{kl}] \in \mbox{Mat}_{4\times 4 } (\mathbb{C})$ is a  Hermitian matrix $\phi^+ = \phi $ of signature $(++--)$. Additionally we will assume that $\phi^2 = \mathbbm{1}_4$.  In \eqref{sp1}, as well as in the further part of this paper, we use the summation convention. The symmetry group of $\mathbb{T} \cong (\mathbb{C}^4, \phi )$ is the group $U(2,2) \subset GL(4, \mathbb{C})$, i.e. $g \in U(2,2)$ if and only if 
\begin{equation}
\langle gv, gw \rangle = \langle v, w\rangle 
\end{equation}
for any $v, w \in \mathbb{C}^4$. In \eqref{sp1} by "$+$" we denoted Hermitian conjugation map $^+: \mathbb{C}^4 \ni v \mapsto v^+:= \bar{v}^T \in (\mathbb{C}^4)^*$. We will use also the twistor conjugation map
\begin{equation}\label{cong}
\mathbb{T} \ni w \mapsto w^* := w^+\phi \in \mathbb{T}^*, \quad w^*_l =  \bar{w}^k \phi_{kl}
\end{equation}
consistent with the scalar product (\ref{sp1}) by $\langle v, w \rangle = v^*w $. The twistor conjugation defines the conjugation of elements of $\mbox{End}(\mathbb{T})\cong \mathbb{T}\otimes \mathbb{T}^*$ by 
\begin{equation}
\mbox{End} (\mathbb{T}) \ni A \mapsto  A^* := \phi A^+ \phi \in \mbox{End}(\mathbb{T}).
\end{equation}
Hence, we see that
\begin{equation}
g \in U(2,2) \mbox{ iff } g^{-1} = g^* \quad \mbox{ and } \quad \mathcal{X} \in \textbf{u}(2,2) \mbox{ iff } \mathcal{X} + \mathcal{X}^* = 0.
\end{equation}
One has the non-singular $\mbox{Ad}_{U(2,2)}$-invariant scalar product 
\begin{equation}
\textbf{u}(2,2)\times \textbf{u}(2,2) \ni (\mathcal{X}, \mathcal{Y}) \mapsto -\mbox{Tr}( \mathcal{X}^* \mathcal{Y})= \mbox{Tr}(\mathcal{X}\mathcal{Y}) \in \mathbb{R}
\end{equation}
on the real Lie algebra $\textbf{u}(2,2)$ which defines $U(2,2)$-equivariant isomorphism 
\begin{equation}\label{linfun}
\mathfrak{I} : \textbf{u}(2,2) \ni \mathcal{X} \mapsto \mathfrak{I}(\mathcal{X}) =  \mbox{Tr}(\mathcal{X} \cdot ) \in \textbf{u}(2,2)^*
\end{equation}
between the $\textbf{u}(2,2)$ and its dual $\textbf{u}(2,2)^*$, i.e. 
\begin{equation}
\mathfrak{I}\circ \mbox{Ad}_g = \mbox{Ad}^*_{g^{-1}} \circ \mathfrak{I} \mbox{ for } g\in U(2,2).
\end{equation} 
So, in subsequent we will identify $\textbf{u}(2,2)^* \cong \textbf{u}(2,2)$ the dual space $\textbf{u}(2,2)^*$ with $\textbf{u}(2,2)$. Let us note that $\overline{\mbox{Tr}(\mathcal{X}^*Y)}= \mbox{Tr}(\mathcal{Y}^+ \phi \mathcal{X}\phi)=\mbox{Tr}(\mathcal{Y}^*\mathcal{X}) =  \mbox{Tr}(-\mathcal{X}\mathcal{Y}) = \mbox{Tr}(\mathcal{X}^*Y)$ for $\mathcal{X}, \mathcal{Y} \in \textbf{u}(2,2)$.

Taking the above into account, one can consider $\textbf{u}(2,2)$ as the Lie-Poisson space with the Lie-Poisson bracket of $F,G \in C^\infty (\textbf{u}(2,2), \mathbb{R})$ defined by 
\begin{equation}\label{LPbracket}
\{F, G \}_{_{LP}} (\rho) := \mbox{Tr}\left(\rho \left[\frac{\partial F}{\partial \rho}(\rho), \frac{\partial G}{\partial \rho}(\rho)\right]\right)
\end{equation}
where $\rho \in \textbf{u}(2,2)$ and $\frac{\partial F}{\partial \rho}(\rho), \frac{\partial G}{\partial \rho}(\rho) \in \textbf{u}(2,2) $ are derivatives  of $F,G$ taken at $\rho$. Recall here that $\rho^* = - \rho$.  We will denote by $\rho^k_l$, where $k,l=1,2,3,4$, the matrix elements of $\rho=[\rho^k_l]\in\textbf{u}(2,2) \subset \mathbb{T}\otimes \mathbb{T}^* \cong \mbox{End}(\mathbb{T})$. The formula for Lie-Poisson bracket (\ref{LPbracket})  in these coordinates assumes the form
\begin{equation}
\{F, G \}_{_{LP}} (\rho)= \rho^k_l \left(\frac{\partial F}{\partial \rho^n_l}\frac{\partial G}{\partial \rho^k_n} - \frac{\partial G}{\partial \rho^n_l}\frac{\partial F}{\partial \rho^k_n}\right) . 
\end{equation}

For linear functions $\mathfrak{I}(\mathcal{X}), \mathfrak{I}(\mathcal{Y})\in \textbf{u}(2,2)^*\subset C^\infty (\textbf{u}(2,2), \mathbb{R}) $ one has $\frac{\partial\mathfrak{I}(\mathcal{X})}{\partial \rho}(\rho) = \mathcal{X}, \frac{\partial\mathfrak{I}(\mathcal{Y})}{\partial \rho}(\rho) = \mathcal{Y}$.  Therefore, one easily sees that 
\begin{equation}\label{lpnawias}
\{\mathfrak{I}(\mathcal{X}), \mathfrak{I}(\mathcal{Y})\}_{_{LP}} (\rho) = \mathfrak{I}([\mathcal{X}, \mathcal{Y}])(\rho). 
\end{equation}

Let us mention that the Hamilton equation on the Lie-Poisson space $(\textbf{u}(2,2), \{\cdot , \cdot \}_{_{LP}})$ corresponding to a Hamilton function $H\in C^\infty (\textbf{u}(2,2), \mathbb{R})$ is the following
\begin{equation}\label{hameq:u22}
\frac{d}{dt} \rho = [\rho , \frac{\partial H}{\partial \rho}(\rho)]. 
\end{equation}

The $U(2,2)$-invariant symplectic form on $\mathbb{T} = (\mathbb{C}^4, \phi)$ one defines by 
\begin{equation}\label{1form}
\Omega:=  -id( w^*dw ).= -i dw^* \wedge dw= -i dw^*_k \wedge dw^k.
\end{equation}
Hence, the corresponding Poisson bracket of $f,g \in C^\infty (\mathbb{T}, \mathbb{R})$ and the momentum map $\textbf{J}: \mathbb{T} \to \textbf{u}(2,2)$ are given by 
\begin{equation}\label{pbf}
\{f, g \} (w, w^*) = -i \left[\frac{\partial f}{\partial w}\frac{\partial g}{\partial w^*} - \frac{\partial g}{\partial w}\frac{\partial f}{\partial w^*}\right]= -i \left[\frac{\partial f}{\partial w^k}\frac{\partial g}{\partial w^*_k} - \frac{\partial g}{\partial w^k}\frac{\partial f}{\partial w^*_k}\right]
\end{equation}
and by 
\begin{equation}\label{mm1}
\textbf{J}(w, w^+) = i ww^*, \quad \textbf{J}^k_l = iw^kw^*_l,
\end{equation}
respectively. Let us mention also that $\textbf{J}: \mathbb{T} \to \textbf{u}(2,2)$ is a $U(2,2)$-equivariant 
\begin{equation}
\textbf{J}(gw, (gw)^*) = g \textbf{J}(w, w^*) g^*
\end{equation}
Poisson map 
\begin{equation}
\{F\circ \textbf{J}, G\circ \textbf{J} \} = \{F, G\}_{_{LP}} \circ \textbf{J},
\end{equation}
which satisfies 
\begin{equation}
\textbf{J}(w, w^*)^2 = iw^*w \textbf{J}(w, w^*)
\end{equation}
and 
\begin{equation}\label{316}
(\mathfrak{I}(\mathcal{X}) \circ \textbf{J})(w, w^* ) = i w^* \mathcal{X} w. 
\end{equation}
If one takes the Hamiltonian $h = H\circ \textbf{J}$, then the flow $\sigma_t^h$ on the twistor space $\mathbb{T}$ defined by the Hamilton equations 
\begin{equation}\label{eq:319}
\begin{array}{ll}
\frac{d}{dt} w = \{h, w\}= i \frac{\partial h}{\partial w^*}, \quad & \frac{d}{dt} w^k = \{h, w^k\}= i \frac{\partial h}{\partial w^*_k}\\
\frac{d}{dt} w^* = \{h, w^*\}= -i \frac{\partial h}{\partial w}, \quad & \frac{d}{dt} w^*_k = \{h, w^*_k\}= i \frac{\partial h}{\partial w^k},
\end{array}
\end{equation}
for $h \in C^\infty (\mathbb{T} , \mathbb{R})$ and the flow $\sigma_t^H$ on $\textbf{u}(2,2)$ defined by \eqref{hameq:u22} for $H$ are $\textbf{J}$-related, i.e. $\textbf{J}\circ \sigma_t^h = \sigma_t^H \circ \textbf{J}$. So, one could consider \eqref{hameq:u22} as a Lax representation of \eqref{eq:319}.

\section{Twistor space as a symplectic realization of heavy top phase space}\label{sec:3}

We recall, see e.g. \cite{Z,W}, that  a symplectic realization of a Poisson space $(P, \{\cdot , \cdot \})$  is by definition a Poisson map $\Phi: M \to P$ from a symplectic manifold $(M, \omega)$. If $\Phi: M \to P$ is a surjective submersion then one calls it a full symplectic realization of $P$. In this section we will show that twistor space $(\mathbb{T}, \Omega)$ is a symplectic realization of the phase space $(\textbf{e}(3)^*, \{\cdot , \cdot \}_{_{LP}})$ of a gyrostat with a fixed point (heavy top in particular case).

For this purpose
we will use 
the anti-diagonal $\mathbb{T} = (\mathbb{C}^2 \oplus \mathbb{C}^2 , \phi)$ spinor representation of the twistor space $\mathbb{T} \cong (\mathbb{C}^4, \phi)$, which  is defined in block matrix form as follows 
\begin{equation}\label{eq:41}
\phi = i\left(\begin{array}{cc}
\textbf{0}  & -\sigma_0 \\
\sigma_0 & \textbf{0} \end{array}\right) \mbox{ and } w =\left(\begin{array}{c}
\vartheta \\
\zeta
\end{array}\right), 
\end{equation}
where $\sigma_0$,  $\textbf{0}$ are unit and zero $2\times 2$ matrices and $\vartheta , \zeta \in \mathbb{C}^2$ are spinor coordinates of the twistor $w\in \mathbb{C}^2 \oplus \mathbb{C}^2$.
We will also use the Pauli matrices 
\begin{equation}\label{pauli}
\sigma_0 = \left(\begin{array}{cc}
1 & 0 \\
0 & 1 \end{array}\right) , \quad \sigma_1 = \left(\begin{array}{cc}
0 & 1 \\
1 & 0 \end{array}\right), \quad \sigma_2= \left(\begin{array}{cc}
0 & i \\
-i & 0 \end{array}\right), \quad \sigma_3= \left(\begin{array}{cc}
1 & 0 \\
0 &-1  \end{array}\right)
\end{equation}
as a basis of the real vector space $H(2)$ of the Hermitian $2\times 2$-matrices. 

In anti-diagonal representation of the twistor space $\mathbb{T}$ the symplectic form (\ref{1form}), the Poisson bracket (\ref{pbf}) and momentum map (\ref{mm1})  are given by 
\begin{equation}\label{2form}
\Omega:= d\Theta= d\zeta^+ \wedge d\vartheta - d\vartheta^+\wedge d\zeta , 
\end{equation}
where 
\begin{equation}\label{thetaa}
\Theta := \zeta^+d\vartheta - \vartheta^+d\zeta, 
\end{equation}
\begin{equation}\label{antipb}
\{f, g \} (\vartheta, \zeta, \vartheta^+, \zeta^+) = \frac{\partial f}{\partial \zeta^+}\frac{\partial g}{\partial \vartheta}- \frac{\partial g}{\partial \zeta^+}\frac{\partial f}{\partial \vartheta} - \left(\frac{\partial f}{\partial \vartheta^+}\frac{\partial g}{\partial \zeta}-\frac{\partial g}{\partial \vartheta^+}\frac{\partial f}{\partial \zeta}\right)
\end{equation}
and by 
\begin{equation}\label{antimm}
\textbf{J} (\vartheta, \zeta, \vartheta^+, \zeta^+) = \left(\begin{array}{cc}
- \vartheta \zeta^+ & \vartheta\vartheta^+\\
-\zeta\zeta^+ & \zeta\vartheta^+
\end{array}\right) ,
\end{equation}
respectively.
In block matrix notation an element $g$ of the group 
$$U(2,2):= \{g\in GL(4, \mathbb{C}): g^+ \phi g = \phi \}$$
has form
\begin{equation}
g=\left(\begin{array}{cc}
A & B\\
C & D
\end{array}\right) 
\end{equation}
 where $A,B,C,D \in \mbox{Mat}_{2\times 2}(\mathbb{C})$ satisfy
\begin{align}\nonumber
A^+C & = C^+A,\\
\label{anticon}
 D^+B & =B^+D, \\
\nonumber
A^+D & - C^+B = \sigma_0.
\end{align}

The Lie algebra $\textbf{u}(2,2)$ of $U(2,2)$ consists of the elements 
\begin{equation}\label{xdef}
\mathcal{X} = \left(\begin{array}{cc}
\lambda + i \epsilon & \tau \\
\alpha & -\lambda + i \epsilon 
\end{array}\right), 
\end{equation}
where $\alpha, \lambda, \epsilon ,\tau \in H(2)$. Decomposing $\alpha = \alpha^\mu \sigma_\mu, \lambda = \lambda^\mu \sigma_\mu , \epsilon = \epsilon^\mu \sigma_\mu $ and $\tau=\tau^\mu \sigma_\mu $ in the basis \eqref{pauli} we obtain the basis of the Lie algebra $\textbf{u}(2,2)$:
\begin{equation}\label{49}
\begin{array}{rl}
\mathcal{J}_\mu:=\frac{1}{2}\left(\begin{array}{cc}
i\sigma_\mu & 0 \\
0 & i\sigma_\mu
\end{array}\right), & \quad \mathcal{L}_\mu:=\frac{1}{2}\left(\begin{array}{cc}
\sigma_\mu & 0 \\
0 & -\sigma_\mu
\end{array}\right),\\
\mathcal{T}_\mu:=  \left(\begin{array}{cc}
0 & \sigma_\mu \\
0 & 0 
\end{array}\right), & \quad \mathcal{A}_\mu:=  \left(\begin{array}{cc}
0 & 0 \\
\sigma_\mu & 0 
\end{array}\right),
\end{array}
\end{equation}
where $\mu =0,1,2,3$.

The basis $\mathcal{J}^*_\mu, \mathcal{L}^*_\mu, \mathcal{T}^*_\mu, \mathcal{A}^*_\mu$ of $\textbf{u}(2,2)$ dual to the basis \eqref{49} with respect to the pairing \eqref{linfun} is given by 
\begin{equation}\label{410}
\mathcal{J}^*_\mu = \mathcal{J}_\mu, \quad \mathcal{L}^*_\mu = \mathcal{L}_\mu, \quad \mathcal{T}^*_\mu = \frac{1}{2}\mathcal{A}_\mu, \quad \mathcal{A}^*_\mu = \frac{1}{2}\mathcal{T}_\mu. 
\end{equation}
Therefore, we have a vector space isomorphism $\mathfrak{b}:\textbf{u}(2,2)\stackrel{\sim}{\rightarrow} \textbf{u}(2,2)$ defined by 
\begin{equation}\label{biso}
\begin{array}{ll}
\mathfrak{b} (\mathcal{J}_\mu) = \mathcal{J}^*_\mu= \mathcal{J}_\mu, & \quad  \mathfrak{b}(\mathcal{L}_\mu) = \mathcal{L}^*_\mu=\mathcal{L}_\mu ,\\
 \mathfrak{b}(\mathcal{T}_\mu)=\mathcal{T}^*_\mu=\frac{1}{2}\mathcal{A}_\mu,& \quad  \mathfrak{b}(\mathcal{A}_\mu)=\mathcal{A}^*_\mu =\frac{1}{2}\mathcal{T}_\mu.
\end{array} 
\end{equation}

In order to explain the terminology used in subsequent we identify the Minkowski space $(M_{1,3}, \eta )$ with $(H(2), \mbox{ det})$ by
\begin{equation}\label{413}
M_{1,3} \ni (x^\mu) \mapsto X:= x^\mu \sigma_\mu \in H(2),
\end{equation}
where $\eta$ is Minkowski metric tensor, i.e. 
\begin{equation}
(x^0)^2-(x^{1})^2 - (x^{2})^2 - (x^{3})^2 = \eta_{\mu\nu } x^\mu x^\nu = \mbox{det}(X). 
\end{equation}
Let us stress that in this paper we use alternatively the upper $V^\nu$ and the lower $V_\mu = \eta_{\mu\nu }V^\nu$ indices of $4$-vector $V$.

Now we distinguish the Lie subalgebras of $\textbf{u}(2,2)$ important from the physical point of view as well as essential for our further considerations. 
\begin{prop}\label{prop:41}
One has the decomposition of $\textbf{u}(2,2)$  on the direct sum   
\begin{equation}\label{412}
\textbf{u}(2,2) = \textbf{e}(3) \oplus \textbf{a}(2)\oplus \mathbb{R}\mathcal{L}_0 \oplus \textbf{l}(3)  \oplus \textbf{c}(4),
\end{equation}
of vector subspaces such that:
\begin{itemize}
\item[\textbf{(i)}] $\textbf{e}(3)$  is the Euclidean Lie algebra spanned by the elements $\mathcal{J}_k,  \mathcal{T}_k$, $k=1,2,3$,  of the basis \eqref{49}, which satisfy the commutation relations 
\begin{equation}\label{411}
[\mathcal{J}_k, \mathcal{J}_l] = \epsilon_{klm}\mathcal{J}_m, \quad [\mathcal{J}_k, \mathcal{T}_l] = \epsilon_{klm}\mathcal{T}_m, \quad [\mathcal{T}_k, \mathcal{T}_l] = 0.
\end{equation}
\item[\textbf{(ii)}] $\textbf{a}(2) := \mathbb{R}\mathcal{J}_0\oplus \mathbb{R} \mathcal{T}_0$ is an abelian  Lie algebra, i.e. 
\begin{equation}\label{414}
[\mathcal{J}_0, \mathcal{T}_0]=0.
\end{equation} 
\item[\textbf{(iii)}] $\textbf{e}(3)\oplus \textbf{a}(2)$ is the Lie algebra, which is a extension of $\textbf{e}(3)$ by $\textbf{a}(2)$, i.e. except of \eqref{411} and \eqref{414} one has
\begin{equation}
[\mathcal{J}_0, \mathcal{J}_k]=[\mathcal{J}_0, \mathcal{T}_k]=[\mathcal{T}_0, \mathcal{J}_k] = [\mathcal{T}_0, \mathcal{T}_k]=0. 
\end{equation}
\item[\textbf{(iv)}] $\textbf{p}(1,3) := \textbf{e}(3) \oplus \textbf{l}(3) \oplus \mathbb{R}\mathcal{T}_0$, where $\textbf{l}(3)$ is spanned by $\mathcal{L}_k$, $k=1,2,3$, is the Poincare Lie algebra, i.e. except of \eqref{411} the generators $\mathcal{J}_k, \mathcal{T}_\mu, \mathcal{L}_m $ satisfy 
\begin{equation}
[\mathcal{J}_k, \mathcal{L}_l ] = \epsilon_{klm}\mathcal{L}_m, \quad [\mathcal{T}_k, \mathcal{L}_l] = \epsilon_{klm}\mathcal{T}_m, \quad [\mathcal{L}_k, \mathcal{L}_l]= \epsilon_{klm}\mathcal{J}_m,
\end{equation}
\begin{equation}
 [\mathcal{T}_0, \mathcal{L}_k]=-\mathcal{T}_k.
\end{equation}
\item[\textbf{(v)}] $\textbf{c}(4)$ is the abelian  Lie algebra  spanned by $\mathcal{A}_\mu, \mu =0,1,2,3$, i.e.
\begin{equation}
[\mathcal{A}_\mu, \mathcal{A}_\nu ]=0. 
\end{equation}
\end{itemize}
\end{prop}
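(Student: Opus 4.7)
The plan is to prove both the vector-space decomposition and the commutator identities by direct matrix calculation from the explicit basis \eqref{49}, using only the Pauli multiplication identity $\sigma_k \sigma_l = \delta_{kl}\sigma_0 + i\epsilon_{klm}\sigma_m$ for $k,l\in\{1,2,3\}$ together with the fact that $\sigma_0$ is the $2\times 2$ identity.

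For the direct-sum claim \eqref{412}, I would first note that $\dim_{\mathbb{R}}\textbf{u}(2,2) = 16 = 6+2+1+3+4$ matches the dimensions of the five summands. Linear independence is then immediate from the block positions of the basis elements: the $\mathcal{T}_\mu$ sit in the upper-right $2\times 2$ block, the $\mathcal{A}_\mu$ in the lower-left block, while $\mathcal{J}_\mu$ and $\mathcal{L}_\mu$ are block-diagonal and distinguished by the $i$ factor as well as by the opposite sign between the two diagonal blocks of $\mathcal{L}_\mu$. Spanning follows from \eqref{xdef}: every $\mathcal{X}\in\textbf{u}(2,2)$ has exactly that block form with $\alpha,\lambda,\epsilon,\tau\in H(2)$, and each Hermitian $2\times 2$ matrix expands uniquely in the Pauli basis $\sigma_0,\sigma_1,\sigma_2,\sigma_3$, producing $16$ real coordinates in total.

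For the commutator identities in (i)--(v), I would verify them block by block. The abelian statements for $\textbf{c}(4)$ and for the translation parts appearing in (i), (ii), (iv) follow at once from the observation that strictly upper-triangular (resp.\ strictly lower-triangular) $2\times 2$ block matrices with zero diagonal blocks commute to zero. The $\mathfrak{so}(3)$-type relations among the $\mathcal{J}_k$ and among the $\mathcal{L}_k$ live on the diagonal blocks and reduce directly to the Pauli identity, with the extra $i$ in $\mathcal{J}_\mu$ and the block-sign flip of $\mathcal{L}_\mu$ accounting for the signs and factors in (i) and (iv). Mixed commutators such as $[\mathcal{J}_k,\mathcal{T}_l]$, $[\mathcal{T}_k,\mathcal{L}_l]$ and $[\mathcal{T}_0,\mathcal{L}_k]$ are block products of a diagonal block with an off-diagonal one, hence remain off-diagonal, and one reads off the result as a linear combination of the $\mathcal{T}_m$. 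Finally, the centrality statements involving $\mathcal{J}_0$ or $\mathcal{T}_0$ with respect to the spatial generators are immediate because $\sigma_0$ is proportional to the identity and commutes with every $\sigma_\mu$.

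The main obstacle is purely bookkeeping: tracking the factors of $\tfrac{1}{2}$, the $i$'s hidden in $\mathcal{J}_\mu$, and the relative sign between the two diagonal blocks of $\mathcal{L}_\mu$ when assembling identities such as $[\mathcal{L}_k,\mathcal{L}_l]=\epsilon_{klm}\mathcal{J}_m$ and $[\mathcal{T}_0,\mathcal{L}_k]=-\mathcal{T}_k$. No conceptual input beyond the Pauli algebra is required.
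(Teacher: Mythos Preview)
Your proposal is correct and follows essentially the same approach as the paper, which simply states that all claims are established by straightforward verification. In fact you supply considerably more detail than the paper does, spelling out the dimension count, the block-position argument for linear independence, and the Pauli-algebra bookkeeping behind each commutator; the paper offers none of this and just asserts the verification.
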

\begin{proof}
All statements of the proposition are established by straightforward verification. 
\end{proof}

Now let us define 
\begin{equation}\label{eq:420}
J= J^\mu \sigma_\mu , \quad \Gamma = \Gamma^\mu \sigma_\mu, \quad L= L^\mu \sigma_\mu, \quad K = K^\mu\sigma_\mu  \in H(2),
\end{equation}
where $ J^\mu, \Gamma^\mu, L^\mu, K^\mu \in \mathbb{R}$ are the coordinates 
\begin{equation}
\rho = J^\mu \mathcal{J}^*_\mu  + \Gamma^\mu \mathcal{T}_\mu^* + L^\mu \mathcal{L}_\mu ^* + K^\mu \mathcal{A}_\mu^*
\end{equation}
of $\rho\in \textbf{u}(2,2)$ in the dual basis \eqref{410}. Then, in the matrix block notation, $\rho$ assumes the following form
\begin{equation}\label{roblock}
\rho = \frac{1}{2} \left(\begin{array}{cc}
L+iJ & K \\
\Gamma & -L+iJ 
\end{array}\right). 
\end{equation}
Applying the isomorphism $\mathfrak{b}:\textbf{u}(2,2) \stackrel{\sim}{\rightarrow}\textbf{u}(2,2)$ defined in \eqref{biso} to the decomposition \eqref{412} we obtain the decomposition
\begin{equation}\label{419}
\textbf{u}(2,2) = \textbf{e}(3)^*\oplus \textbf{a}(2)^* \oplus (\mathbb{R}\mathcal{L}_0)^* \oplus \textbf{l}(3)^* \oplus \textbf{c}(4)^*
\end{equation}
of $\textbf{u}(2,2)$ on the subspaces dual to the Lie subalgebras mentioned in Proposition \ref{prop:41}, where 
\begin{align}
\label{420}
\textbf{e}(3)^* = & \mathfrak{b}(\textbf{e}(3)) = \left\{\frac{1}{2}\left(\begin{array}{cc}
i \vec{J}\cdot \vec{\sigma} & 0 \\
 \vec{\Gamma}\cdot \vec{\sigma} & i \vec{J}\cdot \vec{\sigma}
\end{array}\right): \vec{\Gamma}, \vec{J} \in \mathbb{R}^3 \right\}, \\
\label{421}
\textbf{a}(2)^* = & \mathfrak{b}(\textbf{a}(2)) = \left\{\frac{1}{2}\left(\begin{array}{cc}
i J^0\sigma_0 & 0 \\
 \Gamma^0 \sigma_0 & i J^0\sigma_0
\end{array}\right): \Gamma^0, J^0 \in \mathbb{R} \right\}, \\
\label{422}
(\textbf{e}(3)\oplus \textbf{a}(2))^* = & \mathfrak{b}(\textbf{e}(3)\oplus \textbf{a}(2)) = \left\{\frac{1}{2}\left(\begin{array}{cc}
i J & 0 \\
 \Gamma & i J
\end{array}\right): \Gamma, J \in H(2)\right\}, \\ 
(\mathbb{R}\mathcal{L}_0\oplus \textbf{l}(3))^* = & \mathfrak{b}(\mathbb{R}\mathcal{L}_0\oplus \textbf{l}(3)) = \left\{\frac{1}{2}\left(\begin{array}{cc}
L & 0 \\
0 & -L
\end{array}\right): L \in H(2)\right\}, \\ 
\textbf{c}(4)^* = & \mathfrak{b}(\textbf{c}(4)) = \left\{\frac{1}{2}\left(\begin{array}{cc}
0 & K \\
0 & 0
\end{array}\right): K \in H(2)\right\}. 
\end{align}
Therefore, we define the momentum maps $\textbf{J}_u: \mathbb{T} \to (\textbf{e}(3)\oplus \textbf{a}(2))^*$, $\textbf{J}_e : \mathbb{T} \to \textbf{e}(3)^*$ and  $\textbf{J}_{a} : \mathbb{T} \to \textbf{a}(2)^*$  as the superpositions $\textbf{J}_u= \pi_u \circ \textbf{J}$, $\textbf{J}_e := \pi_e \circ \textbf{J}$ and  $\textbf{J}_{a} = \pi_{a}\circ \textbf{J}$   of the twistor momentum map $\textbf{J}: \mathbb{T} \to \textbf{u}(2,2)$, see \eqref{antimm}, with the projections $ \pi_u :\textbf{u}(2,2)\to (\textbf{e}(3)\oplus\textbf{a}(2))^*$, $\pi_e: \textbf{u}(2,2)\to \textbf{e}(3)^*$ and  $\pi_{a} : \textbf{u}(2,2)\to \textbf{a}(2)^*$ of  the Lie algebra $\textbf{u}(2,2)$ on the suitable components of the decomposition \eqref{419}. Thus, comparing \eqref{antimm} with \eqref{420}, \eqref{421} and \eqref{422}, respectively, we obtain 
\begin{align}
\textbf{J}_{u} (\vartheta , \zeta ) & = \frac{1}{2}\left(\begin{array}{cc}
i J(\vartheta, \zeta) & 0 \\
 \Gamma (\vartheta , \zeta) & i J (\vartheta , \zeta ) 
\end{array}\right), \\
\textbf{J}_e (\vartheta , \zeta ) & = \frac{1}{2}\left(\begin{array}{cc}
i \vec{J}(\vartheta, \zeta ) \cdot \vec{\sigma} & 0 \\
 \vec{\Gamma}(\vartheta , \zeta ) \cdot \vec{\sigma} & i \vec{J}(\vartheta, \zeta ) \cdot \vec{\sigma}
\end{array}\right)\\
\label{eqja}
\textbf{J}_{a} (\vartheta , \zeta ) & = \frac{1}{2}\left(\begin{array}{cc}
i J^0(\vartheta, \zeta ) \sigma_0 & 0 \\
 \Gamma^0(\vartheta , \zeta ) \sigma_0 & i J^0(\vartheta, \zeta ) \sigma_0
\end{array}\right)
\end{align}
where 
\begin{align}\label{428i}
J(\vartheta , \zeta ) & = i(\vartheta \zeta^+ - \zeta \vartheta^+),\\
\label{428}
 \vec{J}(\vartheta , \zeta ) & = \frac{i}{2}( \zeta^+ \vec{\sigma} \vartheta- \vartheta^+\vec{\sigma} \zeta ), \\
\label{428j}
J^0 (\vartheta , \zeta ) & = \frac{i}{2} (\zeta^+\vartheta  - \vartheta^+\zeta ), 
\end{align}
and 
\begin{align}\label{429i}
\Gamma (\vartheta , \zeta ) & = -2 \zeta \zeta^+, \\
\label{429}
\vec{\Gamma }(\vartheta , \zeta ) & = - \zeta^+\vec{\sigma}\zeta , \\
\label{429j}
\Gamma^0 (\vartheta, \zeta ) & = - \zeta^+\zeta . 
\end{align}

We see from \eqref{420}, \eqref{421} and \eqref{422} that one has the following isomorphisms
\begin{equation}\label{iso432}
\begin{array}{c}
(\textbf{e}(3)\oplus \textbf{a}(2))^* \cong H(2)\times H(2) \cong \mathbb{R}^4 \times \mathbb{R}^4 \mbox{ and } \\
\textbf{e}(3)^* \cong \mathbb{R}^3 \times \mathbb{R}^3 , \qquad \textbf{a}(2)^* \cong \mathbb{R}\times \mathbb{R}  
\end{array}
\end{equation}
of the vector spaces. 
So, in subsequent we will consider the momentum maps $\textbf{J}_e : \mathbb{T} \to \textbf{e}(3)^*$, $\textbf{J}_{a(2)} : \mathbb{T} \to \textbf{a}(2)^*$ and $\textbf{J}_u: \mathbb{T} \to (\textbf{e}(3)\oplus \textbf{a}(2))^*$ as the maps 
\begin{align}
\label{eq:ju}
\textbf{J}_u: \mathbb{T}\ni  \left(\begin{array}{c}
\vartheta \\
 \zeta \end{array}\right) \mapsto \textbf{J}_{u} (\vartheta , \zeta ) & =\left(\begin{array}{c} 
J(\vartheta , \zeta)\\
\Gamma \vartheta, \zeta)
 \end{array}\right)  \in H(2)\times H(2)\\
\label{eq:je}
\textbf{J}_e: \mathbb{T}\ni  \left(\begin{array}{c}
\vartheta \\
 \zeta \end{array}\right) \mapsto \textbf{J}_{e} (\vartheta , \zeta )& = \left(\begin{array}{c} 
\vec{J}(\vartheta , \zeta)\\
\vec{\Gamma}( \vartheta, \zeta)
\end{array}\right)\in \mathbb{R}^3\times \mathbb{R}^3\\
\label{435}
\textbf{J}_{a}: \mathbb{T}\ni  \left(\begin{array}{c}
\vartheta \\
 \zeta \end{array}\right) \mapsto \textbf{J}_{a(2)} (\vartheta , \zeta )& =\left(\begin{array}{c} 
J^0(\vartheta , \zeta)\\
\Gamma^0 ( \vartheta, \zeta ) 
\end{array}\right) \in \mathbb{R}\times \mathbb{R}
\end{align}
of $\mathbb{T}$ into $\mathbb{R}^3\times \mathbb{R}^3$, $\mathbb{R}\times \mathbb{R}$ and $H(2)\times H(2)$, respectively.

Consistently with the decomposition \eqref{412}, we present below the Lie-Poisson brackets, see \eqref{lpnawias}, of the coordinates functions $J_\mu = \eta_{\mu\nu}J^\nu $, $\Gamma_\mu = \eta_{\mu\nu }\Gamma^\nu , L_\mu =\eta_{\mu\nu}L^\nu, K_\mu = \eta_{\mu\nu }L^\nu \in C^\infty (\textbf{u}(2,2), \mathbb{R})$: 
\begin{itemize}
\item[\textbf{(i)}] Lie-Poisson brackets of the coordinates functions $J_k,  \Gamma_k$, $k=1,2,3$, of $\textbf{e}(3)^*$ are  as in \eqref{LPBjg}.\\
\item[\textbf{(ii)}] Lie-Poisson brackets of the coordinates functions $J_\mu, \Gamma_\mu$, $\mu=0,1,2,3$, of $ (\textbf{e}(3)\oplus\textbf{a}(2))^*$ are \eqref{LPBjg} and 
\begin{equation}\label{bracket2}
\{J_0 , J_l \}_{_{LP}}  = \{J_0 , \Gamma_l \}_{_{LP}}= \{J_0, \Gamma_0\} = \{\Gamma_0, J_l \}_{_{LP}}= \{\Gamma_0, \Gamma_l \}_{_{LP}}=0,\\
\end{equation}
\item[\textbf{(iii)}] Lie-Poisson brackets of the coordinates functions $L_\mu$, $\mu = 0,1,2,3$, of $(\mathbb{R}\mathcal{L}_0 \oplus \textbf{l}(3))^*$ are
\begin{equation}
\{J_k , L_l \}_{_{LP}}  = \epsilon_{klm}L_m, \quad  \{\Gamma_k , L_l \}_{_{LP}}  = i\epsilon_{klm}\Gamma_m, \quad \{L_k , L_l \}_{_{LP}}  = \epsilon_{klm}J_m, 
\end{equation}
\begin{equation}
\{\Gamma_0 , L_\mu \}_{_{LP}}= \Gamma_\mu , \quad \{J_0 , L_\mu \}_{_{LP}}= \{L_0 , L_k \}_{_{LP}}= \{L_0 , J_k\}_{_{LP}}=0,
\end{equation}
\item[\textbf{(iv)}] Lie-Poisson brackets of the coordinates functions $K_\mu $, $\mu =0,1,2,3$, of $\textbf{c}(4)^*$ are 
\begin{equation}
\begin{array}{ll}
\{K_\mu , K_\nu \}_{_{LP}} = 0, \quad & \{L_k , K_l \}_{_{LP}} = -i\epsilon_{klm}K_m, \\
 \{\Gamma_k , K_l \}_{_{LP}}= \frac{1}{2}\epsilon_{klm}J_m, \quad & \{J_k , K_l \}_{_{LP}}   = \epsilon_{klm}K_m,
\end{array}
\end{equation}
\begin{equation}
 \{J_0 , K_l \}_{_{LP}}  =  \{\Gamma_0 , K_l \}_{_{LP}} = \{L_0 , K_l \}_{_{LP}} = 0 .
\end{equation}
\end{itemize}

We also consider the subgroups $P(2,2)$ and $C(4)$ of $U(2,2)$ defined as follows
\begin{equation}\label{antig}
P(2,2):= \left\{\left(\begin{array}{cc}
A & AT\\
0 & (A^+)^{-1}
\end{array}\right) \in U(2,2): A \in GL(2, \mathbb{C}), T \in H(2)\right\}
\end{equation}
and 
\begin{equation}
C(4) := \left\{\left(\begin{array}{cc}
\sigma_0 & 0 \\
C & \sigma_0 
\end{array}\right) : C\in H(2)\right\}. 
\end{equation}
These subgroups act on $(H(2), \mbox{det})$ in the following way
\begin{equation}
\sigma_g X = A(X+T)A^+,
\end{equation}
where $g\in P(2,2)$, and 
\begin{equation}\label{4.24}
\sigma_gX = X(CX + \sigma_0)^{-1}, 
\end{equation}
where $g \in C(4)$. The Lie algebra of $C(4)$ is the commutative Lie algebra $\textbf{c}(4)$ mentioned in the point (v) of Proposition \ref{prop:41}. Note here that \eqref{4.24} is defined only if $\mbox{det}(CX+\sigma_0) \neq 0$.  

The subgroup $P(2,2)$ is isomorphic $P(2,2) \cong GL(2, \mathbb{C})\ltimes H(2)$ with semidirect product   of the linear group $GL(2, \mathbb{C})$ with  $(H(2), +)$, i.e. the product of $(A_1, T_1), (A_2, T_2)\in  GL(2, \mathbb{C})\ltimes H(2)$ is defined by
\begin{equation}
(A_1, T_1)\cdot (A_2, T_2) := (A_1A_2, T_2 + A_2^{-1} T_1 (A_2^+)^{-1}). 
\end{equation}

Let us now  distinguish the subgroups of $GL(2, \mathbb{C})\ltimes H(2)$ corresponding to the Lie subalgebras mentioned in the points (i)-(iv) of Proposition \ref{prop:41}: 
\begin{itemize}
\item[\textbf{(i)}] The double covering $\tilde{E}(3)$ of the Euclidean group $E(3)$ defined by $\tilde{E}(3) :=SU(2)\ltimes H_0(2)$, where $ (A, T) \in SU(2)\ltimes H_0(2)$ iff $ AA^* = \sigma_0, \mbox{det}(A)=1$ and $T\in H_0(2):=\{T\in H(2): \mbox{Tr}(T) =0\}$. The Lie algbera of $\tilde{E}(3)$ is $\textbf{e}(3)$.
\item[\textbf{(ii)}] The group $\tilde{A}(2)$ consists of $ (A,T)= (e^{it} \sigma_0,  s \sigma_0)$, where $ t,s \in \mathbb{R}$. The Lie algbera of $\tilde{A}(2)\cong U(1)\times \mathbb{R}$ is $\textbf{a}(2)$.
\item[\textbf{(iii)}] The group $U(2)\ltimes H(2)$, consists of $(A, T) \in GL(2, \mathbb{C})\ltimes H(2)$ such that $ AA^* = \sigma_0$. The Lie algebra of $U(2)\ltimes H(2)$ is $\textbf{e}(3)\oplus \textbf{a}(2)$.
\item[\textbf{(iv)}] The double covering $\widetilde{P(1,3)} \cong SL(2, \mathbb{C})\ltimes H(2)$ of the Poincare group $P(1,3)$, i.e.  $(A, T ) \in SL(2, \mathbb{C})\ltimes H(2) $ iff $ \mbox{det}(A) =1$. The Lie algbera of $\widetilde{P(1,3)}$ is $\textbf{p}(1,3)$.
\end{itemize}

 The inclusions between these groups are presented in the diagram below 
\begin{equation*}\label{diagram}
\begin{tikzcd}
                             &                                   & {GL(2, \mathbb{C})\ltimes H(2)}                        &                                                   \\
\tilde{A}(2) \arrow[hook]{r} & U(2)\ltimes H(2) \arrow[hook]{ru} &                                                        & {SL(2,\mathbb{C})\ltimes H(2) } \arrow[swap,hook]{lu} \\
                             &                                   & SU(2)\ltimes H_0(2) \arrow[hook]{lu} \arrow[swap,hook]{ru} &                                                  
\end{tikzcd}
\end{equation*}

The action $\Sigma_g: \mathbb{T} \to \mathbb{T}$ of $g= \left(\begin{array}{cc}
A & AT\\
0 & A
\end{array}\right)\in U(2)\ltimes H(2)$ on $\left(\begin{array}{c}
\vartheta \\
\zeta 
\end{array}\right)\in\mathbb{T}$ is given by 
\begin{equation}\label{sigmaaction}
\Sigma_g \left(\begin{array}{c}
\vartheta \\
\zeta 
\end{array}\right) = \left(\begin{array}{c}
A(\vartheta + T \zeta )\\
A\zeta 
\end{array}\right).
\end{equation}
The coadjoint action $\tilde{\mbox{Ad}}^*_{g^{-1}}: \textbf{e}(3)^*\oplus \textbf{a}(2)^*\to \textbf{e}(3)^*\oplus \textbf{a}(2)^*$ of $g=(A, T) \in U(2)\ltimes H(2)$ on $H(2)\times H(2)\cong \textbf{e}(3)^*\oplus \textbf{a}(2)^*$ assumes the form 
\begin{equation}\label{adaction}
\tilde{\mbox{Ad}}^*_{g^{-1}} (\Gamma , J) = ( A(J+ \frac{i}{2}[\Gamma, T])A^+, A\Gamma A^+).
\end{equation}
The momentum map $\textbf{J}_u:\widetilde{\mathbb{T}}\to H(2)\times H(2)$, see \eqref{428}, \eqref{429} and \eqref{eq:ju}, is an equivariant map with respect to the actions \eqref{sigmaaction} and \eqref{adaction}, i.e. 
\begin{equation}\label{452}
\begin{tikzcd}
\mathbb{T} \arrow{r}{\Sigma_g} \arrow{d}{\textbf{J}_u} & \mathbb{T} \arrow{d}{\textbf{J}_u} \\
H(2)\times H(2) \arrow{r}{\tilde{Ad}_{g^{-1}}^*}                  & H(2)\times H(2)                     
\end{tikzcd}
\end{equation}
for arbitrary $g=(A,T)\in U(2)\ltimes H(2)$. The decomposition
\begin{equation}
H(2)\times H(2) = (H_0(2)\times H_0(2))\oplus (\mathbb{R}\sigma_0\times \mathbb{R}\sigma_0 )
\end{equation}
of $H(2)\times H(2)$ corresponding to the decomposition $\textbf{e}(3)^*\oplus \textbf{a}(2)^*$ through the isomorphism \eqref{iso432} is invariant with respect to the action \eqref{adaction}. It is also easy to see that  the action \eqref{adaction} of $U(2)\ltimes H(2)$ on $\mathbb{R}\sigma_0\times \mathbb{R}\sigma_0 \cong \textbf{a}(2)^*$ is trivial. From the above we conclude that the momentum maps $\textbf{J}_{a}:\mathbb{T}\to \mathbb{R}\times \mathbb{R}$ and $\textbf{J}_e:\mathbb{T}\to \mathbb{R}^3 \times \mathbb{R}^3 $ are $U(2)\ltimes H(2)$-equivariant maps and thus, they are $\tilde{A}(2)$ and $\tilde{E}(3)\cong SU(2)\ltimes H_0(2)$ equivariant maps, respectively. Using isomorphism $H_0(2)\times H_0(2)\cong \mathbb{R}^3 \times \mathbb{R}^3$ we find from \eqref{adaction} that the coadjoint action $\mbox{Ad}^*_{g^{-1}}: \mathbb{R}^3 \times \mathbb{R}^3 \to \mathbb{R}^3 \times \mathbb{R}^3$ of $g = (A, \vec{T}\cdot \vec{\sigma})\in SU(2)\ltimes H_0(2)\cong \tilde{E}(3)$ on $\mathbb{R}^3 \times \mathbb{R}^3$ is the following 
\begin{equation}\label{eaction}
\mbox{Ad}^*_{g^{-1}} \left(\begin{array}{c}
\vec{J} \\
\vec{\Gamma}
\end{array}\right) = \left(\begin{array}{c}
O(\vec{J} + \vec{T}\times \vec{\Gamma})\\
O\vec{\Gamma}\\
\end{array}\right),
\end{equation}
where the matrix elements of the rotation $O\in SO(3)$ depend on $A\in SU(2)$ by 
\begin{equation}\label{ortmatrix}
O_{kl} = \frac{1}{2} \mbox{Tr}(\sigma_k A^+ \sigma_l A). 
\end{equation}
Let us note here that the dependence \eqref{ortmatrix} defines the double covering group epimorphism $SU(2)\ltimes H_0(2) \cong \tilde{E}(3) \to E(3)$ mentioned above.

Now, basing on the above facts, we will present statements which describe the properties of the momentum maps $\textbf{J}_u:\mathbb{T}\to \mathbb{R}^4\times \mathbb{R}^4$, $\textbf{J}_e:\mathbb{T}\to \mathbb{R}^3\times \mathbb{R}^3$ and $\textbf{J}_a:\mathbb{T}\to \mathbb{R}\times \mathbb{R}$. We will omit proofs of these statements having in mind their rather technical character. 
 \begin{prop}\label{prop:41}
\begin{itemize}
\item[\textbf{(i)}] The momentum maps $\textbf{J}_e:\mathbb{T}\to \mathbb{R}^3\times \mathbb{R}^3$, $\textbf{J}_{a}:\mathbb{T}\to \mathbb{R}\times \mathbb{R}$ and $\textbf{J}_u:\mathbb{T}\to \mathbb{R}^4\times \mathbb{R}^4 $ are respectively $\tilde{E}(3)$, $\tilde{A}(2)$ and $U(2)\ltimes H(2)$ equivariant Poisson maps of the twistor symplectic space $(\mathbb{T}, \Omega)$ into the respective Lie-Poisson spaces, where: the Poisson bracket on $\mathbb{R}^3\times \mathbb{R}^3 $ is given by \eqref{LPBjg}; from \eqref{bracket2} it follows that the bracket on $\mathbb{R}\times \mathbb{R}$  is trivial i.e. $\{f, g \} = 0 $ for any $f,g \in C^\infty(\mathbb{R}, \mathbb{R})$;  the Poisson bracket on $\mathbb{R}^4\times\mathbb{R}^4\cong H(2)\times H(2)$ is given by \eqref{LPBjg} and \eqref{bracket2}. 
\item[\textbf{(ii)}] One has $( J, \Gamma ) \in \textbf{J}_u(\mathbb{T})$ if and only if 
\begin{equation}\label{conobraz}
\mbox{det} (\Gamma) =0, \quad \mbox{Tr} (\Gamma )\leq 0 \mbox{ and } \mbox{Tr}(\Gamma J) = \mbox{Tr}(\Gamma)\mbox{Tr}(J).
\end{equation}
Writing $( J, \Gamma) \in H(2)\times H(2)$ in the coordinates defined in \eqref{eq:420}, we could express the conditions (\ref{conobraz}) in the form 
\begin{equation}\label{conobraz2}
(\Gamma^0)^2 - \vec{\Gamma}^2 =0, \quad \Gamma^0 \leq 0, \quad \Gamma^0 J^0 - \vec{\Gamma}\cdot \vec{J} =0.
\end{equation}
\item[\textbf{(iii)}] The null-levels $\textbf{J}_e^{-1}(0,0)$, $\textbf{J}_{a}^{-1}(0,0)$ and $ \textbf{J}_u^{-1}(0,0)$ of the momentum maps $\textbf{J}_e, \textbf{J}_a$ and $\textbf{J}_u$ are equal to the linear subspace  $\left\{\left(\begin{array}{c}
\vartheta \\
\zeta 
\end{array}\right) \in \mathbb{T} : \zeta =0\right\}$ of $ \mathbb{T}$, which is a Lagrangian submanifold  of $(\mathbb{T}, \Omega)$. 
\end{itemize}
\end{prop}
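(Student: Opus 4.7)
The plan is to handle the three items of the proposition one by one, leveraging the $U(2,2)$-equivariance and the Poisson property of the twistor momentum map $\textbf{J}:\mathbb{T}\to\textbf{u}(2,2)$ established in Section \ref{sec:2}. Since $\textbf{J}_e$, $\textbf{J}_a$, $\textbf{J}_u$ are obtained from $\textbf{J}$ by post-composition with the projections $\pi_e$, $\pi_a$, $\pi_u$ onto the summands of the decomposition \eqref{419}, it suffices to show that these projections are Poisson maps intertwining the coadjoint actions of $\tilde{E}(3)$, $\tilde{A}(2)$, and $U(2)\ltimes H(2)$ on the targets. The Poisson property is automatic: because $\textbf{e}(3)$, $\textbf{a}(2)$, and $\textbf{e}(3)\oplus\textbf{a}(2)$ are Lie subalgebras of $\textbf{u}(2,2)$, the Lie-Poisson brackets of the dual coordinate functions close within the respective subspaces, as one can cross-check against the tables \eqref{LPBjg}--\eqref{bracket2}. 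Equivariance follows from the invariance of the summands in \eqref{419} under the subgroup coadjoint actions \eqref{adaction}, already spelled out in the paragraphs preceding the proposition.

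For (ii) I would compute directly from \eqref{428i} and \eqref{429i}. Necessity: the matrix $\Gamma = -2\zeta\zeta^+$ has rank at most one, which gives $\mbox{det}\,\Gamma = 0$ and $\mbox{Tr}\,\Gamma = -2\|\zeta\|^2\leq 0$; the identity $\mbox{Tr}(ab^+cd^+) = (b^+c)(d^+a)$ then yields $\mbox{Tr}(\Gamma J) = -2i\|\zeta\|^2(\zeta^+\vartheta-\vartheta^+\zeta) = \mbox{Tr}(\Gamma)\mbox{Tr}(J)$. Sufficiency: a Hermitian $\Gamma$ with $\mbox{det}\,\Gamma = 0$ and $\mbox{Tr}\,\Gamma \leq 0$ has spectrum $\{0,\mbox{Tr}\,\Gamma\}$ and so factorises as $\Gamma = -2\zeta\zeta^+$ with $\zeta\in\mathbb{C}^2$ unique up to a phase. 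It remains to show that the $\mathbb{R}$-linear map $\vartheta\mapsto i(\vartheta\zeta^+-\zeta\vartheta^+)$ surjects onto $\{J\in H(2): \mbox{Tr}(\Gamma J) = \mbox{Tr}(\Gamma)\mbox{Tr}(J)\}$; one inclusion is just the computation above, and both sides are $3$-real-dimensional for $\zeta\neq 0$ (the kernel of the map is exactly $\mathbb{R}\zeta$, while the constraint is a single linear equation on the $4$-dimensional space $H(2)$). The equivalent formulation \eqref{conobraz2} is the re-expression of the three conditions in the coordinates \eqref{eq:420}.

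For (iii), the inclusion $\{\zeta = 0\}\subseteq\textbf{J}_e^{-1}(0)\cap\textbf{J}_a^{-1}(0)\cap\textbf{J}_u^{-1}(0)$ is immediate from \eqref{428}--\eqref{429j}. For the converse I would use the Pauli identity $(\zeta^+\vec{\sigma}\zeta)^2 = (\zeta^+\zeta)^2$ to deduce $\vec{\Gamma}^2 = \|\zeta\|^4$, so that $\vec{\Gamma} = 0$ already forces $\zeta = 0$; likewise $\Gamma^0 = -\|\zeta\|^2$ forces $\zeta = 0$ on the null-level of $\textbf{J}_a$, after which the remaining coordinates $\vec{J}$ and $J^0$ vanish automatically. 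Hence the three null-levels coincide with the complex $2$-plane $\{\zeta = 0\}$, a real $4$-dimensional linear subspace of the real $8$-dimensional $\mathbb{T}$. Finally, the symplectic form \eqref{2form} restricts identically to zero on $\{\zeta = 0\}$ because both summands contain $d\zeta$ or $d\zeta^+$; together with the dimension count this yields the Lagrangian property.

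The main obstacle I anticipate is the dimension-count step at the end of the sufficiency direction of (ii): one must match the image of a real-linear parametrisation with the zero locus of a bilinear constraint without trying to invert the parametrisation explicitly. Everything else is either quoted from earlier in the paper or reduces to a direct calculation with Pauli matrices and traces.
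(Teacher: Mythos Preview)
Your argument is correct and in fact more complete than the paper's own treatment: the authors simply omit the proof, writing that they do so ``having in mind [its] rather technical character.'' One small point worth flagging in your part (ii): the sufficiency direction of the proposition as stated breaks at the single point $\Gamma=0$, where the three conditions \eqref{conobraz} become vacuous yet only $(J,\Gamma)=(0,0)$ lies in the image (since $\zeta=0$ forces $J=0$); your dimension count already sidesteps this by restricting to $\zeta\neq 0$, so what you have actually proved is the correct intended statement, and the degenerate case is a minor imprecision in the paper's formulation rather than a gap in your reasoning.
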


Let us consider the open dense subset $\widetilde{\mathbb{T}} := \mathbb{T}\backslash \left\{\left(\begin{array}{c} 
\vartheta\\
0
\end{array}\right) : \vartheta \in \mathbb{C}^2\right\}$ of $\mathbb{T}$. We see from \eqref{conobraz2} that the image $\textbf{J}_u (\widetilde{\mathbb{T}})$ of $\widetilde{\mathbb{T}}$ is diffeomorphic  $\textbf{J}_u (\widetilde{\mathbb{T}}) \cong TC_{-3}$ with the tangent bundle of the lower half $C_{-}:= \{ \Gamma \in H(2) : \mbox{det}(\Gamma) =0 \mbox{ and } \mbox{Tr}(\Gamma)<0\}$ of the cone in $H(2)$. The map 
\begin{equation}
 \mathbb{R}^3 \times \dot{\mathbb{R}}^3\ni ( \vec{J}, \vec{\Gamma}) \mapsto \left( \frac{-1}{\sqrt{\vec{\Gamma}^2}} \vec{\Gamma} \cdot \vec{J} + \vec{J}\cdot \vec{\sigma}, -\sqrt{\vec{\Gamma}^2}\sigma_0 + \vec{\Gamma}\cdot \vec{\sigma}\right) \in \textbf{J}_u(\widetilde{\mathbb{T}})
\end{equation}
gives an isomorphism $TC_{-} \cong  \mathbb{R}^3\times \dot{\mathbb{R}}^3$ between the tangent bundle $TC_{-} \to C_{-}$ and the trivial vector bundle $ \mathbb{R}^3\times \dot{\mathbb{R}}^3\to \dot{\mathbb{R}}^3$, where $\dot{\mathbb{R}}^3 := \mathbb{R}^3 \backslash \{0\}$. 
Let us note  additionally that $\textbf{J}_e(\widetilde{\mathbb{T}}) \cong  \mathbb{R}^3\times \dot{\mathbb{R}}^3$ and $\textbf{J}_{a}(\widetilde{\mathbb{T}}) \cong \mathbb{R}\times \mathbb{R}_- $, where $\mathbb{R}_-:= ]-\infty, 0[$. In subsequent we will treat $\mathbb{R}\times \mathbb{R}_-$ as a Poisson manifold with trivial Poisson structure.

It is useful to define the following surjective submersion $\textbf{J}_{a,e} :  \mathbb{R}^3\times \dot{\mathbb{R}}^3 \to \mathbb{R}\times \mathbb{R}_-$ by 
\begin{equation}\label{jae}
\textbf{J}_{a,e} (\vec{J}, \vec{\Gamma}) := \left(-\frac{1}{\sqrt{\vec{\Gamma}^2}} \vec{\Gamma}\cdot \vec{J}, - \sqrt{\vec{\Gamma}^2}\right). 
\end{equation}

In next proposition we will use such notions as a full and complete symplectic realization of a Poisson manifold and a symplectic dual pair, see for example \cite{We,W}. For their definition see also Appendix \ref{B}.

\begin{prop}\label{prop:2}
\begin{itemize}
\item[\textbf{(i)}] One has the following diagram
\begin{equation}\label{diagram66}
\begin{tikzcd}
                                    & \widetilde{\mathbb{T}} \arrow{ld}[swap]{\pi_{\tilde{E}(3)}} \arrow{rd}{\pi_{\tilde{A}(2)}} &                                                                                      \\
\widetilde{\mathbb{T}}/\tilde{E}(3) &                                                                                          & \widetilde{\mathbb{T}}/\tilde{A}(2) \arrow{ll}{\pi_{\tilde{E}(3), \tilde{A}(2)}}
\end{tikzcd},
\end{equation}
whose arrows are full and complete Poisson maps:\\
1) the quotient map $\pi_{\tilde{A}(2)}: \widetilde{\mathbb{T}}\to \widetilde{\mathbb{T}}/\tilde{A}(2)$ defined by the regular action \eqref{sigmaaction} of $\tilde{A}(2)$ on $\widetilde{\mathbb{T}}$;\\
2) the quotient map $\pi_{\tilde{E}(3)}: \widetilde{\mathbb{T}}\to \widetilde{\mathbb{T}}/\tilde{E}(3)$ defined by the regular action of $\tilde{E}(3)$ on $\widetilde{\mathbb{T}}$;\\
3) the quotient map $\pi_{\tilde{E}(3), \tilde{A}(2)}: \widetilde{\mathbb{T}}\to (\widetilde{\mathbb{T}}/\tilde{A}(2))/\tilde{E}(3)$ defined by the regular action of $\tilde{E}(3)$ on $\widetilde{\mathbb{T}}/\tilde{A}(2)$.\\
The Poisson brackets on $C^\infty(\widetilde{\mathbb{T}}/\tilde{A}(2), \mathbb{R})\cong C^\infty_{\tilde{A}(2)}(\widetilde{\mathbb{T}}, \mathbb{R})$ and on  $C^\infty(\widetilde{\mathbb{T}}/\tilde{E}(3), \mathbb{R})\cong C^\infty_{\tilde{E}(3)}(\widetilde{\mathbb{T}}, \mathbb{R})$ are obtained by the restriction of the twistor Poisson bracket \eqref{antipb} to the Poisson subalgebras $C^\infty_{\tilde{A}(2)}(\widetilde{\mathbb{T}}, \mathbb{R})$ and $C^\infty_{\tilde{E}(3)}(\widetilde{\mathbb{T}}, \mathbb{R})$ of the $\tilde{A}(2)$ and $\tilde{E}(3)$ invariant functions. Let us stress that the bracket on $\widetilde{\mathbb{T}}/\tilde{E}(3)$ is the trivial one. 
\item[\textbf{(ii)}] The levels $\textbf{J}_{a}^{-1}(\textbf{J}_{a}(w))$ and $\textbf{J}_e^{-1}(\textbf{J}_e(w))$ of $\textbf{J}_{a}$ and $\textbf{J}_e$ through $w\in \widetilde{\mathbb{T}}$ are connected submanifolds of $\widetilde{\mathbb{T}}$.  The groups  $\tilde{E}(3)$ and $\tilde{A}(2)$  act in a free and transitive way on $\textbf{J}_{a}^{-1}(\textbf{J}_{a}(w))$ and $\textbf{J}_e^{-1}(\textbf{J}_e(w))$, respectively. 
\item[\textbf{(iii)}] The diagram 
\begin{equation}\label{diagram62}
\begin{tikzcd}
                & \widetilde{\mathbb{T}} \arrow{ld}[swap]{\textbf{J}_{a}} \arrow{rd}{\textbf{J}_e} &                 \\
 \mathbb{R}\times \mathbb{R}_- &                                                                               &  \mathbb{R}^3\times \dot{\mathbb{R}}^3\arrow{ll}{\textbf{J}_{a,e}}
\end{tikzcd},
\end{equation}
which arrows are full and complete Poisson maps equivariant with respect to $\tilde{E}(3)$, is isomorphic to the diagram \eqref{diagram66}. This means that arrows of this diagram correspond to the arrows of the diagram \eqref{diagram66} through the diffeomorphisms $\widetilde{\mathbb{T}}/\tilde{E}(3)\cong \mathbb{R}\times \mathbb{R}_-$ and $\widetilde{\mathbb{T}}/\tilde{A}(2)\cong \mathbb{R}^3\times \dot{\mathbb{R}}^3$.
\end{itemize}
\end{prop}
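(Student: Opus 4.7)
The plan is to prove (ii) first because the statements about freeness, transitivity, and connectedness of the level sets give the fiber structure that makes (i) and (iii) almost automatic. The key Poisson/equivariance properties have already been established in Proposition \ref{prop:41}; what remains is essentially the geometry of the fibers of $\textbf{J}_a$ and $\textbf{J}_e$ and a dimension count.

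For (ii), I would start with $\textbf{J}_a$. Fixing $(J^0,\Gamma^0)\in \mathbb{R}\times\mathbb{R}_-$, the equation $\Gamma^0=-\zeta^+\zeta$ cuts out a $3$-sphere in the $\zeta$-variable, and for each such $\zeta\neq 0$ the equation $J^0=\tfrac{i}{2}(\zeta^+\vartheta-\vartheta^+\zeta)$ is a single real-affine condition on $\vartheta\in\mathbb{C}^2\cong\mathbb{R}^4$, so the fiber is a real-affine $\mathbb{R}^3$-bundle over $S^3$, hence connected of dimension $6$. The $\tilde{E}(3)=SU(2)\ltimes H_0(2)$ action from \eqref{sigmaaction} preserves this fiber (by the equivariance \eqref{452} together with the triviality of the coadjoint action on $\mathbb{R}\sigma_0\times\mathbb{R}\sigma_0$), and acts freely because $A\zeta=\zeta$ with $\zeta\neq 0$ forces $A=\sigma_0$ in $SU(2)$ and then $T\zeta=0$ forces $T=0$ as $T\in H_0(2)$ has eigenvalues $\pm\lambda$. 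A dimension count gives transitivity on each connected component, and by the connectedness just shown the action is transitive. For $\textbf{J}_e$, a direct computation verifies that $\textbf{J}_e$ is $\tilde{A}(2)$-invariant (the phase $e^{it}$ drops out of bilinears $\zeta^+\vec{\sigma}\zeta,\zeta^+\vec{\sigma}\vartheta-\vartheta^+\vec{\sigma}\zeta$, and the shift $\vartheta\mapsto\vartheta+s\zeta$ cancels because $\zeta^+\vec{\sigma}\zeta$ is real); the $\tilde{A}(2)$-action is free by a similar stabilizer argument ($e^{it}\zeta=\zeta,\ s\zeta=0\Rightarrow t,s=0$); and a fiber of $\textbf{J}_e$ has dimension $8-6=2=\dim\tilde{A}(2)$, so it is an orbit, in particular connected.

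For (iii), I would first verify that $\textbf{J}_{a,e}\circ\textbf{J}_e=\textbf{J}_a$ using the identities $\Gamma^0=-\zeta^+\zeta$ and $\vec{\Gamma}^2=(\zeta^+\vec{\sigma}\zeta)\cdot(\zeta^+\vec{\sigma}\zeta)=(\zeta^+\zeta)^2$, which give $\sqrt{\vec{\Gamma}^2}=-\Gamma^0$, together with the constraint $\Gamma^0J^0=\vec{\Gamma}\cdot\vec{J}$ from \eqref{conobraz2}. This shows that \eqref{diagram62} commutes and factors through the quotients in \eqref{diagram66}. Since (ii) says each group acts freely and transitively on the corresponding fibers, the induced maps $\widetilde{\mathbb{T}}/\tilde{A}(2)\to \mathbb{R}^3\times\dot{\mathbb{R}}^3$ and $\widetilde{\mathbb{T}}/\tilde{E}(3)\to\mathbb{R}\times\mathbb{R}_-$ are bijective smooth submersions, hence diffeomorphisms, and identify the two diagrams. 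Surjectivity onto the stated images follows from the explicit formulas \eqref{428}--\eqref{429j} together with the characterization of $\textbf{J}_u(\widetilde{\mathbb{T}})$ from Proposition \ref{prop:41}(ii).

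Finally, for (i), fullness of all three projections means surjective submersion, which holds because the three actions are free and regular (proper) on $\widetilde{\mathbb{T}}$. That the quotient maps are Poisson follows from the standard fact that when a Lie group acts on a symplectic manifold by Poisson automorphisms, the subalgebra of invariant functions is a Poisson subalgebra, and this is the Poisson algebra of the quotient. Completeness of a Poisson map is the property that the Hamiltonian flow of $f\circ \pi$ is complete whenever the flow of $f$ is; for a quotient by a free proper action this is immediate since every integral curve of $X_{f\circ\pi}$ projects to an integral curve of $X_f$ and the fibers are single orbits (hence closed and homogeneous), so no integral curve can escape in finite time. The remaining assertion that the bracket on $\widetilde{\mathbb{T}}/\tilde{E}(3)$ is trivial is read off from \eqref{bracket2} via the identification of $C^\infty(\widetilde{\mathbb{T}}/\tilde{E}(3),\mathbb{R})$ with functions of $J^0,\Gamma^0$ through (iii). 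The main technical step I expect to be delicate is the freeness and connectedness of the $\textbf{J}_e$-fiber, where one must use carefully that $\vec{\Gamma}\neq 0$ (i.e.\ the restriction to $\widetilde{\mathbb{T}}$) to exclude the degenerate stabilizer behavior.
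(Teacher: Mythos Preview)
Your approach is correct and amounts to a detailed version of the paper's own proof, which reads in its entirety: ``The proof of the above statements is established by straightforward verification, where completeness of the maps in the diagram \eqref{diagram62} follows from its isomorphism to the diagram \eqref{diagram66} and Proposition~6.6 in \cite{W}.''

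Two small remarks on your write-up. First, in (ii) your dimension count for the $\textbf{J}_e$-fibers (``$8-6=2=\dim\tilde A(2)$, so it is an orbit, in particular connected'') is slightly circular: freeness plus equal dimension shows only that each $\tilde A(2)$-orbit is \emph{open} in its fiber, so you still owe connectedness of the fiber before concluding it is a single orbit. The cleanest fix is the direct check: $\vec\Gamma(\vartheta',\zeta')=\vec\Gamma(\vartheta,\zeta)$ forces $\zeta'=e^{it}\zeta$ (Hopf fibers), and then $\vec J(e^{-it}\vartheta',\zeta)=\vec J(\vartheta,\zeta)$ together with the fact that the real-linear map $\theta\mapsto \tfrac{i}{2}(\zeta^+\vec\sigma\,\theta-\theta^+\vec\sigma\,\zeta)$ has kernel exactly $\mathbb{R}\zeta$ gives $e^{-it}\vartheta'=\vartheta+s\zeta$, i.e.\ $(\vartheta',\zeta')$ lies in the $\tilde A(2)$-orbit of $(\vartheta,\zeta)$. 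Second, for completeness the paper does not argue directly but simply invokes Proposition~6.6 of \cite{W} (a general statement that quotient maps by free proper symplectic actions yield complete realizations); your heuristic ``no integral curve can escape in finite time'' is the right intuition for that result but would need the principal-bundle structure made explicit to be rigorous.
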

\begin{proof}
The proof of the above statements is established by straightforward verification, where completness of the maps in the diagram \eqref{diagram62} follows from its isomorphism to the diagram \eqref{diagram66} and  Proposition 6.6 in \cite{W}. 
\end{proof}
From the above proposition we conclude
\begin{prop}\label{cor:44}
\begin{itemize}
\item[\textbf{(i)}] The momentum maps $\textbf{J}_{a}: \widetilde{\mathbb{T}} \to  \mathbb{R}\times \mathbb{R}_- $, $\textbf{J}_e :\widetilde{\mathbb{T}} \to \mathbb{R}^3\times \dot{\mathbb{R}}^3 $ and $\textbf{J}_{a,e}: \mathbb{R}^3\times \dot{\mathbb{R}}^3\to \mathbb{R}\times \mathbb{R}_-$ from \eqref{diagram62} are full and complete symplectic realizations of Poisson manifolds $  \mathbb{R}\times \mathbb{R}_- $  and $ \mathbb{R}^3\times \dot{\mathbb{R}}^3 $. 
\item[\textbf{(ii)}]   The fibres of $\textbf{J}_{a}$ and $\textbf{J}_e$ are connected submanifolds of $\widetilde{\mathbb{T}}$ and are symplectically orthogonal  with respect to $\Omega$.  
\item[\textbf{(iii)}] The Poisson manifolds $  \mathbb{R}\times \mathbb{R}_- $  and $ \mathbb{R}^3\times \dot{\mathbb{R}}^3 $ form a symplectic dual pair. 
\item[\textbf{(iv)}] There is a one-to-one correspondence between the elements $(\mu, \nu) \in  \mathbb{R}\times \mathbb{R}_- = \textbf{J}_{a}(\widetilde{\mathbb{T}})$ and the symplectic leaves $\textbf{J}_e(\textbf{J}_{a}^{-1}(\mu, \nu))$ of $\textbf{e}(3)^*$ which are equal to the levels $\textbf{J}_{a,e}^{-1}(\mu, \nu)$ of $\textbf{J}_{a,e}$. 
\end{itemize}
\end{prop}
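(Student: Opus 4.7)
The plan is to read off every assertion from Proposition \ref{prop:2}, the commutation identities in Proposition \ref{prop:41}(iii), and the Casimir list \eqref{casimirs}, so that the proof amounts to repackaging already-established facts. For (i), Proposition \ref{prop:2}(i) states precisely that the three arrows of \eqref{diagram62} are full and complete Poisson maps; since $(\widetilde{\mathbb{T}},\Omega)$ is symplectic, this is by definition a full complete symplectic realization for $\textbf{J}_a$ and $\textbf{J}_e$, and a full complete Poisson map for $\textbf{J}_{a,e}$.

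For (ii), connectedness of $\textbf{J}_a^{-1}(\textbf{J}_a(w))$ and $\textbf{J}_e^{-1}(\textbf{J}_e(w))$ was already recorded in Proposition \ref{prop:2}(ii); it also follows from the fact that these fibres are orbits of the connected groups $\tilde{E}(3)=SU(2)\ltimes H_0(2)$ and $\tilde{A}(2)\cong U(1)\times\mathbb{R}$. Symplectic orthogonality I will obtain from the standard criterion (Appendix \ref{B}) that for Poisson maps $f:M\to P$ and $g:M\to Q$ out of a symplectic manifold, $\ker df$ and $\ker dg$ are $\Omega$-orthogonal if and only if $\{F\circ f,G\circ g\}=0$ for all $F\in C^\infty(P)$, $G\in C^\infty(Q)$. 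The requisite vanishing is supplied via \eqref{lpnawias} by Proposition \ref{prop:41}(iii), namely $[\textbf{a}(2),\textbf{e}(3)]=0$ inside $\textbf{u}(2,2)$, which in coordinates is the trivial-bracket block \eqref{bracket2}.

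For (iii), a dimension count upgrades (ii) to a symplectic dual pair in the strong sense: the fibres of $\textbf{J}_a$ and $\textbf{J}_e$ have codimensions $\dim\textbf{a}(2)^*=2$ and $\dim(\mathbb{R}^3\times\dot{\mathbb{R}}^3)=6$, so their dimensions $6$ and $2$ add up to $\dim\widetilde{\mathbb{T}}=8$; hence the $\Omega$-orthogonal distributions of (ii) are complementary.

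For (iv), the key technical step is to verify the factorization $\textbf{J}_a=\textbf{J}_{a,e}\circ\textbf{J}_e$. Using the null-cone identities \eqref{conobraz2}, namely $(\Gamma^0)^2=\vec{\Gamma}^2$ with $\Gamma^0\leq 0$ and $\Gamma^0 J^0=\vec{\Gamma}\cdot\vec{J}$, one reads $\Gamma^0=-\sqrt{\vec{\Gamma}^2}$ and $J^0=-\vec{\Gamma}\cdot\vec{J}/\sqrt{\vec{\Gamma}^2}$, which is exactly \eqref{jae} evaluated on $\textbf{J}_e(\vartheta,\zeta)$. Surjectivity of $\textbf{J}_e$ onto $\mathbb{R}^3\times\dot{\mathbb{R}}^3$ then yields $\textbf{J}_e(\textbf{J}_a^{-1}(\mu,\nu))=\textbf{J}_{a,e}^{-1}(\mu,\nu)$. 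To identify these level sets with symplectic leaves, I invoke \eqref{casimirs}: the defining equations of $\textbf{J}_{a,e}^{-1}(\mu,\nu)$ read $\vec{\Gamma}^2=\nu^2$ and $\vec{\Gamma}\cdot\vec{J}=-\mu\nu$, so each level is a joint level of the Casimirs $K_1,K_2$, hence a single $4$-dimensional symplectic leaf of $\mathbb{R}^3\times\dot{\mathbb{R}}^3$. Finally $(\mu,\nu)\mapsto(-\mu\nu,\nu^2)$ is a bijection from $\mathbb{R}\times\mathbb{R}_-$ onto the Casimir image $\mathbb{R}\times\mathbb{R}_{>0}$, producing the one-to-one correspondence. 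The only non-bookkeeping step is this factorization identity, and it collapses to the algebra of \eqref{conobraz2}; I do not anticipate any other obstacle.
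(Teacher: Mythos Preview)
Your argument is correct and, for parts (i)--(iii), essentially coincides with the paper's: both read the claims off Proposition~\ref{prop:2}, with connectedness of the fibres coming from connectedness of $\tilde E(3)$ and $\tilde A(2)$, and your explicit use of $[\textbf{a}(2),\textbf{e}(3)]=0$ for symplectic orthogonality simply fills in what the paper leaves implicit.

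For (iv) you take a slightly different route. The paper invokes the general dual-pair machinery (Proposition~9.2 in \cite{W}): once (iii) is established with connected fibres, the bijection between points of one leg and symplectic leaves of the other is automatic. You instead verify by hand the factorisation $\textbf{J}_a=\textbf{J}_{a,e}\circ\textbf{J}_e$ from \eqref{conobraz2} and then match the $\textbf{J}_{a,e}$-levels directly with joint Casimir levels via \eqref{casimirs}. Your approach is more elementary and self-contained, avoiding the external reference; the paper's is shorter and makes clear that the bijection is a structural consequence of the dual pair rather than a coincidence of formulas. One small slip: from $\Gamma^0=-\sqrt{\vec\Gamma^2}=\nu$ and $\Gamma^0 J^0=\vec\Gamma\cdot\vec J$ you get $\vec\Gamma\cdot\vec J=\mu\nu$, not $-\mu\nu$, so the bijection onto the Casimir values is $(\mu,\nu)\mapsto(\mu\nu,\nu^2)$ as in \eqref{casfun2}; this does not affect the argument.
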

\begin{proof}
\textbf{(i)} Follows from (iii) of Proposition \ref{prop:2}.\\
\textbf{(ii)} Follows from connectness of the groups $\tilde{E}(3)$ and $\tilde{A}(2)$. \\
\textbf{(iii)} Is a consequence of (i) and (ii). \\
\textbf{(iv)} Follows from (iii) and Proposition 9.2 in \cite{W}. 
\end{proof}

Let us mention the relationship between the momentum map $\textbf{J}_{a,e}: \mathbb{R}^3\times \dot{\mathbb{R}}^3 \to  \mathbb{R}\times \mathbb{R}_- $ and the map $K: \mathbb{R}^3\times \dot{\mathbb{R}}^3 \to  \mathbb{R}\times \mathbb{R}_+$ defined by the Casimir functions \eqref{casimirs}, i.e. 
\begin{equation}\label{casfun}
K(\vec{J}, \vec{\Gamma}) := (K_1 (\vec{J}, \vec{\Gamma}), K_2(\vec{J}, \vec{\Gamma})) = (\vec{\Gamma}\cdot \vec{J}, \vec{\Gamma}^2). 
\end{equation}
Comparing \eqref{jae} with \eqref{casfun} we find that $K= \Delta \circ \textbf{J}_{a,e}$, where the diffeomorphism $\Delta: \mathbb{R}\times \mathbb{R}_-\to \mathbb{R}\times \mathbb{R}_+$ is defined by 
\begin{equation}\label{casfun2}
\left(\begin{array}{c}
c_1 \\
c_2 
\end{array}\right) = \Delta (\mu, \nu) := \left(\begin{array}{c}
\mu \nu  \\
\nu^2
\end{array}\right). 
\end{equation}
So, one can identify the levels $K^{-1}(c_1, c_2)$ of $K$ with the levels $\textbf{J}_{a,e}^{-1}(\mu, \nu)$ of $\textbf{J}_{a,e}$. The above implies that the levels $K^{-1}(c_1, c_2)$ are the coadjoint orbits of $E(3)$.

Ending this section we write the twistor symplectic form $\Omega$ given by \eqref{2form} and the Euclidean momentum map $\textbf{J}_e:\mathbb{T} \to \mathbb{R}^3\times \mathbb{R}^3\cong \textbf{e}(3)^*$ given by \eqref{eq:je} in the real coordinates $(q_0, q_1, q_2, q_3, \pi_0, \pi_1, \pi_2, \pi_3)\in \mathbb{R}^4\times \mathbb{R}^4$ dependent on the spinor coordinates $(\vartheta , \zeta )\in \mathbb{C}^2\times \mathbb{C}^2$ in the following way 
\begin{equation}\label{rcn}
\begin{array}{ll}
\left(\begin{array}{c}
q_0\\
q_1
\end{array}\right) = \frac{1}{\sqrt{2}}(\zeta + \bar\zeta), & \left(\begin{array}{c}
q_2\\
q_3
\end{array}\right) = \frac{-i}{\sqrt{2}}(\zeta - \bar\zeta),\\
\left(\begin{array}{c}
\pi_0\\
\pi_1
\end{array}\right) = \frac{1}{\sqrt{2}}(\vartheta + \bar\vartheta), & \left(\begin{array}{c}
\pi_2\\
\pi_3
\end{array}\right) = \frac{-i}{\sqrt{2}}(\vartheta - \bar\vartheta).
\end{array}
\end{equation}
Combining \eqref{rcn} with \eqref{2form} we find that
\begin{equation}\label{2formr}
\Omega = dq_0\wedge d\pi_0 +dq_1\wedge d\pi_1 + dq_2\wedge d\pi_2 + dq_3\wedge d\pi_3,
\end{equation}
i.e. $(q, \pi ) := (q_\mu, \pi_\mu )$, where $\mu =0,1,2,3$, are the canonical coordinates on $\mathbb{T} \cong \mathbb{R}^4\times \mathbb{R}^4\cong T^*\mathbb{R}^4$. The vector components  $\vec{J}:\mathbb{T}\to \mathbb{R}^3$ and $\vec{\Gamma}: \mathbb{T}\to \mathbb{R}^3$ of $\textbf{J}_e:  \mathbb{T}\to \mathbb{R}^3\times \mathbb{R}^3$ defined in \eqref{428} and \eqref{429} respectively, in these canonical coordinates assume the form 
\begin{align}\label{njot}
\vec{J}(q, \pi ) & = \frac{1}{2}\left(\begin{array}{c}
q_0\pi_3+q_1\pi_2-q_2\pi_1 - q_3\pi_0 \\
q_0\pi_1 -q_1\pi_0 +q_2\pi_3 -q_3\pi_2\\
q_0\pi_2 -q_1\pi_3-q_2\pi_0 +q_3\pi_1
\end{array}\right)\\
\label{ngamma}
\vec{\Gamma }(q, \pi ) & = \left(\begin{array}{c}
q_0q_1+q_2q_3\\
q_1q_2 -q_0q_3\\
\frac{1}{2} (q_0^2+q_1^2 - q_2^2-q_3^2)
\end{array}\right).
\end{align}
The components $J^0:\mathbb{T}\to \mathbb{R}$ and $\Gamma^0: \mathbb{T}\to \mathbb{R}$ of the momentum map $\textbf{J}_a: \mathbb{T} \to \mathbb{R}\times \mathbb{R} $ written in $(q, \pi)$ are 
\begin{align}\label{eq:469}
J^0 (q, \pi ) & = \frac{1}{2} (q_2\pi_0 + q_3\pi_1 - q_0\pi_2 - q_1\pi_3)\\
\Gamma^0 (q, \pi ) & = -\frac{1}{2} (q_0^2+q_1^2+q_2^2+q_3^2).
\end{align}
The action \eqref{sigmaaction} of the group $U(2)\rtimes H(2)$ on $\mathbb{T}$ expressed in $(q, \pi)$ assumes illegible form, so, we will not present it here. 
 
\section{Symplectic realizations of $\textbf{e}(3)^*$ defined by symplectic reduction}\label{sec:4}

In the previous section we have shown, see point (i) of Proposition \ref{cor:44}, that $\textbf{J}_e : (\widetilde{\mathbb{T}}, \Omega ) \to \mathbb{R}^2\times \dot{\mathbb{R}}^3$ is a full and complete symplectic realization of the Poisson submanifold $ \mathbb{R}^3 \times \dot{\mathbb{R}}^3$ of $ \textbf{e}(3)^*$. Here, applying to $(\widetilde{\mathbb{T}}, \Omega)$ symplectic reduction procedures, we obtain the three another symplectic realizations of $\textbf{e}(3)^*$. First one $(\widetilde{\mathbb{T}}\cap J_0^{-1}(\mu) )/U(1)  \to   \textbf{e}(3)^* $ is obtained by reduction of $(\widetilde{\mathbb{T}}, \Omega)$ to the levels of the  the map $J_0 : \widetilde{\mathbb{T}} \to \mathbb{R}$ defined in \eqref{428j}. The second one  $ (\widetilde{\mathbb{T}}\cap \Gamma_0^{-1}(\nu) )/\mathbb{R}  \to   \textbf{e}(3)^* $ is obtained by reduction of $(\widetilde{\mathbb{T}}, \Omega)$ to the levels of $\Gamma_0 : \widetilde{\mathbb{T}} \to \mathbb{R}$ defined in  \eqref{429j}.  The third one $(\widetilde{\mathbb{T}}\cap (J_0\times \Gamma_0)^{-1}(\mu , \nu ))/(U(1)\times \mathbb{R})\to  \textbf{e}(3)^*$ can be obtained using simultaneously both above reductions, i.e.  the reduction of $(\widetilde{\mathbb{T}}, \Omega)$ by $\textbf{J}_a=J_0\times \Gamma_0: \widetilde{\mathbb{T}} \to \mathbb{R}\times \mathbb{R}_-$. Let us begin from the third case.

\textbf{1. The case of $\textbf{J}_a=J_0\times \Gamma_0:\widetilde{\mathbb{T}}\to \mathbb{R}\times\mathbb{R}_-$.}

The degeneracy leaves of the restriction $\Omega|_{\textbf{J}^{-1}_a(\mu , \nu )}$ of the twistor symplectic form $\Omega$ to the submanifold
\begin{equation}\label{5a1}
\widetilde{\mathbb{T}}\cap (J_0\times \Gamma_0)^{-1}(\mu, \nu ) = \left\{\left(\begin{array}{c}
\vartheta \\
\zeta 
\end{array}\right) \in \widetilde{\mathbb{T}} : J_0(\vartheta , \zeta ) = \mu \mbox{ and } \Gamma_0 (\vartheta , \zeta ) = \nu\right\}
\end{equation}
are the orbits of the action \eqref{sigmaaction} of the subgroup $\tilde{A}(2)$ of $U(2)\ltimes H(2)$, see point (i) of Proposition \ref{prop:2}. From the points (i) and (iv) of Proposition \ref{cor:44} we conclude:

\begin{prop}\label{prop:51}
The reduced symplectic manifold $\widetilde{\mathbb{T}}\cap (J_0\times \Gamma_0)^{-1}(\mu, \nu )/ \tilde{A}(2)$ is symplectically isomorphic 
\begin{equation}
(\widetilde{\mathbb{T}}\cap (J_0\times \Gamma_0)^{-1}(\mu, \nu )/\tilde{A}(2) , \Omega_{\mu , \nu }) \cong (S_{\mu , \nu }, \omega_{\mu, \nu })
\end{equation}
with the $(\mu, \nu )$-level 
\begin{equation}\label{def53}
S_{\mu, \nu } := \textbf{J}^{-1}_{a,e} (\mu , \nu ) = \left\{(\vec{J}, \vec{\Gamma})\in \mathbb{R}^3 \times \dot{\mathbb{R}}^3 : \vec{J}\cdot \vec{\Gamma} = \mu \nu \mbox{ and } \vec{\Gamma}^2 = \nu^2 \right\}
\end{equation}
of $\textbf{J}_{a,e}: \mathbb{R}^3 \times \dot{\mathbb{R}}^3 \to \mathbb{R}\times \mathbb{R}_-$, which is an orbit of the coadjoint action \eqref{eaction} of $E(3)$ on $\textbf{e}(3)^*\cong \mathbb{R}^3\times \mathbb{R}^3$. The symplectic form $\Omega_{\mu , \nu }$ is the reduction of $\Omega$ to the quotient manifold $\widetilde{\mathbb{T}}\cap (J_0\times \Gamma_0)^{-1}(\mu, \nu )/\tilde{A}(2)$ and $\omega_{\mu , \nu }$ is symplectic form on the coadjoint orbit $S_{\mu, \nu }$ obtained by Kirillov construction. 
\end{prop}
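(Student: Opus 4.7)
The plan is to apply Marsden-Weinstein symplectic reduction to the free Hamiltonian action of the abelian group $\tilde{A}(2) \cong U(1) \times \mathbb{R}$ with momentum map $\textbf{J}_a = J_0 \times \Gamma_0$, and then to use the dual pair structure of Proposition \ref{prop:2} to identify the resulting quotient with $S_{\mu,\nu}$.

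First I would verify that Marsden-Weinstein reduction applies. By Proposition \ref{prop:41}(i), the map $\textbf{J}_a: \widetilde{\mathbb{T}} \to \mathbb{R} \times \mathbb{R}_-$ is the $\tilde{A}(2)$-equivariant momentum map of the action \eqref{sigmaaction} restricted to $\tilde{A}(2) \subset U(2)\ltimes H(2)$, and Proposition \ref{prop:2}(ii) asserts that this action is free with orbits equal to the fibers of $\textbf{J}_a$. Since $\tilde{A}(2)$ is abelian, its coadjoint action on $\mathbb{R} \times \mathbb{R}_-$ is trivial, so every $(\mu,\nu)$ is a regular value and the standard reduction theorem produces a symplectic manifold $(\textbf{J}_a^{-1}(\mu,\nu)/\tilde{A}(2), \Omega_{\mu,\nu})$ whose form is uniquely characterized by $\pi^* \Omega_{\mu,\nu} = \iota^* \Omega$, where $\pi$ and $\iota$ denote the quotient projection and the inclusion.

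Next, from the commutativity $\textbf{J}_a = \textbf{J}_{a,e} \circ \textbf{J}_e$ in diagram \eqref{diagram62} one has $\textbf{J}_e(\textbf{J}_a^{-1}(\mu,\nu)) \subset \textbf{J}_{a,e}^{-1}(\mu,\nu) = S_{\mu,\nu}$. Since the fibers of $\textbf{J}_e$ coincide with $\tilde{A}(2)$-orbits (Proposition \ref{prop:2}(ii)), the restriction of $\textbf{J}_e$ descends to an injective smooth map $\tilde{\textbf{J}}_e: \textbf{J}_a^{-1}(\mu,\nu)/\tilde{A}(2) \to S_{\mu,\nu}$; surjectivity together with the dimension count $8 - 2 - 2 = 4 = \dim S_{\mu,\nu}$ shows that $\tilde{\textbf{J}}_e$ is a diffeomorphism. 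That $S_{\mu,\nu}$ is a single $E(3)$-coadjoint orbit was already observed after Proposition \ref{cor:44}; alternatively a direct check from \eqref{eaction} shows both $\vec{J}\cdot\vec{\Gamma}$ and $\vec{\Gamma}^2$ are invariants, and rotations together with translations $\vec{T}\times \vec{\Gamma}$ act transitively on $S_{\mu,\nu}$ (the component of $\vec{J}$ parallel to $\vec{\Gamma}$ being pinned by $\vec{J}\cdot \vec{\Gamma} = \mu\nu$), so the Kirillov-Kostant-Souriau form $\omega_{\mu,\nu}$ is well-defined.

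To finish, I would verify $\tilde{\textbf{J}}_e^* \omega_{\mu,\nu} = \Omega_{\mu,\nu}$. This uses that $\textbf{J}_e$ is a Poisson map (Proposition \ref{prop:41}(i)): the symplectic leaf of $\textbf{e}(3)^*$ through any point of $S_{\mu,\nu}$ coincides with $S_{\mu,\nu}$ and carries, by definition, the form $\omega_{\mu,\nu}$, while a Poisson momentum map always pulls back the leaf form to the reduced form at the corresponding momentum value. The main obstacle, and the technical step requiring the most care, is precisely this last compatibility: one has to trace the $\tilde{A}(2)$-invariant Hamiltonian vector fields of functions of the form $f \circ \textbf{J}_e$ on $\textbf{J}_a^{-1}(\mu,\nu)$ and match their pairings under $\Omega$ with those of Hamiltonian vector fields of $f$ on $S_{\mu,\nu}$ under $\omega_{\mu,\nu}$. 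All the structural ingredients for this verification, however, are already assembled in Sections \ref{sec:2} and \ref{sec:3}.
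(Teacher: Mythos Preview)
Your proposal is correct and follows essentially the same route as the paper, which does not give a separate proof but simply precedes the proposition with the sentence ``From the points (i) and (iv) of Proposition~\ref{cor:44} we conclude:''. You have unpacked precisely the standard dual-pair / Marsden--Weinstein argument that those references encode: freeness of the $\tilde{A}(2)$-action and the identification $\textbf{J}_e(\textbf{J}_a^{-1}(\mu,\nu)) = \textbf{J}_{a,e}^{-1}(\mu,\nu) = S_{\mu,\nu}$ from the diagram~\eqref{diagram62}, together with the symplectic-leaf correspondence for dual pairs to match $\Omega_{\mu,\nu}$ with the Kirillov form $\omega_{\mu,\nu}$.
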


Let us note that $-\nu >0$, so, $S_{\mu, \nu}$ is a $4$-dimensional $E(3)$-submanifold of $\mathbb{R}^3\times\dot{\mathbb{R}}^3 \subset \textbf{e}(3)^*$. The linear map 
\begin{equation}
R_\mu (\vec{J}, \vec{\Gamma}) := \left(\begin{array}{c}
\vec{J}-\mu \vec{\Gamma}\\
\vec{\Gamma}
\end{array}\right)
\end{equation}
of $\textbf{e}(3)^*$ into itself is equivariant with respect to the coadjoint action \eqref{eaction} of the group $E(3)$ and it defines a symplectic diffeomorphism
\begin{equation}
R_\mu : (S_{\mu , \nu }, \omega_{\mu, \nu }) \stackrel{\sim}{\rightarrow} (S_{0, \nu}, \omega_{0, \nu})
\end{equation}
between the coadjoint orbits, i.e. $\omega_{\mu , \nu} = R^*_\mu \omega_{0, \nu}$. From the definition \eqref{def53} one sees that $S_{0, \nu }\cong T\mathbb{S}^2_{-\nu} \cong T^*\mathbb{S}^2_{-\nu}$, where $\mathbb{S}^2_{-\nu}$ is a $2$-sphere of the radius $-\nu \in \mathbb{R}_+$. The bundle isomorphism  $T\mathbb{S}^2_{-\nu} \cong T^*\mathbb{S}^2_{-\nu}$ is done via Euclidean metric of $\mathbb{R}^3$. The symplectic form $\omega_{0, \nu}$ is the canonical symplectic form of $T^*\mathbb{S}^2_{-\nu}$. Explicit expression for $\omega_{\mu, \nu }= R^*_\mu \omega_{0, \nu}$ will be presented further.

Ending this subsection, let us note that $S_{\mu, \nu}$ is the total space of the bundle $\mbox{pr}_2: S_{\mu, \nu } \to \mathbb{S}^2_{-\nu}$. The bundle $\mbox{pr}_2:T^*\mathbb{S}^2_{-\nu}\cong S_{0, \nu } \to \mathbb{S}^2_{-\nu}$ is a vector bundle over $\mathbb{S}^2_{-\nu}$, which acts on the fibers of $\mbox{pr}_2: S_{\mu, \nu } \to \mathbb{S}^2_{-\nu}$, $\mu \neq 0$, making it an affine bundle over $\mathbb{S}^2_{-\nu}$. Hence, the map $R_\mu$ is identity covering morphism of these bundles.

\textbf{2. The case of $J_0 : \widetilde{\mathbb{T}} \to \mathbb{R}$.}

Let us consider $H_0(2)\times \dot{\mathbb{C}}^2$, where $  \dot{\mathbb{C}}^2:= \mathbb{C}^2 \backslash \{0\}$, as a real manifold with the left action $\Lambda : U(2) \ltimes H_0(2)\to \mbox{Diff}(H_0(2)\times  \dot{\mathbb{C}}^2)$ of the group $U(2)\ltimes H_0(2) $  on it defined by 
\begin{equation}\label{lambdag}
\Lambda_g \left(\begin{array}{c}
P \\
\zeta 
\end{array}\right) := \left(\begin{array}{c}
A(P+T)A^+ \\
A\zeta 
\end{array}\right),
\end{equation}
where $g=(A,T) \in U(2)\ltimes H_0(2)$ and $\left(\begin{array}{c}
P \\
\zeta 
\end{array}\right) \in H_0(2)\times \dot{\mathbb{C}}^2$. The orbits 
\begin{equation}\label{o57}
\mathcal{O}_\nu := H_0(2) \times \mathbb{S}^3_{\sqrt{-\nu}} = \{ (P, \zeta )\in H_0(2) \times \dot{\mathbb{C}}^2 : \zeta^+\zeta = -\nu \} \cong T^*\mathbb{S}^3_{\sqrt{-\nu}}
\end{equation}
of this action are parametrized by ${\sqrt{-\nu}} \in \mathbb{R}_+$, where 
\begin{equation}
\mathbb{S}_{\sqrt{-\nu}}^3 := \{\zeta \in \dot{\mathbb{C}}^2: \zeta^+\zeta = -\nu \}
\end{equation}
is the $3$-dimensional sphere in $\dot{\mathbb{C}}^2$ of radius $ \sqrt{-\nu}$.

Let us stress that the subgroup $\tilde{E}(3) = SU(2)\ltimes H_0(2) \subset U(2)\ltimes H_0(2)$ acts on $\mathcal{O}_\nu$ in a free and transitive way.

We also consider the map  $\Phi : H_0(2)\times \dot{\mathbb{C}}^2 \to \widetilde{\mathbb{T}}$ defined by 
\begin{equation}\label{eq:phi}
 H_0(2)\times \dot{\mathbb{C}}^2 \ni (P, \zeta ) \mapsto \Phi (P, \zeta ) := \left(\begin{array}{c}
\left(P-\frac{i\mu }{\zeta^+\zeta} \sigma_0\right)\zeta \\
\zeta 
\end{array}\right) \in \widetilde{\mathbb{T}} .
\end{equation}
One easily sees that $\Phi$ maps  $H_0(2)\times \dot{\mathbb{C}}^2$ in a smooth way on $ J_0^{-1}(\mu)\cap \widetilde{\mathbb{T}}$, where $J_0^{-1}(\mu)$ is the $\mu$-level set 
\begin{equation}
J^{-1}_0 (\mu) := \left\{\left(\begin{array}{c}
\vartheta \\
\zeta 
\end{array}\right) \in \mathbb{T}: J_0 (\vartheta , \zeta ) = \mu \right\}
\end{equation}
of the map $J_0: \mathbb{T} \to \mathbb{R}$, where $\mu \in \mathbb{R}$. 
\begin{prop}\label{prop:51}
\begin{itemize}
\item[\textbf{(i)}] The intersection $\widetilde{\mathbb{T}}\cap J^{-1}_0 (\mu)$ of $\widetilde{\mathbb{T}}$ with $J^{-1}_0 (\mu)$ is a  submanifold of $\widetilde{\mathbb{T}}$ and the map $\Phi : H_0(2)\times \dot{\mathbb{C}}^2 \to \widetilde{\mathbb{T}}\cap J^{-1}_0 (\mu) $ is a $U(2)\ltimes H_0(2)$-equivariant diffeomorphism, i.e. 
\begin{equation}\label{eq2}
\Phi \circ \Lambda_g = \Sigma_g \circ \Phi 
\end{equation}
for $g\in U(2)\ltimes H_0(2)$, where the actions $\Sigma_g$ and and $\Lambda_g$ are defined in \eqref{sigmaaction} and \eqref{lambdag}. In particular case, for $g=(e^{it} \sigma_0, 0)$, one has 
\begin{equation}\label{eq:54}
\Phi(P, e^{it}\zeta ) = e^{it} \Phi (P, \zeta ).
\end{equation}
\item[\textbf{(ii)}] The pull-back $\Phi^*\Theta$ of the one-form  
$\Theta $   defined in \eqref{thetaa} by $\Phi$ is given by 
\begin{equation}
\Phi^*\Theta = \zeta^+dP \zeta - 2 \frac{i\mu  }{\zeta^+\zeta} \zeta^+d\zeta + i\mu  d\log (\zeta^+\zeta ). 
\end{equation}
The one-forms $\Theta$ and $\Phi^*\Theta$ are invariant with respect to the actions $\Sigma_g$ and $\Lambda_g$, respectively. 
\item[\textbf{(iii)}] The superposition $\textbf{J}_e \circ \Phi : H_0(2) \times \dot{\mathbb{C}}^2 \to \mathbb{R}^3 \times \dot{\mathbb{R}}^3$ of $\Phi$ defined in \eqref{eq:phi} with the momentum map $\textbf{J}_e$ defined in \eqref{eq:je} is given by 
\begin{equation}\label{58}
(\textbf{J}_{e}\circ \Phi )  (P, \zeta ) = \left(\begin{array}{c}
\frac{i}{2} \zeta^+ [\vec{\sigma },P]\zeta + \frac{\mu}{\zeta^+\zeta} \zeta^+ \vec{\sigma}\zeta \\
- \zeta^+\vec{\sigma}\zeta 
\end{array}\right) 
\end{equation}
and it is a $U(2)\ltimes H_0(2)$-equivariant map, i.e.  $\textbf{J}_e\circ \Phi \circ \Lambda_g = \mbox{Ad}^*_{g^{-1}} \circ \Phi$, where for the definition of coadjoint action $\mbox{Ad}^*_{g^{-1}}$ and the action $\Lambda_g$  of $U(2)\ltimes H_0(2)$, see \eqref{adaction} and \eqref{eq:54}. 
\end{itemize}
\end{prop}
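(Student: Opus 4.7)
The plan is to establish \textbf{(i)--(iii)} by direct verification, using the explicit formulas for $\Sigma_g$, $\Lambda_g$, $\Theta$ and $\textbf{J}_e$ already in hand.

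For \textbf{(i)}, I would first note that the coefficient of $d\vartheta$ in $dJ_0 = \frac{i}{2}(d\zeta^+\vartheta + \zeta^+d\vartheta - d\vartheta^+\zeta - \vartheta^+d\zeta)$ is $\frac{i}{2}\zeta^+$, which is nonzero on $\widetilde{\mathbb{T}}$; hence $\mu$ is a regular value and $\widetilde{\mathbb{T}}\cap J_0^{-1}(\mu)$ is a $7$-dimensional submanifold. That $\Phi$ lands in this level set is a one-line check: substituting $\vartheta = (P - (i\mu/\zeta^+\zeta)\sigma_0)\zeta$ yields $\zeta^+\vartheta = \zeta^+P\zeta - i\mu$, and the reality of $\zeta^+P\zeta$ (using $P^+ = P$) forces $J_0 = \mu$. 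Dimensions match ($\dim H_0(2)\times\dot{\mathbb{C}}^2 = 3+4 = 7$), so bijectivity reduces to producing a smooth inverse. Given $(\vartheta,\zeta)$, I would set $\xi := \vartheta + (i\mu/\zeta^+\zeta)\zeta$ and observe that $J_0 = \mu$ is precisely the condition $\xi^+\zeta \in \mathbb{R}$. The $\mathbb{R}$-linear map $H_0(2) \ni P \mapsto P\zeta \in \mathbb{C}^2$ is injective for $\zeta \neq 0$ (since $P = \vec{P}\cdot\vec{\sigma}$ has eigenvalues $\pm|\vec{P}|$, so $P\zeta = 0$ forces $P = 0$), hence by dimension maps $H_0(2)$ bijectively onto the $3$-real-dimensional subspace $\{\xi \in \mathbb{C}^2 : \xi^+\zeta \in \mathbb{R}\}$; a rational formula in $\xi, \zeta$ with $\zeta^+\zeta$ in the denominator provides the smooth inverse. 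The equivariance \eqref{eq2} is then verified by substituting $\Lambda_g(P,\zeta) = (A(P+T)A^+, A\zeta)$ into $\Phi$; the cancellations use $A^+A = \sigma_0$, which preserves $\zeta^+\zeta$, and yield exactly $\Sigma_g \circ \Phi$.

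For \textbf{(ii)}, I would compute $\Phi^*\Theta$ by direct substitution. Differentiating $\vartheta = P\zeta - (i\mu/\zeta^+\zeta)\zeta$ and forming $\zeta^+d\vartheta - \vartheta^+d\zeta$, the cross terms $\pm\zeta^+P\,d\zeta$ cancel because $P^+ = P$, and the surviving pieces rearrange into $\zeta^+dP\,\zeta - (2i\mu/\zeta^+\zeta)\zeta^+d\zeta + i\mu\,d\log(\zeta^+\zeta)$. Invariance of $\Theta$ under $\Sigma_g$ for $g = (A, T) \in U(2)\ltimes H(2)$ is a one-line verification using $A^+A = \sigma_0$ and $T^+ = T$: the two $\zeta^+T\,d\zeta$ cross terms arising in $\zeta^+d\vartheta$ and $\vartheta^+d\zeta$ cancel. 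Invariance of $\Phi^*\Theta$ under $\Lambda_g$ then follows from \eqref{eq2}.

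For \textbf{(iii)}, I would substitute the same expression for $\vartheta$ into \eqref{428} and \eqref{429}. The $\vec{\Gamma}$ component depends only on $\zeta$ and is immediate. For $\vec{J}$, the identity $\zeta^+\vec{\sigma}\vartheta - \vartheta^+\vec{\sigma}\zeta = \zeta^+[\vec{\sigma}, P]\zeta - (2i\mu/\zeta^+\zeta)\zeta^+\vec{\sigma}\zeta$ (again using $P^+ = P$) delivers \eqref{58}. Equivariance of $\textbf{J}_e \circ \Phi$ under $U(2)\ltimes H_0(2)$ then follows by composing the $\tilde{E}(3)$-equivariance of $\textbf{J}_e$ with \eqref{eq2}, after noting that the central $U(1)$ phase acts trivially on both $\vec{J}$ and $\vec{\Gamma}$. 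The main obstacle is the bijectivity portion of \textbf{(i)}: one must identify the image of $P \mapsto P\zeta$ with precisely the codimension-one real subspace cut out by $\mbox{Im}(\xi^+\zeta) = 0$, and recognize that this is exactly the translation of $J_0 = \mu$. All the remaining assertions reduce to careful substitution and bookkeeping with the identities $A^+A = \sigma_0$ and $P^+ = P$.
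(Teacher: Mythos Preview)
Your argument is correct throughout, and for parts \textbf{(ii)} and \textbf{(iii)} it coincides with the paper's proof, which simply invokes ``straightforward verification'' (you have in fact spelled out the verifications in more detail than the paper does).

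For part \textbf{(i)}, however, your route differs from the paper's. The paper argues bijectivity of $\Phi$ structurally: it observes that both $H_0(2)\times\dot{\mathbb{C}}^2$ and $J_0^{-1}(\mu)\cap\widetilde{\mathbb{T}}$ are foliated by orbits of $SU(2)\ltimes H_0(2)$ indexed by $\nu=-\zeta^+\zeta$, that this subgroup acts freely on each orbit, and that $\Phi$ is equivariant --- hence $\Phi$ matches orbits bijectively and is a bijection. The diffeomorphism property is then obtained from the separate derivative computation $\mathrm{Im}(T\Phi)=\mathrm{Ker}(TJ_0)$. Your approach is instead linear-algebraic: you identify, for fixed $\zeta\neq 0$, the image of the injective $\mathbb{R}$-linear map $H_0(2)\ni P\mapsto P\zeta$ as the real hyperplane $\{\xi\in\mathbb{C}^2:\xi^+\zeta\in\mathbb{R}\}$, and recognize this as exactly the affine translate of the condition $J_0=\mu$. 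This gives an explicit smooth inverse directly, subsuming both bijectivity and the immersion property in one step. Your argument is more elementary and self-contained; the paper's argument exploits the group symmetry more transparently and makes the later reduction by $U(1)$ (which uses exactly this orbit picture) feel more natural. Both are complete.
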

\begin{proof}
\textbf{(i)} The level set $J_0^{-1}(\mu)$ is invariant with respect to the action $\Sigma_g$ and the orbits $\mathcal{O}_{\mu , \nu}\subset J_0^{-1}(\mu)\cap \widetilde{\mathbb{T}}$ of this action, as well as the orbits $\mathcal{O}_\nu$, see \eqref{o57}, of the action $\Lambda_g$, are parametrized by $\nu= -\zeta^+\zeta\in \mathbb{R}_-$. Since in both cases the subgroup $SU(2)\ltimes H_0(2)$ acts in a free way on these orbits and $\Phi:H_0(2)\times \dot{\mathbb{C}}^2 \to J_0^{-1}(\mu)$ is a $SU(2)\times H_0(2)$-equivariant map, we obtain that it is a bijective smooth map. The equalities 
$$ \mbox{Im}(T\Phi(P, \zeta )) = \mbox{Ker}(TJ_0(\Phi(P, \zeta)))$$
and 
$$\mbox{dim}_\mathbb{R}\mbox{Ker}(TJ_0(\Phi(P, \zeta) )) = 7,$$
where $T\Phi(P, \zeta )$ and $TJ_0(\Phi(P, \zeta))$ are the derivatives of $\Phi$ at $(P, \zeta)$  and of $J_0$ at $\Phi(P, \zeta)$,
one shows by the straightforward calculations. Hence, $J_0^{-1}(\mu) \subset \widetilde{\mathbb{T}}$ is a submanifold of $\widetilde{\mathbb{T}}$ and $\Phi:H_0(2)\times \dot{\mathbb{C}}^2 \stackrel{\sim}{\rightarrow} J_0^{-1}(\mu)$ is a diffeomorphism. The $U(2)\ltimes H_0(2)$-equivariance  of $\Phi$ is proved by the straightforward verification. \\
Points $\textbf{(ii)}$ and $\textbf{(iii)}$ are established by straightforward verification also. 
\end{proof}

It follows from Proposition \ref{prop:51} that the subgroup $\{(A, T)  \in U(2)\ltimes H_0(2):A= e^{it}\sigma_0 , T=0  \} \cong U(1)$ acts on $H_0(2)\times \dot{\mathbb{C}}^2$ by \eqref{lambdag} and on $J_0^{-1}(\mu)$ by \eqref{sigmaaction} in a regular way. So, the quotients $(H_0(2)\times \dot{\mathbb{C}}^2)/U(1) \cong H_0(2) \times  \dot{\mathbb{C}}^2/U(1)$ and $J_0^{-1}(\mu)/U(1)$ of these manifolds by $U(1)$ are $6$-dimensional manifolds. Since of \eqref{eq:54} the map $\Phi$ defines the $\tilde{E}(3) $-equivariant diffeomorphism $\tilde{\Phi}: H_0(2) \times  \dot{\mathbb{C}}^2/U(1) \stackrel{\sim}{\rightarrow}J_0^{-1}(\mu)/U(1)$ of these quotient manifolds. We recall that $\tilde{E}(3)= SU(2)\ltimes H_0(2) $.

There are the coordinates $(\vec{p}, \vec{y})\in \mathbb{R}^3 \times \dot{\mathbb{R}}^3$ globally defined on $H_0(2) \times  \dot{\mathbb{C}}^2/U(1)$ by 
\begin{equation}
\vec{p} := -\frac{1}{2} \mbox{Tr}(\vec{\sigma}P), \qquad \vec{y} :=  \mbox{Tr}(\vec{\sigma}\zeta \zeta^+). 
\end{equation}
Therefore, we will identify $H_0(2) \times  \dot{\mathbb{C}}^2/U(1)$ with $\mathbb{R}^3 \times \dot{\mathbb{R}}^3$. The symplectic form $\Omega_\mu := \Phi^*d\Theta$ written in these coordinates  is the following 
\begin{equation}\label{510}
\Omega_\mu = d\vec{p} \wedge d\vec{y} - \frac{\mu}{||\vec{y}||^3} (y_1 dy_2\wedge dy_3 + y_2 dy_3 \wedge dy_1 + y_3 dy_1 \wedge dy_2).
\end{equation}
From the invariance of $\Theta$ with respect to the action of $\tilde{E}(3)= SU(2)\ltimes H_0(2)$ given in \eqref{sigmaaction}, it follows that $\Omega_\mu$ is invariant with respect to the standard action 
\begin{equation}
\tilde{\Lambda}_g \left(\begin{array}{c}
\vec{p}\\
\vec{y}
\end{array}\right) = \left(\begin{array}{c}
O(\vec{p} + \vec{\tau})\\
O\vec{y}
\end{array}\right),
\end{equation}
where $g= (O, \vec{\tau}) \in E(3) = SO(3)\ltimes \mathbb{R}^3$, of the Euclidean group $E(3)$ on $\mathbb{R}^3 \times \dot{\mathbb{R}}^3$. Let us stress that the symplectic form $\Omega_\mu $ is the sum of  the canonical symplectic form $d\vec{p} \wedge d\vec{y}$ of $T^*\dot{\mathbb{R}}^3 \cong \mathbb{R}^3 \times \dot{\mathbb{R}}^3$ and the $2$-form of a magnetic monopole of strenght $\mu$, e.g. see \cite{RM}.

The map $\textbf{J}_e\circ \Phi:H_0(2)\times \dot{\mathbb{C}}^2 \to \mathbb{R}^3\times \mathbb{R}^3\cong \textbf{e}(3)^*$, see \eqref{58}, is constant on the orbits of the  action of $U(1)$ on $H_0(2)\ltimes \dot{\mathbb{C}}^2$ defined in \eqref{lambdag}. So, it defines the map $\textbf{J}_{e, \mu}: H_0(2)\times \dot{\mathbb{C}}^2/ U(1) \cong \mathbb{R}^3 \times \dot{\mathbb{R}}^3  \hookrightarrow \mathbb{R}^3 \times \mathbb{R}^3 \cong \textbf{e}(3)^*$ of the quotient manifold $H_0(2)\times \dot{\mathbb{C}}^2/ U(1)$ into $\textbf{e}(3)^* $.

The function $\Gamma_0\circ \Phi : H_0(2)\times \dot{\mathbb{C}}^2 \to \mathbb{R}$ is invariant with respect to the action of $U(1)$ too. So, it defines a function  $\tilde{\Gamma}_0: H_0(2) \times  \dot{\mathbb{C}}^2/U(1)\cong \mathbb{R}^3 \times \dot{\mathbb{R}}^3 \to \mathbb{R}$ on the reduced phase space, which in coordinates $(\vec{p}, \vec{y})$ assumes the form 
\begin{equation}\label{tildeg0}
\tilde{\Gamma}_0 (\vec{p}, \vec{y}) = -||\vec{y}||. 
\end{equation}
The Hamiltonian flow on $\mathbb{R}^3 \times \dot{\mathbb{R}}^3$  generated by $\tilde{\Gamma}_0$ is the following 
\begin{equation}\label{action514}
\sigma_t (\vec{p}, \vec{y})  = \left(\begin{array}{c}
\vec{p} + t \frac{\vec{y}}{||y||}\\
\vec{y}
\end{array}\right).
\end{equation}

Summing up, we formulate:
\begin{prop}\label{prop:52}
\begin{itemize}
\item[\textbf{(i)}] The map $\tilde{\Phi} : H_0(2) \times  \dot{\mathbb{C}}^2/U(1) \stackrel{\sim}{\rightarrow}J_0^{-1}(\mu)/U(1)$ defines a $SU(2)\ltimes H_0(2)$-equivariant symplectic isomorphism between  $(H_0(2) \times  \dot{\mathbb{C}}^2/U(1) \cong \mathbb{R}^3 \times \dot{\mathbb{R}}^3 , \Omega_\mu )$ and $(J_0^{-1}(\mu)/U(1), d\tilde{\Theta}|_{J_0^{-1}(\mu)})$.
\item[\textbf{(ii)}] The map $\textbf{J}_{e, \mu}: (\mathbb{R}^3 \times \dot{\mathbb{R}}^3 , \Omega_\mu ) \to (\mathbb{R}^3 \times \dot{\mathbb{R}}^3 , \{\cdot , \cdot \}_{_{LP}})$ is a symplectic realization of the Lie-Poisson space $(\mathbb{R}^3 \times \dot{\mathbb{R}}^3 , \{\cdot , \cdot \}_{_{LP}})$ i.e. 
\begin{equation}
\{\textbf{J}_{e, \mu} \circ F, \textbf{J}_{e, \mu} \circ G\}_\mu = \{F, G\}_{_{LP}} \circ \textbf{J}_{e, \mu}, 
\end{equation}
for $F,G \in C^\infty (\mathbb{R}^3 \times \dot{\mathbb{R}}^3, \mathbb{R})$, 
 which in the coordinates $(\vec{p}, \vec{y})$ takes the form
\begin{equation}\label{jemi}
\textbf{J}_{e, \mu} \left(\vec{p},\vec{y}\right) = \left(\begin{array}{c}
 \vec{y}\times \vec{p} + \frac{\mu}{||\vec{y}||}\vec{y}\\
-\vec{y}
\end{array}\right) = \left(\begin{array}{c}
\vec{J}(\vec{p}, \vec{y})\\
\vec{\Gamma}(\vec{p}, \vec{y})
\end{array}\right) . 
\end{equation}
\item[\textbf{(iii)}] The image $\textbf{J}_{e, \mu} (\mathbb{R}^3\times \dot{\mathbb{R}}^3)$ of $\textbf{J}_{e, \mu }$ is a $5$-dimensional Poisson submanifold of $\textbf{e}(3)^*$ given by 
\begin{equation}
\textbf{J}_{e, \mu }(\mathbb{R}^3\times \dot{\mathbb{R}}^3) = \{(\vec{J}, \vec{\Gamma})\in \mathbb{R}^3\times \dot{\mathbb{R}}^3: \vec{J}\cdot \vec{\Gamma} = - \mu ||\vec{\Gamma}||\}
\end{equation}
and $\textbf{J}_{e, \mu}:  \mathbb{R}^3\times \dot{\mathbb{R}}^3\to \textbf{J}_{e, \mu} (\mathbb{R}^3\times \dot{\mathbb{R}}^3)$ is a surjective Poisson submersion on $\textbf{J}_{e, \mu} (\mathbb{R}^3\times \dot{\mathbb{R}}^3)$. The fibres $\textbf{J}_{e, \mu}^{-1} (\textbf{J}_{e, \mu} (\vec{p}, \vec{y}))$ of $\textbf{J}_{e, \mu}$ are the orbits of $(\mathbb{R}, +)$, which acts in a free and symplectic way on $(\mathbb{R}^3\times \dot{\mathbb{R}}^3, \Omega_\mu)$ by \eqref{action514}. 
Therefore, from the above and from Proposition 6.6 in \cite{W} we conclude that $\textbf{J}_{e, \mu}:  \mathbb{R}^3\times \dot{\mathbb{R}}^3\to \textbf{J}_{e, \mu} (\mathbb{R}^3\times \dot{\mathbb{R}}^3)$ is a full and complete symplectic realization of the Poisson manifold $\textbf{J}_{e, \mu} (\mathbb{R}^3\times \dot{\mathbb{R}}^3)$. 
\item[\textbf{(iv)}] The Poisson bracket $\{f, g \}_\mu$ of $f,g \in C^\infty (\mathbb{R}^3 \times \dot{\mathbb{R}}^3, \mathbb{R})$ defined by the symplectic form \eqref{510} is the following
\begin{equation}\label{pb15}
\{f, g \}_\mu = \frac{\partial f}{\partial \vec{p}}\frac{\partial g}{\partial \vec{y}}- \frac{\partial f}{\partial \vec{y}}\frac{\partial g}{\partial \vec{p}} -\frac{\mu}{||\vec{y}||^3} \vec{y}\cdot \left(\frac{\partial f}{\partial \vec{p}}\times \frac{\partial g}{\partial \vec{p}}\right)
\end{equation}
and from \eqref{pb15} one immediately obtains
\begin{equation}
\{y_k, y_l \}_\mu = 0, \quad \{p_k, y_l \}_\mu = \delta_{kl}, \quad \{p_k, p_l \}_\mu = -\frac{\mu}{||\vec{y}||^3}\epsilon_{klm}y_m .
\end{equation}
\end{itemize}
\end{prop}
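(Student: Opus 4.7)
The strategy is to use the preceding proposition (the one establishing the equivariant diffeomorphism $\Phi$) as a uniform tool: every structure is first pulled back from $J_0^{-1}(\mu)$ to $H_0(2)\times\dot{\mathbb{C}}^2$ via $\Phi$, and then pushed down to the $U(1)$-quotient by means of the equivariance relation \eqref{eq:54}. For part (i), $\tilde{\Phi}$ is well-defined on the quotient by \eqref{eq:54} and is a diffeomorphism because $\Phi$ is. The symplectic isomorphism follows from exterior-differentiating the explicit formula for $\Phi^*\Theta$ given in the preceding proposition: the summand $i\mu\,d\log(\zeta^+\zeta)$ is exact and vanishes under $d$; the first summand $\zeta^+dP\,\zeta$ translates to the canonical form $d\vec{p}\wedge d\vec{y}$ using $\vec{p}=-\tfrac{1}{2}\mbox{Tr}(\vec{\sigma}P)$ and $\vec{y}=\mbox{Tr}(\vec{\sigma}\zeta\zeta^+)$; the middle term $-2i\mu(\zeta^+\zeta)^{-1}\zeta^+d\zeta$ produces the magnetic-monopole $2$-form appearing in \eqref{510}. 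Equivariance under $\tilde{E}(3)=SU(2)\ltimes H_0(2)$ is inherited directly from that of $\Phi$ together with the $U(2)\ltimes H_0(2)$-invariance of $\Theta$.

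For parts (ii) and (iii), well-definedness of $\textbf{J}_{e,\mu}$ on the quotient uses that the $U(1)$-subgroup $\{(e^{it}\sigma_0,0)\}$ acts trivially on $\textbf{e}(3)^*$ via \eqref{adaction} (the scalar $e^{it}\sigma_0$ commutes with every matrix in $H(2)$), so $\textbf{J}_e\circ\Phi$ from \eqref{58} is $U(1)$-invariant. The explicit formula \eqref{jemi} then follows by substituting $\tfrac{i}{2}\zeta^+[\vec{\sigma},P]\zeta=\vec{y}\times\vec{p}$ and $\zeta^+\vec{\sigma}\zeta=\vec{y}$, both consequences of the Pauli matrix algebra. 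The Poisson map property is inherited by Marsden--Weinstein reduction from the Poisson map $\textbf{J}_e\colon(\widetilde{\mathbb{T}},\Omega)\to\textbf{e}(3)^*$ (since $J_0$ Poisson-commutes with every pullback $F\circ\textbf{J}_e$ by \eqref{bracket2}). The image characterization is the specialization of \eqref{conobraz2} to $J^0=\mu$: it forces $\vec{\Gamma}\cdot\vec{J}=\Gamma^0 J^0=-\mu\|\vec{\Gamma}\|$. The fibres of $\textbf{J}_{e,\mu}$ coincide with the $\mathbb{R}$-orbits \eqref{action514} by the identity $\vec{y}\times(\vec{p}+t\vec{y}/\|\vec{y}\|)=\vec{y}\times\vec{p}$ combined with a $6-5=1$ rank count; freeness, symplecticity (the flow is Hamiltonian for $\tilde{\Gamma}_0$) and completeness (the flow is defined for all $t\in\mathbb{R}$) are all immediate, so Proposition~6.6 of \cite{W} yields the full and complete symplectic realization property.

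For (iv), the Poisson bracket \eqref{pb15} is obtained simply by inverting the symplectic form $\Omega_\mu$ of \eqref{510}, the monopole term being responsible for the $-\mu\,\epsilon_{klm}y_m/\|\vec{y}\|^3$ contribution to $\{p_k,p_l\}_\mu$; the remaining brackets are the canonical ones. The main technical obstacle throughout is the verification of \eqref{510}: the $U(1)$-non-invariant piece $-2i\mu(\zeta^+\zeta)^{-1}\zeta^+d\zeta$ of $\Phi^*\Theta$ must produce, after exterior differentiation and elimination of $\zeta$ in favour of $\vec{y}$, the topologically nontrivial monopole $2$-form on $\dot{\mathbb{R}}^3$. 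This step implicitly encodes the Hopf fibration $\mathbb{S}^3\to\mathbb{S}^2$ and is the conceptual heart of the reduction.
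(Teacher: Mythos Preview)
Your proposal is correct and follows essentially the same approach as the paper. In fact, the paper does not give a separate proof of this proposition at all: it is introduced with the phrase ``Summing up, we formulate'' and simply records the outcome of the discussion preceding it (the construction of $\Phi$, the computation of $\Phi^*\Theta$, the definition of the coordinates $(\vec p,\vec y)$, the explicit form \eqref{510} of $\Omega_\mu$, and the $U(1)$-invariance of $\textbf{J}_e\circ\Phi$). Your write-up supplies exactly the connective tissue that the paper leaves implicit---the Pauli-algebra identity giving \eqref{jemi}, the Marsden--Weinstein argument for the Poisson property, the identification of the image via \eqref{conobraz2}, and the inversion of $\Omega_\mu$ to obtain \eqref{pb15}---so it is a faithful expansion of the paper's own argument rather than a different route.
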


Now let us reduce the symplectic form $\Omega_\mu$ given in \eqref{510} to the level $\tilde{\Gamma}_0^{-1}(\nu)$ of $\tilde{\Gamma}_0: \mathbb{R}^3 \times \dot{\mathbb{R}}^3 \to \mathbb{R}$. The leaves of degeneracy of $\Omega_\mu|_{\tilde{\Gamma}_0^{-1}(\nu)}$ are orbits of the action \eqref{action514} of $(\mathbb{R}, +)$. On the other hand side, they are levels of the map $\textbf{J}_{e, \mu}|_{\tilde{\Gamma}_0^{-1}(\nu)} : \tilde{\Gamma}_0^{-1}(\nu)\to \textbf{e}(3)^*\cong \mathbb{R}^3\times \mathbb{R}^3$, where $\textbf{J}_{e, \mu }$ is given in \eqref{jemi}. Taking into account the above facts we obtain the $E(3)$-equivariant symplectic diffeomorphism $\textbf{J}_{e, \mu, \nu}: \tilde{\Gamma}_0^{-1}(\nu)/\mathbb{R} \stackrel{\sim}{\rightarrow}S_{\mu, \nu}$ of the reduced symplectic manifold $(\tilde{\Gamma}_0^{-1}(\nu)/\mathbb{R} , \tilde{\Omega}_{\mu, \nu})$ with the coadjoint orbit (symplectic leaf) $(S_{\mu, \nu}, \omega_{\mu, \nu})$, see \eqref{def53} in Proposition \ref{prop:51}.

Therefore, using $\textbf{J}_{e, \mu, \nu}: \tilde{\Gamma}_0^{-1}(\nu)/\mathbb{R} \stackrel{\sim}{\rightarrow}S_{\mu, \nu}$ we can obtain the explicit form of $\omega_{\mu, \nu}$. For this reason we note that arbitrary orbit of the action \eqref{action514} intersects the submanifold 
\begin{equation}\label{defmmini}
M_{\mu, \nu}:= \left\{\left(\begin{array}{c}
\vec{p}\\
\vec{y}
\end{array}\right) \in \tilde{\Gamma}_0^{-1}(\nu): \vec{p}\cdot \vec{y} =0\mbox{ and } \vec{y}^2 = \nu^2\right\}
\end{equation}
in one point only. So, we can identify $\tilde{\Gamma}_0^{-1}(\nu)/\mathbb{R}$ with $M_{\mu, \nu}$ and the reduced symplectic form $\tilde{\Omega}_{\mu, \nu}$ is obtained by the restriction $\Omega_\mu|_{M_{\mu, \nu }}$ of $\Omega_\mu$ to $M_{\mu, \nu}$. We see from the definition \eqref{defmmini} that $M_{\mu, \nu} \cong T\mathbb{S}^2_{-\nu} \cong T^*\mathbb{S}^2_{-\nu}$. It follows from Proposition \ref{prop:ap} (see Appendix), that the reduction of the canonical part $d\vec{y}\wedge d\vec{p}$ of $\Omega_\mu$ to $M_{\mu, \nu}\cong T^*\mathbb{S}^2_{-\nu}$ is the canonical symplectic form $d\gamma_\nu$ of the cotangent bundle $T^*\mathbb{S}^2_{-\nu}$,  where $\gamma_\mu$ is the Liouville form of $T^*\mathbb{S}^2_{-\nu}$. The magnetic monopole part of $\Omega_\mu$, after reduction to $T^*\mathbb{S}^2_{-\nu}$, gives the pullback $\pi^*\sigma_{-\nu}$ by the bundle projection map $\pi: T^*\mathbb{S}^2_{-\nu} \to \mathbb{S}^2_{-\nu}$ of the surface $2$-form $\sigma_{-\nu}$ of $\mathbb{S}^2_{-\nu}$. Hence, we obtain
\begin{equation}\label{omini}
\tilde{\Omega}_{\mu , \nu} = d\gamma_\nu - \frac{\mu}{\nu^2} \sigma, 
\end{equation}
where $\sigma$ is the surface form of the $2$-sphere $\mathbb{S}^2$ of radius one.

Ending this subsection we mention that the Poisson map \eqref{jemi} restricted to $M_{\mu, \nu}\subset \mathbb{R}^3 \times \dot{\mathbb{R}}^3$ defines a symplectic diffeomorphism 
\begin{equation}
\textbf{J}_{e, \mu , \nu} : (M_{\mu, \nu }, \tilde{\Omega}_{\mu, \nu}) \stackrel{\sim}{\rightarrow} (S_{\mu, \nu}, \omega_{\mu, \nu})
\end{equation}
of the reduced symplectic manifold with the suitable coadjoint orbit of $E(3)$. The equalities $\textbf{J}_{e, \mu, \nu }^* \omega_{\mu, \nu} = \tilde{\Omega}_{\mu, \nu}$ and \eqref{omini} explain geometric structure of the Kirillov symplectic form $\omega_{\mu, \nu}$. Let us also mention that the reduction by $J_0: \widetilde{\mathbb{T}} \to \mathbb{R}$ and next through $\tilde{\Gamma}_0: \mathbb{R}^3\times \dot{\mathbb{R}}^3 \to \mathbb{R}$ gives the same result as the reduction through $J_0\times \Gamma_0: \widetilde{\mathbb{T}}\to \mathbb{R}\times \mathbb{R}_-$.

\textbf{3. The case of $\Gamma_0 :\widetilde{\mathbb{T}}\to \mathbb{R}$.}

In this subsection we describe the reduction of the twistor symplectic structure $(\widetilde{\mathbb{T}}, \Omega)$  to the level sets
\begin{equation}
\Gamma_0^{-1} \left(\nu\right) := \left\{ \left(\begin{array}{c}
\vartheta \\
\zeta 
\end{array}\right) \in \widetilde{\mathbb{T}} : \zeta^+\zeta = -\nu \right\},
\end{equation}
where $\nu \in \mathbb{R}_-$, of the function $\Gamma_0:\widetilde{\mathbb{T}}\to \mathbb{R}_-$ defined in \eqref{429j}. The Hamiltonian flow $\sigma_t : \widetilde{\mathbb{T}}\to \widetilde{\mathbb{T}}$ generated by $\Gamma_0$ is given by 
\begin{equation}\label{act5}
\sigma_t \left(\begin{array}{c}
\vartheta \\
\zeta 
\end{array}\right) = \left(\begin{array}{c}
\vartheta + t \zeta\\
\zeta 
\end{array}\right)
\end{equation}
and it preserves the level submanifold $\Gamma_0^{-1} \left(\nu\right) \cong \mathbb{C}^2 \times \mathbb{S}_{\sqrt{-\nu}}^3$. The action \eqref{act5} of $(\mathbb{R}, +)$ on $\Gamma_0^{-1}(\nu)$ is regular, so, the space $\Gamma_0^{-1}(\nu)/\mathbb{R}$ of $(\mathbb{R}, +)$-orbits is a $6$-dimensional symplectic manifold, which symplectic form $\Omega_\nu$ is obtained by the reduction of the twistor symplectic form $\Omega$ to the quotient $\Gamma_0^{-1}(\nu)/\mathbb{R}$.

Let us now describe the structure of $(\Gamma_0^{-1}(\nu)/\mathbb{R}, \Omega_{\nu})$ in details. We note that $\Gamma_0^{-1}(\nu)/\mathbb{R}$ is diffeomorphic $\Gamma_0^{-1}(\nu)/\mathbb{R} \cong S_{\nu }$ to the submanifold
\begin{equation}\label{smi}
S_{\nu  } := \left\{\left(\begin{array}{c}
\vartheta \\
\zeta \end{array}\right)\in \widetilde{\mathbb{T}} : \zeta^+\zeta = -\nu \mbox{ and } \zeta^+\vartheta + \vartheta^+ \zeta = 0 \right\}
\end{equation}
of $\widetilde{\mathbb{T}}$. The above statement follows from the observation that the orbits of the flow \eqref{act5} intersect $S_{\nu  }\subset \Gamma_0^{-1}(\nu)$ in the one point only.

The action $\Sigma_{(A, T)}: \widetilde{\mathbb{T}}\to \widetilde{\mathbb{T}}$ of $(A,T)\in \tilde{E}(3)$ on $\widetilde{\mathbb{T}}$ defined in \eqref{sigmaaction} preserves $\Gamma_0^{-1}(\nu)$ and commutes with the action \eqref{act5} of $(\mathbb{R}, +)$. So, it defines the symplectic action $\widetilde{\Sigma}_{(A,T)}: \Gamma_0^{-1}(\nu)/\mathbb{R} \to \Gamma_0^{-1}(\nu)/\mathbb{R}$ of $\tilde{E}(3)$ on the reduced symplectic manifold $(\Gamma_0^{-1}(\nu)/\mathbb{R}, \Omega_{\nu})$. The diffeomorphism $\Gamma_0^{-1}(\nu)/\mathbb{R}\cong S_{\nu}$ allows us to transport this action on the action $\Sigma_{\nu_{(A,T)}}: S_\nu \to S_\nu$ of $\tilde{E}(3)$ on $S_\nu$, which is given by 
\begin{equation}
S_\nu \ni \left(\begin{array}{c}
\vartheta\\
\zeta \end{array}\right) \mapsto \Sigma_{\nu_{(A,T)}}\left(\begin{array}{c}
\vartheta\\
\zeta \end{array}\right) = \left(\begin{array}{c}
A\left(\vartheta +\left(T -\frac{\zeta^+T\zeta}{\zeta^+\zeta}\right)\zeta\right)\\
A\zeta \end{array}\right) \in S_\nu . 
\end{equation}

Let us now define the diffeomorphism $\Psi:\mathbb{S}^3_{\sqrt{-\nu}}\times iH_0(2)\stackrel{\sim}{\rightarrow} S_\nu$ as follows 
\begin{equation}
\mathbb{S}^3_{\sqrt{-\nu}}\times iH_0(2) \ni (\zeta , P) \mapsto \Psi (\zeta, P) := \left(\begin{array}{c}
P\zeta \\
\zeta
\end{array}\right) \in S_\nu . 
\end{equation}
The diffeomorphism $\Psi^{-1} : S_\nu \stackrel{\sim}{\rightarrow} \mathbb{S}^3_{\sqrt{-\nu}}\times iH_0(2) $ inverse to $\Psi: \mathbb{S}^3_{\sqrt{-\nu}}\times iH_0(2)\stackrel{\sim}{\rightarrow} S_\nu$ is given by
\begin{equation}
S_\nu \ni (\vartheta , \zeta ) \mapsto \Psi^{-1} (\vartheta , \zeta ) = (\zeta , [\vartheta , i\sigma_2 \bar \vartheta ][\zeta , i\sigma_2 \bar\zeta]^{-1}) \in \mathbb{S}^3_{\sqrt{-\nu}}\times iH_0(2),
\end{equation}
where $[\vartheta , i\sigma_2 \bar \vartheta ] = \left[\left(\begin{array}{c}
\vartheta_1\\
\vartheta_2
\end{array}\right), i\sigma_2 \left(\begin{array}{c}
\bar\vartheta_1\\
\bar\vartheta_2
\end{array}\right)\right]\in \mbox{Mat}_{2\times 2}(\mathbb{C})$ and the $2\times 2$ matrix $[\zeta , i\sigma_2 \bar\zeta]$ is defined in the same way. The action $\Lambda_\nu (A,T):= \Psi^{-1}\circ \Sigma_{\nu_{(A,T)}}\circ \Psi : \mathbb{S}^3_{\sqrt{-\nu}}\times iH_0(2)\to \mathbb{S}^3_{\sqrt{-\nu}}\times iH_0(2)$ of $\tilde{E}(3)$ on $\mathbb{S}^3_{\sqrt{-\nu}}\times iH_0(2)$ assumes the form 
\begin{equation}\label{ac532}
\Lambda_\nu (A,T) (\zeta , P) = \left(A\zeta , A\left(P+\frac{1}{\zeta^+\zeta}[T, \zeta\zeta^+]\right)A^+\right). 
\end{equation}
The pullback $\Psi^*\Omega_\nu$ of the reduced symplectic form $\Omega_\nu$ on $\mathbb{S}^3_{\sqrt{-\nu}}\times iH_0(2)$ is given by 
\begin{equation}
\Psi^*\Omega_\nu = d(\zeta^+Pd\zeta - d\zeta^+P\zeta) = d\mbox{Tr}((d\zeta\zeta^+ - \zeta d\zeta^+)P)
\end{equation}
and it is invariant with respect to the action \eqref{ac532}. Let us recall here that $\zeta^+\zeta = -\nu = const, $ i.e. $\zeta \in \mathbb{S}^3_{\sqrt{-\nu}}$.

The function $J_0$ is invariant with respect to \eqref{act5} and  it defines a function on the reduced phase space $\tilde{J}_0 : \mathbb{S}^3_{\sqrt{-\nu}}\times iH_0(2)\cong S_\nu \to \mathbb{R}$. In coordinates $(\zeta, P)$ it assumes the form 
\begin{equation}\label{tildej0}
\tilde{J}_0 (\zeta , P) = -i\zeta^+P\zeta. 
\end{equation}
The Hamiltonian flow on $\mathbb{S}^3_{\sqrt{-\nu}}\times iH_0(2)\cong S_\nu$  generated by $\tilde{J}_0$ is the following 
\begin{equation}\label{action524}
\sigma_t (\zeta , P)  = (e^{it} \zeta , P). 
\end{equation}

The equations on $S_\nu$, see \eqref{smi}, written in the real coordinates $(q, \pi ) \in \mathbb{R}^4\times \mathbb{R}^4$, defined by \eqref{rcn}, assumes the following form
\begin{equation}\label{eq:529}
\begin{array}{l}
q_0\pi_0+q_1\pi_1 + q_2\pi_2 + q_3\pi_3=0, \\
q_0^2 +q_1^2 +q_2^2+q_3^2 = -2\nu ,
\end{array}
\end{equation}
i.e. they are equations on the tangent bundle $T\mathbb{S}^3_{\rho} \subset T\mathbb{R}^4 \cong \mathbb{R}^4\times \mathbb{R}^4$ of the $3$-dimensional sphere $\mathbb{S}^3_{\rho}$ of radius $\rho :=\sqrt{-2\nu}$. Further we will identify the cotangent bundle $T^*\mathbb{S}^3_{\rho}$ with the tangent one $T\mathbb{S}^3_{\rho}$ through the Euclidean metric tensor on $\mathbb{R}^4$.

Let us mention also that $\mathbb{S}^3_{\rho}$ is diffeomorphic $\mathbb{S}^3_{\rho} \cong SU(2)$ with the special unitary group $SU(2)$. This implies the triviality   of the tangent $T\mathbb{S}^3_{\rho} \cong \mathbb{S}^3_{\rho} \times \mathbb{R}^3$ and cotangent $T^*\mathbb{S}^3_{\rho} \cong \mathbb{S}^3_{\rho} \times (\mathbb{R}^3)^*$ bundles of $\mathbb{S}^3_{\rho}$.

Since $\mathbb{S}^3_{\rho}$ is a submanifold $\iota_\nu : \mathbb{S}^3_{\rho} \hookrightarrow\mathbb{R}^4$ of $\mathbb{R}^4$, so, the reduction of $\Omega$, defined in \eqref{2formr}, to $\iota^*_\nu T^*\mathbb{R}^4 := \{ (q; \varphi ) \in \mathbb{S}^3_{\rho} \times T^*\mathbb{R}^4 : \iota_\nu (x,y) = \pi^*(\varphi)\}$ defines the symplectic form $\tilde{\Omega}_\nu$ on the space $\iota^*_\nu T^*\mathbb{R}^4/\mathbb{R}$ of degeneracy leaves of $(\iota^*_\nu)^*\Omega$ isomorphic $(\iota^*_\nu T^*\mathbb{R}^4/\mathbb{R} ) \cong (\Gamma_0^{-1}\left(\nu\right)/\mathbb{R})$ to the space of orbits of the action \eqref{act5} of $(\mathbb{R}, +)$. From the above and the general statement (see Proposition \ref{prop:ap} in the Appendix) we have 
\begin{prop}
\begin{itemize}
\item[\textbf{(i)}] One has the following $(S_\nu , \Omega_\nu)$,  $(\mathbb{S}^3_{\sqrt{-\nu}} \times iH_0(2), \Psi^*\Omega_\nu)$ and $(T^*\mathbb{S}^3_{\rho}, d\gamma_\rho)$ symplectically isomorphic realizations of  the reduced $\tilde{E}(3)$-symplectic manifold  $\left(\Gamma_0^{-1}\left(\nu\right)/\mathbb{R}, \Omega_\nu \right)$, where $d\gamma_\rho$ is canonical symplectic form on $T^*\mathbb{S}^3_{\rho}$. 
\item[\textbf{(ii)}] The momentum map $\textbf{J}_e: \widetilde{\mathbb{T}}\to \textbf{e}(3)^*$ restricted to $S_{\nu  }$ defines surjective submersion $\textbf{J}_{e, \nu }: S_{\nu  } \to \mathbb{R}^3 \times \mathbb{S}^2_{-\nu}$ of $S_\nu$ on the submanifold  $\mathbb{R}^3 \times \mathbb{S}^2_{-\nu}$ of $ \textbf{e}(3)^*\cong \mathbb{R}^3 \times \mathbb{R}^3 $, where $\mathbb{S}^2_{-\nu} := \{ \vec{\Gamma}\in \dot{\mathbb{R}}^3: \vec{\Gamma}^2 = \nu^2 \}$ is a $2$-sphere in $\dot{\mathbb{R}}^3$ of radius $-\nu$. The fibres $\textbf{J}_{e, \nu }^{-1} (\textbf{J}_{e, \nu }(\vartheta, \zeta ))$ of $\textbf{J}_{e, \nu }: S_\nu \to \textbf{J}_{e, \nu }(S_\nu)\cong \mathbb{R}^3 \times \mathbb{S}^2_{-\nu}$ are the orbits of $U(1)$ which acts on $S_\nu $ by \eqref{action524} in a free and symplectic way. Therefore, $\textbf{J}_{e, \nu }: S_\nu \to \textbf{J}_{e, \nu }(S_\nu)\cong \mathbb{R}^3 \times \mathbb{S}^2_{-\nu}$ is a full and complete (see Proposition 6.6. in \cite{W}) symplectic realization of the Poisson submanifold $\mathbb{R}^3 \times \mathbb{S}^2_{-\nu} \subset \textbf{e}(3)^*$.
\end{itemize}
\end{prop}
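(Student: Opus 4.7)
The plan is to establish (i) by chaining three identifications, each inherited from structures already set up in the text, and then to deduce (ii) from equivariance together with a dimension and free-action argument.

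For the first statement of (i), I would begin by verifying that the embedded submanifold $S_\nu\subset \widetilde{\mathbb{T}}$ is a global cross-section for the $\mathbb{R}$-action \eqref{act5} restricted to $\Gamma_0^{-1}(\nu)$. Given $(\vartheta,\zeta)\in\Gamma_0^{-1}(\nu)$, the orbit $\sigma_t(\vartheta,\zeta)=(\vartheta+t\zeta,\zeta)$ meets the constraint $\zeta^+(\vartheta+t\zeta)+(\vartheta+t\zeta)^+\zeta=0$ at the unique value $t=-(\zeta^+\vartheta+\vartheta^+\zeta)/(2\zeta^+\zeta)$, which gives the smooth section used to define the diffeomorphism $\Gamma_0^{-1}(\nu)/\mathbb{R}\cong S_\nu$. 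The reduced symplectic form is then obtained from $\iota^*\Omega$ where $\iota\colon S_\nu\hookrightarrow\widetilde{\mathbb{T}}$ is the inclusion, which is what the notation $\Omega_\nu$ refers to on $S_\nu$. The isomorphism with $(\mathbb{S}^3_{\sqrt{-\nu}}\times iH_0(2),\Psi^*\Omega_\nu)$ is then immediate from the explicit formula for $\Psi^{-1}$ given in the text, provided one checks that $P:=[\vartheta,i\sigma_2\bar\vartheta][\zeta,i\sigma_2\bar\zeta]^{-1}$ indeed lies in $iH_0(2)$ when $(\vartheta,\zeta)\in S_\nu$; this reduces to the identity $[\zeta,i\sigma_2\bar\zeta]^*\,[\zeta,i\sigma_2\bar\zeta]=(\zeta^+\zeta)\sigma_0$ combined with the cross-section condition $\zeta^+\vartheta+\vartheta^+\zeta=0$.

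For the identification with $(T^*\mathbb{S}^3_\rho,d\gamma_\rho)$, I would pass to the real canonical coordinates $(q,\pi)$ from \eqref{rcn}, in which $\Omega$ takes the canonical form \eqref{2formr}. The defining equations \eqref{eq:529} for $S_\nu$ are precisely the conditions that $q\in\mathbb{S}^3_\rho$ and $\pi$ annihilates the outward normal to this sphere, i.e.\ that $(q,\pi)$ represents a point of $\iota_\nu^*T^*\mathbb{R}^4$ whose fibre component is orthogonal to the radial direction and therefore descends to a cotangent covector on $\mathbb{S}^3_\rho$. The $\mathbb{R}$-action \eqref{act5} is $q\mapsto q$, $\pi\mapsto\pi+tq$, whose orbits are precisely the leaves of degeneracy of $\iota_\nu^*(d\pi\wedge dq)$ on the pre-image $(q^2=\rho^2)$. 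Invoking Proposition \ref{prop:ap} of Appendix B applied to the submanifold $\mathbb{S}^3_\rho\subset\mathbb{R}^4$, the quotient carries the canonical symplectic form $d\gamma_\rho$ of $T^*\mathbb{S}^3_\rho$, which matches $\Omega_\nu$ under the diffeomorphism $S_\nu\cong T^*\mathbb{S}^3_\rho$ defined by the same coordinate description.

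For (ii), one first checks that the image lands in $\mathbb{R}^3\times\mathbb{S}^2_{-\nu}$: from \eqref{429} one has $\|\vec{\Gamma}\|^2=(\zeta^+\zeta)^2=\nu^2$ on $S_\nu$. Surjectivity and submersivity follow from the $\tilde{E}(3)$-equivariance of $\textbf{J}_e$ together with the transitivity of the coadjoint action \eqref{eaction} of $\tilde{E}(3)$ on $\mathbb{R}^3\times\mathbb{S}^2_{-\nu}$: since $\tilde{E}(3)$ acts on $S_\nu$ and $\textbf{J}_{e,\nu}$ intertwines the two actions, its rank is constant and at least equal to $\dim(\mathbb{R}^3\times\mathbb{S}^2_{-\nu})=5$. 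Next, the $U(1)$-action \eqref{action524}, $(\vartheta,\zeta)\mapsto(e^{it}\vartheta,e^{it}\zeta)$, leaves the bilinear expressions \eqref{428} and \eqref{429} invariant, hence sits inside the fibres of $\textbf{J}_{e,\nu}$; a dimension count ($\dim S_\nu=6$, image has dimension $5$) together with freeness of this action on $S_\nu$ identifies the fibres with the $U(1)$-orbits. Finally, $\textbf{J}_{e,\nu}$ is Poisson because $\textbf{J}_e\colon\widetilde{\mathbb{T}}\to\textbf{e}(3)^*$ is and $S_\nu$ is symplectic, and completeness is inherited from the compactness of $U(1)$ by Proposition~6.6 of~\cite{W}.

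The main technical obstacle is the identification of the reduced form with the canonical form $d\gamma_\rho$ in the third step of (i); this rests on a careful check that the orbits of the flow \eqref{act5} coincide with the null-leaves of $\Omega$ restricted to $\iota_\nu^*T^*\mathbb{R}^4$, and that the induced projection onto $T^*\mathbb{S}^3_\rho$ matches the coordinate description above. Once this is confirmed, everything else is a bookkeeping exercise with the definitions.
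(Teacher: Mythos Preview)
Your argument for part (i) is correct and matches the paper's approach: the paper simply says ``from the above and the general statement (see Proposition~\ref{prop:ap} in the Appendix)'', and you have spelled out exactly this --- the cross-section argument for $S_\nu$, the explicit diffeomorphism $\Psi$, and the invocation of Proposition~\ref{prop:ap} for the identification with $(T^*\mathbb{S}^3_\rho,d\gamma_\rho)$.

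There is, however, a genuine gap in your argument for (ii). You claim that surjectivity and submersivity of $\textbf{J}_{e,\nu}$ follow from ``the transitivity of the coadjoint action \eqref{eaction} of $\tilde{E}(3)$ on $\mathbb{R}^3\times\mathbb{S}^2_{-\nu}$''. This action is \emph{not} transitive: from \eqref{eaction} one sees that $\vec{J}\cdot\vec{\Gamma}$ is invariant, so the $\tilde{E}(3)$-orbits inside $\mathbb{R}^3\times\mathbb{S}^2_{-\nu}$ are the $4$-dimensional leaves $S_{\mu,\nu}$, not all of $\mathbb{R}^3\times\mathbb{S}^2_{-\nu}$. Consequently equivariance alone only gives rank $\geq 4$, not $\geq 5$, and your submersivity and fibre-identification conclusions do not follow from the stated premise.

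The fix is to use what has already been established for the unreduced map. By Proposition~\ref{prop:2} and Proposition~\ref{cor:44}, $\textbf{J}_e:\widetilde{\mathbb{T}}\to\mathbb{R}^3\times\dot{\mathbb{R}}^3$ is a surjective submersion whose fibres are exactly the $\tilde{A}(2)\cong U(1)\times\mathbb{R}$-orbits. Since $\Gamma_0$ is $\tilde{A}(2)$-invariant (it Poisson-commutes with $J_0$ by \eqref{bracket2}), these fibres lie inside $\Gamma_0^{-1}(\nu)$, so $\textbf{J}_e|_{\Gamma_0^{-1}(\nu)}$ is still a submersion, with $2$-dimensional fibres, onto the $5$-dimensional image $\mathbb{R}^3\times\mathbb{S}^2_{-\nu}$. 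Passing to the quotient by the $\mathbb{R}$-factor of $\tilde{A}(2)$ (i.e.\ to $S_\nu$) then leaves precisely the $U(1)$-orbits as fibres, and your dimension count becomes valid. With this correction, the rest of your argument for (ii) --- the Poisson property, freeness, and the appeal to Proposition~6.6 of \cite{W} for completeness --- goes through.
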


In order to obtain a canonical coordinate description of the symplectic realization $\textbf{J}_{e, \nu}: T^*\mathbb{S}^3_\rho\to \textbf{e}(3)^*$, we take the stereographic coordinates 
\begin{equation}\label{537}
\mathbb{S}^3_\rho \backslash \left\{\left(\begin{array}{c}
\rho\\
 \vec{0}
\end{array}\right)\right\} \ni \left(\begin{array}{c}
q_0\\
\vec{q}
\end{array}\right) \stackrel{\sim}{\rightarrow} \vec{y} := \frac{1}{\rho - q_0}\vec{q} \in \mathbb{R}^3  
\end{equation}
on the $3$-sphere $\mathbb{S}^3_\rho $ with removed point $(\rho, \vec{0})\in \mathbb{S}^3_\rho $. The inverse of \eqref{537} is given by 
\begin{equation}\label{538i}
\left(\begin{array}{c}
q_0\\
\vec{q}
\end{array}\right) = \frac{\rho}{1+\vec{y}^2} \left(\begin{array}{c}
\vec{y}^2-1\\
2\vec{y}
\end{array}\right). 
\end{equation}
One obtains the Moser canonical coordinates $(\vec{y}, \vec{p})\in \mathbb{R}^3\times \mathbb{R}^3$ on $T^*\left(\mathbb{S}^3_\rho \backslash \left\{\left(\begin{array}{c}
\rho\\
 \vec{0}
\end{array}\right)\right\}\right)$  from equality 
\begin{equation}\label{539}
\pi_0dq_0 + \vec{\pi}d\vec{q} = \vec{p}d\vec{y}, 
\end{equation}
e.g. see \cite{Pe, Mos}. Taking the differentials of \eqref{538i} and substituting them into \eqref{539}  one finds that 
\begin{equation}\label{eq:542}
\vec{p} = \rho \left(\frac{4\vec{y}}{(\vec{y}^2+1)^2} (\pi_0 - \vec{\pi}\cdot \vec{y}) + \frac{2}{\vec{y}^2 +1} \vec{\pi}\right). 
\end{equation}
From the above equality one obtains 
\begin{equation}
\pi_0 = \frac{1}{\rho} \vec{y}\cdot \vec{p} \quad \mbox{ and } \quad \vec{\pi}\cdot \vec{y} = \frac{1}{\rho}\frac{(1+\vec{y}^2)^2}{2(1-\vec{y}^2)} \vec{y}\cdot \vec{p}, 
\end{equation}
and, thus
\begin{equation}\label{541}
\left(\begin{array}{c}
\pi_0 \\
\vec{\pi} 
\end{array}\right) = \frac{1}{\rho} \left(\begin{array}{c}
\vec{y}\cdot \vec{p}\\
\frac{\vec{y}^2+1}{2}\vec{p} - (\vec{y}\cdot \vec{p})\vec{y}
\end{array}\right). 
\end{equation}
Hence, in the coordinates $(\vec{y}, \vec{p})$, defined by \eqref{537} and \eqref{541}, the symplectic form $\Omega_\nu$ and the momentum map $\textbf{J}_{e, \nu}: T^*\left(\mathbb{S}^3_\rho \backslash \left\{\left(\begin{array}{c}
\rho\\
 \vec{0}
\end{array}\right)\right\}\right)\to \textbf{e}(3)^*$ assume the forms 
\begin{equation}
\Omega_\nu = d\vec{p} \wedge d\vec{q}, 
\end{equation}
and 
\begin{equation}\label{eq:546}
\textbf{J}_{e, \nu } (\vec{y}, \vec{p}) = \left(\begin{array}{c}
\vec{J}(\vec{y}, \vec{p})\\
\vec{\Gamma} (\vec{y}, \vec{p})
\end{array}\right) ,
\end{equation}
where 
\begin{equation}\label{eq:547}
\begin{array}{l}
\vec{J} (\vec{y}, \vec{p})= \frac{1}{2}\left(\begin{array}{c}
\frac{\vec{y}^2-1}{2} p_3 - (\vec{y}\cdot \vec{p}) y_3 + y_1p_2-y_2p_1\\
\frac{\vec{y}^2-1}{2} p_1 - (\vec{y}\cdot \vec{p}) y_1 + y_2p_3-y_3p_2\\
\frac{\vec{y}^2-1}{2} p_2 - (\vec{y}\cdot \vec{p}) y_2 + y_3p_1-y_1p_3
\end{array}\right),\\
\vec{\Gamma} (\vec{y}, \vec{p})= \frac{2\rho^2}{(1+\vec{y}^2)^2}\left(\begin{array}{c}
(\vec{y}^2-1)y_1 + 2y_2y_3\\
-(\vec{y}^2-1)y_3 + 2y_1y_2\\
\frac{1}{4} (\vec{y}^2-1)^2 +y_1^2-y_2^2-y_3^2
\end{array}\right). 
\end{array}
\end{equation}

Summing up, we illustrate the symplectic realizations of $\textbf{e}(3)^*$ obtained in this and previous section as well as symplectic reductions used for their constructions in the following diagram 

\begin{equation}\label{538}
\begin{tikzcd}
                                                                                                    & {(\widetilde{\mathbb{T}}, \Omega)} \arrow[dashed]{ld}[swap]{\Gamma_0} \arrow[dashed]{rd}{J_0} \arrow{d}{\textbf{J}_e} &                                                                                                                            \\
{(T^*\mathbb{S}^3_\rho , \Omega_\nu )} \arrow[dashed]{rd}[swap]{\tilde{J}_0} \arrow{r}{\textbf{J}_{e, \nu}} & \textbf{e}(3)^*                                                                                         & {(\mathbb{R}^3\times \dot{\mathbb{R}}^3 , \Omega_\mu )} \arrow[dashed]{ld}{\tilde{\Gamma}_0}\arrow{l}[swap]{\textbf{J}_{e , \mu}} \\
                                                                                                    & {(M_{\mu , \nu }, \Omega_{\mu, \nu})} \arrow[hook]{u}[swap]{\textbf{J}_{e, \mu , \nu }}                                                   &                                                                                                                           
\end{tikzcd},
\end{equation}
where by dashed arrows we denoted not the maps but the corresponding to them reduction procedures.

The realization $\textbf{J}_{e, \mu , \nu }: M_{\mu, \nu } \to \textbf{e}(3)^*$ maps $M_{\mu, \nu } $ on a symplectic leaf $S_{\mu, \nu }$ of $\textbf{e}(3)^*$  defined in \eqref{def53}. So, it reduces  the gyrostat with a fixed point system to a $4$-dimensional symplectic manifold defined by the Casimir functions of Lie-Poisson space $\textbf{e}(3)^*$. Therefore, it is a natural and standard step in integration procedure of the system.

The others  Poisson maps $\textbf{J}_e: (\widetilde{\mathbb{T}}, \Omega) \to (\mathbb{R}^3\times \dot{\mathbb{R}}^3, \{\cdot, \cdot \}_{_{LP}})$, $\textbf{J}_{e, \mu}: (\mathbb{R}^3\times \dot{\mathbb{R}}^3, \Omega_\mu) \to (\textbf{J}_{e, \mu}(\mathbb{R}^3\times \dot{\mathbb{R}}^3), \{\cdot, \cdot \}_{_{LP}})$ and $\textbf{J}_{e, \nu}: (T^*\mathbb{S}^3_\rho, \Omega_\nu) \to (\mathbb{R}^3\times \mathbb{S}^2_{-\nu}, \{\cdot, \cdot \}_{_{LP}})$ from the diagram \eqref{538} are full and complete symplectic realizations of the corresponding Poisson submanifolds of $\textbf{e}(3)^*$. We believe that they are unknown and lead to large family of the new integrable Hamiltonian systems, defined by  the integrable cases of gyrostat as well as heavy top systems. We will discuss some of them in the next section.

\section{Integrable Hamiltonian systems on the symplectic realizations of $\textbf{e}(3)^*$}\label{sec:5}

Let us recall that on the Lie-Poisson space $\textbf{e}(3)^*$ there are two functionally independent Casimir functions $K_1, K_2 \in C^\infty(\textbf{e}(3)^*, \mathbb{R})$, see \eqref{casimirs}, what implies that the symplectic leaves of $\textbf{e}(3)^*$ can have dimension not more than $4$. So, for the integrability of a Hamiltonian system on $\textbf{e}(3)^*$, except its Hamiltonian $H\in C^\infty (\textbf{e}(3)^*, \mathbb{R})$, one needs only one additional integral of motion $K\in C^\infty(\textbf{e}(3)^*, \mathbb{R})$.  Therefore, having done $K$ we obtain the systems of functionally independent integrals of motion in involution: $(H\circ \textbf{J}_{e, \mu}, K_2 \circ \textbf{J}_{e, \mu}, K\circ \textbf{J}_{e, \mu} )$, $(H\circ \textbf{J}_{e, \nu}, K_1\circ \textbf{J}_{e, \nu}, K\circ \textbf{J}_{e, \nu})$ and $(H\circ \textbf{J}_{e}, K_1 \circ \textbf{J}_{e}, K_2\circ \textbf{J}_{e}, K\circ \textbf{J}_{e})$ on the symplectic manifolds $\mathbb{R}^3\times \dot{\mathbb{R}}^3$, $T^*\mathbb{S}^3_\rho$ and $\widetilde{\mathbb{T}}$, respectively. The above makes integrable  the systems defined by the Hamiltonians $H\circ \textbf{J}_{e, \mu}\in C^\infty (\mathbb{R}^3\times \dot{\mathbb{R}}^3, \mathbb{R})$, $H\circ \textbf{J}_{e,\nu}\in C^\infty (T^*\mathbb{S}^3_\rho, \mathbb{R})$ and $H\circ \textbf{J}_{e}\in C^\infty (\widetilde{\mathbb{T}}, \mathbb{R})$ being the lifting of the Hamiltonian $H_{\lambda} \in C^\infty (\textbf{e}(3)^*, \mathbb{R})$ defined in \eqref{Hht}. These Hamiltonians are examples of the ones which describe so-called collective motion considered in \cite{G1,G2,MF}.

In this section we discuss some of these systems in details indicating their physical applications too.

\subsection{Integrable Hamiltonian systems on $(\widetilde{\mathbb{T}}, \Omega)$}

The twistor symplectic form \eqref{2form} on $\mathbb{T}$ written in spinor coordinates  $a,b \in \mathbb{C}^2 $ related by 
\begin{equation}\label{eq:61}
\left(\begin{array}{c}
\vartheta \\
\zeta 
\end{array}\right)= \frac{1}{\sqrt{2}}\left(\begin{array}{c}
a+i\bar b\\
\bar b + i a
\end{array}\right) 
\end{equation}
to the spinor coordinates $\vartheta , \zeta \in \mathbb{C}^2$ defined in \eqref{eq:41}, assumes the canonical form 
\begin{equation}
\Omega = -i (da^+\wedge da + db^+\wedge db ), 
\end{equation}
where $\bar b =\left(\begin{array}{c}
\bar b_1\\
\bar b_2
\end{array}\right)$ and $b^T= (b_1, b_2)$ are complex conjugation and transposition of $b= \left(\begin{array}{c}
b_1\\
b_2
\end{array}\right)\in \mathbb{C}^2$. Substituting \eqref{eq:61} into \eqref{428} and \eqref{429} one easily finds that the vector coordinates $(\vec{J}, \vec{\Gamma})$ of the momentum map $\textbf{J}_e: \widetilde{\mathbb{T}}\to \textbf{e}(3)^*\cong \mathbb{R}^3\times \mathbb{R}^3$ depend on $(a,b)$ by 
\begin{align}\label{eq:63}
\vec{J} & = \frac{1}{2} (a^+\vec{\sigma} a - b^T \vec{\sigma} \bar b),\\
\label{eq:64}
\vec{\Gamma} & = \frac{1}{2} (b^T \vec{\sigma } \bar b + a^+ \vec{\sigma } a - i a^+ \vec{\sigma} \bar b + i b^T \vec{\sigma } a).
\end{align}
Substituting $\vec{J}$ and $\vec{\Gamma }$ given by \eqref{eq:63} and \eqref{eq:64} into \eqref{Hht} one finds that  
\begin{multline}\label{eq:65}
H_{\lambda}\circ \textbf{J}_e= \frac{1}{8} \left(\frac{1}{I_1}-\frac{1}{I_2}\right) [a_1^2\bar a_2^2 + \bar a_1^2 a_2^2 +b_1^2\bar b_2^2 + \bar b_1^2 b_2^2 - 2\bar a_1 a_2 b_1 \bar b_2 - 2 a_1 \bar a_2 \bar b_1 b_2 ] +\\
\frac{1}{8} \left(\frac{1}{I_1}+\frac{1}{I_2}\right) [2|a_1|^2 |a_2|^2 + 2 |b_1|^2|b_2|^2 - 2 \bar a_1 a_2 \bar b_1 b_2 - 2 a_1 \bar a_2b_1 \bar b_2]+\\
\frac{1}{8I_3} [|a_1|^2 -|a_2|^2 -|b_1|^2 + |b_2|^2]^2 + \frac{\lambda_1}{2I_1}( \bar a_1 a_2 + a_1 \bar a_2 -\bar b_1 b_2 - b_1 \bar b_2 ) + \\
\frac{i\lambda_2}{2I_2}( \bar a_1 a_2 - a_1 \bar a_2 +\bar b_1 b_2 - b_1 \bar b_2 ) + \frac{\lambda_3}{2I_3}( |a_1|^2 -|a_2|^2 -|b_1|^2 + |b_2|^2 ) + \frac{\lambda_1^2}{2I_1} + \frac{\lambda_2^2}{2I_2}+ \frac{\lambda_3^2}{2I_3}+ (U\circ\vec{\Gamma})(a, b),
\end{multline}
where $\vec{\Gamma } : \widetilde{\mathbb{T}}\to \dot{\mathbb{R}}^3$ is given by \eqref{eq:64}. We recall that Hamiltonian $H_{\lambda}$ is defined in \eqref{Hht}. The integrals of motion $K_1\circ \textbf{J}_e$ and $K_2 \circ \textbf{J}_e$ one finds from the dependence 
\begin{equation}\label{eq:66}
\left(\begin{array}{c}
K_1 \\
K_2 
\end{array}\right) = \left(\begin{array}{c}
\vec{J}\cdot \vec{\Gamma }  \\
\vec{\Gamma }^2
\end{array}\right)= \left(\begin{array}{c}
J_0 \Gamma_0 \\
\Gamma_0^2
\end{array}\right),
\end{equation}
for which see \eqref{casfun} and \eqref{casfun2}, where 
\begin{align}\label{eq:67}
J_0 & = -\frac{1}{2}(a^+a - b^+b)= \frac{1}{2}(|b_1|^2 +|b_2|^2 - |a_1|^2 -|a_2|^2),\\
\nonumber
\Gamma_0 &= -\frac{1}{2} (a^+a + b^+b +i (b^Ta - a^+ \bar b))= \\
\label{eq:68}
 & - \frac{1}{2} (|a_1|^2 + |a_2|^2 + |b_1|^2 + |b_2|^2 + i (b_1a_1+b_2a_2 -\bar b_1\bar a_1 - \bar b_2 \bar a_2)). 
\end{align}
One obtains equalities \eqref{eq:67} and \eqref{eq:68} from \eqref{428j} and \eqref{429j} by using \eqref{eq:61}.

Taking into account dependence \eqref{eq:66}, we see that instead of $K_1\circ \textbf{J}_e$ and $K_2 \circ \textbf{J}_e$ one can take the functions $J_0$ and $\Gamma_0$ given by \eqref{eq:67} and \eqref{eq:68} as the integrals of motion for the Hamiltonian $H\circ \textbf{J}_e$, where $H$ can be arbitrary function $H(\vec{J}, \vec{\Gamma})$ of the vector variables $(\vec{J}, \vec{\Gamma })\in \mathbb{R}^3 \times \dot{\mathbb{R}}^3$.  As a particular case of such Hamiltonian one can  take the one given by \eqref{eq:65}. The fourth integral of motion $K\circ \textbf{J}_e$ for \eqref{eq:65} depends on the concrete choice of the parameters $I_1, I_2, I_3, \lambda_1, \lambda_2, \lambda_3 \in \mathbb{R}$ and the potential function $U\circ\vec{\Gamma }$. The list of the integrals $K$ for many integrable cases one can find in \cite{BF}. Here however, we mention only few of them:
\begin{itemize}
\item[(a)] for Kovalevskaya case distinguished by conditions $I_1=I_2 =2I_3$, $\lambda =0$ and $U(\vec{\Gamma}) = \chi_1 \Gamma_1 + \chi_2 \Gamma_2$ the fourth integral of motion is given by 
\begin{multline}\label{calka1}
K\circ \textbf{J}_e= \frac{1}{4I^2} \left[|a_1|^2|a_2|^2 + |b_1|^2|b_2|^2 - \bar a_1 a_2 \bar b_1 b_2 - a_1 \bar a_2 b_1 \bar b_2\right]^2 +\\
(\chi_1^2 + \chi_2^2 ) \left[ |a_1|^2 |b_2|^2 + |a_2|^2 |b_1|^2 + |a_1|^2 |a_2|^2 + |b_1|^2|b_2|^2 - a_1a_2b_1b_2 - \bar a_1 \bar a_2\bar b_1\bar b_2 + \right.\\
\left.a_1\bar a_2\bar b_1 b_2 + \bar a_1 a_2 b_1 \bar b_2 +i (a_1b_1 - \bar a_1 \bar b_1)(|a_2|^2 + |b_2|^2) + i (a_2 b_2 - \bar a_2 \bar b_2)(|a_1|^2 + |b_1|^2)\right] +\\
\frac{\chi_2}{2I}\left[(a_1b_2 - \bar a_2 \bar b_1 - i(\bar b_1 b_2 + a_1 \bar a_2))(\bar a_1^2 a_2^2 + b_1^2 \bar b_2^2 - \bar a_1 a_2 b_1 \bar b_2) +\right.\\
 \left.(\bar a_1 \bar b_2 - a_2 b_1 +i(b_1 \bar b_2 + \bar a_1 a_2))(a_1^2 \bar a_2^2 + \bar b_1^2 b_2^2 - a_1 \bar a_2 \bar b_1 b_2 )\right] - \\
\frac{\chi_1}{2I} \left[ (b_1 \bar b_2 + \bar a_1 a_2 + i(a_2 b_1 - \bar a_1 \bar b_2))(a_1^2\bar a_2^2 + \bar b_1^2 b_2^2 - a_1 \bar a_2 \bar b_1 b_2) +\right.\\
 \left.(\bar b_1 b_2 + a_1 \bar a_2 + i(a_1b_2-\bar a_2 \bar b_1))(\bar a_1^2 a_2^2 + b_1^2 \bar b_2^2 - \bar a_2 a_2 b_1 \bar b_2)\right] ,
\end{multline}
\item[(b)] for Zhukovski case distinguished by condition $U(\vec{\Gamma}) = 0$ the fourth integral of motion is given by 
\begin{multline}\label{calka2}
K \circ \textbf{J}_e= \vec{J}^2 =  \frac{1}{4}\left(|a_1|^2 -|a_2|^2 -|b_1|^2 + |b_2|^2\right)^2 + |a_1|^2 |a_2|^2 +  |b_1|^2|b_2|^2 -  \bar a_1 a_2 \bar b_1 b_2 -  a_1 \bar a_2b_1 \bar b_2,
\end{multline}
\item[(c)] for Clebsh case distinguished by conditions $\vec{\lambda} =0$ and $U(\vec{\Gamma }) = \frac{\epsilon}{2} (I_1 \Gamma_1^2 + I_2 \Gamma_2^2 + I_3 \Gamma_3^2)$ the fourth integral of motion is given by 
\begin{multline}\label{calka3}
K \circ \textbf{J}_e = \frac{1}{8}\left[\left(|a_1|^2 -|a_2|^2 -|b_1|^2 + |b_2|^2\right)^2 + 4|a_1|^2 |a_2|^2 + 4|b_1|^2|b_2|^2 -  4\bar a_1 a_2 \bar b_1 b_2 -  4a_1 \bar a_2b_1 \bar b_2 -\right. \\
\epsilon\left(I_2I_3(b_1 \bar b_2 +\bar b_1 b_2 + a_1 \bar a_2 + \bar a_1 a_2 + i (a_1b_2 + a_2 b_1 - \bar a_1 \bar b_2 - \bar a_2 \bar b_1))^2\right. +\\
 I_3I_1 (a_1 b_2 +\bar a_1 \bar b_2 -a_2 b_1 - \bar a_2 \bar b_1 + i (b_1 \bar b_2 - \bar b_1 b_2 + \bar a_1 a_2 - a_1 \bar a_2))^2 +\\
\left.I_1I_2 (|a_1|^2 - |a_2|^2 + |b_1|^2 - |b_2|^2 + i(a_1b_1 - a_2b_2 - \bar a_1 \bar b_1 + \bar a_2 \bar b_2))^2\right)\Big]. 
\end{multline}
\end{itemize}

If one interprets $(a^T, b^T) = (a_1, a_2, b_1, b_2)$ as the complex amplitudes of four waves which interact through a non-linear medium, then it is natural to treat \eqref{eq:65} as a Hamiltonian describing this interaction, i.e. as a four-wave mixing Hamiltonian. Let us mention that the non-linear four-wave mixing processes arrise in optics, mechanics, solid body physics  and information theory, see \cite{IL,OW4, Mil}.  Usually, the considered models of four waves systems are solved by numerical or approximative methods.  So, the ones which are strictly integrable are  especially appreciable from physical point of view. In this paper we will not concern the applications of the obtained integrable four-wave mixing models. However, we plan to continue our interests to this subject in subsequent paper.

The Hamiltonian \eqref{eq:65} written in real canonical coordinates $(q,\pi)\in \mathbb{R}^4 \times \mathbb{R}^4$ defined in \eqref{rcn} takes the following form 
\begin{multline}\label{eq:613}
H_{\lambda}\circ \textbf{J}_{e} = \frac{1}{8}\pi^TG \pi + \frac{\lambda_1}{2I_1}(q_0\pi_3+q_1\pi_2-q_2\pi_1 - q_3\pi_0 + \lambda_1) + \\
\frac{\lambda_2}{2I_2}(q_0\pi_1- q_1\pi_0+q_2\pi_3-q_3\pi_4+\lambda_2)+ \\
\frac{\lambda_3}{2I_3}(q_0\pi_2-q_1\pi_3-q_2\pi_0+q_3\pi_1 + \lambda_3) +  (U\circ \vec{\Gamma})(q), 
\end{multline}
where 
\begin{equation}
G=\left(\begin{array}{cccc}
\frac{q_3^2}{I_1} + \frac{q_1^2}{I_2}+\frac{q_2^2}{I_3} & \frac{q_2q_3}{I_1} - \frac{q_0q_1}{I_2}-\frac{q_2q_3}{I_3} & -\frac{q_1q_3}{I_1} + \frac{q_1q_3}{I_2}-\frac{q_0q_2}{I_3} & -\frac{q_0q_3}{I_1} - \frac{q_1q_2}{I_2}+\frac{q_1q_2}{I_3}\\
\frac{q_2q_3}{I_1} - \frac{q_0q_1}{I_2}-\frac{q_2q_3}{I_3} & \frac{q_2^2}{I_1} + \frac{q_0^2}{I_2}+\frac{q_3^2}{I_3} & -\frac{q_1q_2}{I_1} - \frac{q_0q_3}{I_2}+\frac{q_0q_3}{I_3}& -\frac{q_0q_2}{I_1} + \frac{q_0q_2}{I_2}-\frac{q_1q_3}{I_3}\\
-\frac{q_1q_3}{I_1} + \frac{q_1q_3}{I_2}-\frac{q_0q_2}{I_3} & -\frac{q_1q_2}{I_1} - \frac{q_0q_3}{I_2}+\frac{q_0q_3}{I_3} & \frac{q_1^2}{I_1} + \frac{q_3^2}{I_2}+\frac{q_0^2}{I_3} & \frac{q_0q_1}{I_1} - \frac{q_2q_3}{I_2}-\frac{q_0q_1}{I_3}\\
-\frac{q_0q_3}{I_1} - \frac{q_1q_2}{I_2}+\frac{q_1q_2}{I_3}&  -\frac{q_0q_2}{I_1} + \frac{q_0q_2}{I_2}-\frac{q_1q_3}{I_3}& \frac{q_0q_1}{I_1} - \frac{q_2q_3}{I_2}-\frac{q_0q_1}{I_3} & \frac{q_0^2}{I_1} + \frac{q_2^2}{I_2}+\frac{q_1^2}{I_3}
\end{array}\right).
\end{equation}
Let us mention that the real canonical coordinates $(q, \pi)$ and the complex canonical coordinates $(a,b)\in \mathbb{C}^2\times \mathbb{C}^2$ are related by 
\begin{align}
\left(\begin{array}{c}
q_0\\
q_1
\end{array}\right) = \frac{1}{2}(b +\bar b + i (a-\bar a)), \quad & \left(\begin{array}{c}
q_2\\
q_3
\end{array}\right) = \frac{1}{2}(a+ \bar a  + i (b-\bar b)), \\
\left(\begin{array}{c}
\pi_0\\
\pi_1
\end{array}\right) = \frac{1}{2}(a+ \bar a  - i (b-\bar b)), \quad & \left(\begin{array}{c}
\pi_2\\
\pi_3
\end{array}\right) = \frac{1}{2}(b+ \bar b  - i (a-\bar a)).
\end{align}

Assuming in \eqref{eq:613} $\vec{\lambda}=0$ one obtains the Hamiltonian on $\mathbb{R}^4\times \mathbb{R}^4\cong T^*\mathbb{R}^4$ with the kinematic energy defined by $G$ and the potential one by $U\circ \vec{\Gamma}$, where $\vec{\Gamma}$ is given in \eqref{ngamma}.

\subsection{Integrable Hamiltonian systems on $(\mathbb{R}^3\times \dot{\mathbb{R}}^3 , \Omega_\mu )$.}

Lifting the Hamiltonian \eqref{Hht} on $\mathbb{R}^3\times \dot{\mathbb{R}}^3 $ by the symplectic realization $\textbf{J}_{e, \mu} : \mathbb{R}^3\times \dot{\mathbb{R}}^3 \to \textbf{e}(3)^*$ given in \eqref{jemi}, we obtain the Hamiltonian 
\begin{equation}\label{eq:616}
H_{\lambda}\circ \textbf{J}_{e, \mu } = \frac{1}{2}\left(\vec{y}\times \vec{p} +\frac{\mu}{||\vec{y}||} \vec{y}+\vec{\lambda}\right)^T I^{-1} \left(\vec{y}\times \vec{p} +\frac{\mu}{||\vec{y}||} \vec{y}+\vec{\lambda}\right) + U(-\vec{y}). 
\end{equation}
on the symplectic manifold $(\mathbb{R}^3\times \dot{\mathbb{R}}^3 , \Omega_\mu )$.

Since $\textbf{J}_{e, \mu} : \mathbb{R}^3\times \dot{\mathbb{R}}^3 \to \mathbb{R}^3\times \dot{\mathbb{R}}^3 \subset\textbf{e}(3)^*$ is a Poisson map, the function
\begin{equation}
K_2\circ \textbf{J}_{e, \mu} (\vec{p}, \vec{y})= \vec{y}^2, 
\end{equation}
is the second integral of motion for the Hamiltonian $H\circ \textbf{J}_{e, \mu}$, where $H\in C^\infty (\textbf{e}(3)^*, \mathbb{R})$ is arbitrary Hamiltonian function $H=H(\vec{J}, \vec{\Gamma })$. Thus,  $K_2\circ \textbf{J}_{e, \mu} (\vec{p}, \vec{y})$ is integral of motion also for the Hamiltonian system defined by $H_{\lambda}\circ \textbf{J}_{e,\mu}$.

The third integrals of motion for the integrable cases mentioned in the previous subsection take the form:
\begin{itemize}
\item[(a)] for Kovalevskaya case one has
\begin{multline}
K\circ \textbf{J}_{e, \mu }= \left(\frac{1}{2I}\left((y_2p_3-y_3p_2)^2 - (y_2p_1-y_1p_3)^2 + \frac{\mu^2}{\vec{y}^2}(y_1^2-y_2^2) +\right. \right.\\
\left.\left.\frac{2\mu}{||\vec{y}||} (2y_1y_2p_3 -y_1y_3p_2-y_2y_3p_1)\right) +\chi_1y_1 - \chi_2y_2\right)^2 + \\
\left(\frac{1}{I}\left(u_2p_2-y_3p_2+\frac{\mu}{||\vec{y}||}y_1\right)\left(y_3p_1-y_1p_3 +\frac{\mu}{||\vec{y}||}y_2\right)+\chi_1 y_2 +\chi_2 y_1\right)^2,
\end{multline}
\item[(b)] for Zhukovski case one has
\begin{equation}
K\circ \textbf{J}_{e, \mu } = (\vec{y}\times \vec{p})^2 + \mu^2,
\end{equation}
\item[(c)] for Clebsh case one has
\begin{equation}
K\circ \textbf{K}_{e, \mu} = \frac{1}{2}((\vec{y}\times \vec{p})^2 + \mu^2) - \frac{\epsilon}{2}(I_2I_3y_1^2 +I_3I_1 y_2^2 +I_1I_2y_3^2).
\end{equation}  
\end{itemize}

The Hamiltonian \eqref{eq:616} except of the potential $U(-\vec{y})$ has the part which is diagonal quadratic form of the vector 
\begin{equation}
\vec{M} := \vec{y}\times p + \frac{\mu}{||\vec{y}||} \vec{y} + \vec{\lambda}
\end{equation}
which one can interpret as the angular momentum of the system corrected by the gyrostat inner angular momentum $\vec{\lambda}$ and the angular momentum $\frac{\mu}{||\vec{y}||} \vec{y}$ of Dirac monopole. This explains partly the physical sense of the considered Hamiltonian system.

The Hamilton equations given by \eqref{eq:616} are the following
\begin{align}
\nonumber
\frac{d}{dt} \vec{p}  = & (I^{-1}(\vec{y}\times \vec{p}))\times \vec{p} +\frac{\mu}{||\vec{y}||}\left((I^{-1}\vec{y})\times \vec{p} - I^{-1}\left(\vec{y}\times \vec{p} +\frac{\mu}{||\vec{y}||}\vec{y} + \vec{\lambda}\right)\right) + \\
 & \frac{\mu}{\||\vec{y}||^3} \left((\vec{y}\times \vec{p} +\frac{\mu}{||\vec{y}||}\vec{y} + \vec{\lambda})I^{-1}\vec{y}\right)\vec{y}, \\
\frac{d}{dt} \vec{y}  = &  \left(I^{-1}(\vec{y}\times \vec{p} +\frac{\mu}{||\vec{y}||}\vec{y} + \vec{\lambda})\right)\times \vec{y}.
\end{align}

\subsection{Integrable Hamiltonian systems on $(T^*\mathbb{S}^3_{\rho}, \Omega_\nu )$.}

Here we discuss the integrable Hamiltonian system on the symplectic manifold $(T^*\mathbb{S}^3_{\rho}\backslash\left\{\left(\begin{array}{c}
\rho \\
\vec{0}
\end{array}\right)\right\}, \Omega_\nu )$, where $\rho = \sqrt{-2\nu}$, defined by the Hamiltonian $H_{\lambda}\circ \textbf{J}_{e, \nu}$, which written in the Moser coordinates $(\vec{y}, \vec{p})$, defined in \eqref{537} and \eqref{eq:542}, assumes the following form 
\begin{multline}\label{eq:631}
H_{\lambda}\circ \textbf{J}_{e, \nu} = \frac{1}{2} \left(S_{132}\left(\frac{\vec{y}^2 -1}{4}\vec{p} - \frac{1}{2}(\vec{y}\cdot \vec{p}) +\frac{1}{2}\vec{y}\times \vec{p}\right)\right)^TI^{-1} \left(S_{132}\left(\frac{\vec{y}^2 -1}{4}\vec{p} - \frac{1}{2}(\vec{y}\cdot \vec{p}) +\frac{1}{2}\vec{y}\times \vec{p}\right)\right) \\
+(U\circ \vec{\Gamma}) (\vec{y}),
\end{multline}
where $S_{132}$ is the permutation matrix 
\begin{equation}
S_{132} = \left(\begin{array}{ccc}
0 & 0 & 1\\
1 & 0 & 0\\
0 & 1 & 0
\end{array}\right)
\end{equation}
and  $\textbf{J}_{e, \nu }: T^*\mathbb{S}^3_\rho \to \textbf{e}(3)^*$ is given in \eqref{eq:546} and \eqref{eq:547}. Substituting $(q,\pi)\in \mathbb{R}^4\times \mathbb{R}^4 \cong T^*\mathbb{R}^4$ given by \eqref{538i}  and \eqref{541} into \eqref{eq:469} we obtain the second integral of motion 
\begin{equation}
\tilde{J}_0 (\vec{y}, \vec{p}) = -\frac{1}{2} \left(\frac{\vec{y}^2-1}{2}p_2 + (\vec{y}\cdot \vec{p})y_2 - y_3p_1 +y_1p_3\right)
\end{equation}
for the Hamiltonian \eqref{eq:631}, corresponding to the Casimir function $K_1 \in C^\infty(\textbf{e}(3)^*, \mathbb{R})$, see \eqref{casimirs}.

As in two previous subsections, below we present the third integrals of motion for the lifting of Kovalevska, Zhukovski and Clebsh Hamiltonians: 
\begin{itemize}
\item[(a)] for Kovalevska case the third integral of motion is given by 
\begin{multline}
K\circ \textbf{J}_{e, \nu} = \left(\frac{1}{8I}\left[\left(\frac{\vec{y}^2-1}{2}p_3 -(\vec{y}\cdot \vec{p})y_3 +y_1p_2-y_2p_1\right)^2-\left(\frac{\vec{y}^2-1}{2}p_1 -(\vec{y}\cdot \vec{p})y_1 +y_2p_3-y_3p_2\right)^2\right]\right.+ \\
\left.\frac{2\rho^2}{(1+\vec{y}^2)^2} (\chi_2(2y_1y_2 -(\vec{y}^2-1)y_3) -\chi_1 (2y_2y_3 +(\vec{y}^2-1)y_1))\right)^2 +\\
\left(\frac{1}{4I}\left(\frac{\vec{y}^2-1}{2}p_3 -(\vec{y}\cdot \vec{p})y_3 +y_1p_2-y_2p_1\right)\left(\frac{\vec{y}^2-1}{2}p_1 -(\vec{y}\cdot \vec{p})y_1 +y_2p_3-y_3p_2\right) - \right.\\
\left.\frac{2\rho^2}{(1+\vec{y}^2)^2} (\chi_1 (2y_1y_2 -(\vec{y}^2-1)y_3) + \chi_2 (2y_2y_3 +(\vec{y}^2-1)y_1))\right)^2,
\end{multline}
\item[(b)] for Zhukovski case the third integral of motion is given by 
\begin{equation}
K\circ \textbf{J}_{e, \nu} = \frac{1}{4}\left(\frac{(\vec{y}^2-1)^2}{4} \vec{p}^2 +(\vec{y}\cdot \vec{p})^2 + (\vec{y}\times \vec{p})^2\right),
\end{equation}
\item[(c)] for Clebsh case third integral of motion is given by
\begin{multline}
K\circ \textbf{J}_{e, \nu} = \frac{1}{8}\left(\frac{(\vec{y}^2-1)^2}{4} \vec{p}^2 +(\vec{y}\cdot \vec{p})^2 + (\vec{y}\times \vec{p})^2\right)-\\
\frac{2\epsilon \rho^4}{(1+\vec{y}^2)^4}\left(I_2I_3 ((\vec{y}^2-1)y_1 + 2 y_2y_3)^2+I_3I_1(2y_1y_2-(\vec{y}^2-1)y_3)^2+ \right.\\
\left.I_1I_2 (\frac{1}{4}(\vec{y}-1)^2+y_1^2-y_2^2-y_3^2)^2\right).
\end{multline}
\end{itemize}

The intricate form of the Hamiltonian \eqref{eq:631} does not permit to find a reasonable physical interpretation of the dynamics of obtained model. However, such possibility is not excluded.

At the end let us mention that the symplectic manifolds obtained here and the integrable Hamiltonian systems on them arise naturally from the Hamiltonian systems related to the rigid body theory. Thus, we expect that these systems may have many various aplications, e.g. in nonlinear optics and mechanics. 

\appendix
\section{}\label{B}
Below we give a few definitions of notions, see e.g. \cite{We,W}, used in the paper.  
\begin{defn}\label{B1}
Poisson manifolds $P_1$ and $P_2$ form a symplectic dual pair if there exists symplectic manifold $M$ and Poisson maps 
\begin{equation}
\begin{tikzcd}
    & M \arrow{ld}[swap]{J_1} \arrow{rd}{J_2} &     \\
P_1 &                                       & P_2
\end{tikzcd}
\end{equation}
with symplectically orthogonal fibres. The above implies 
\begin{equation}
\{J_1^*f_1, J_2^*f_2\} =0
\end{equation}
for all $f_1 \in C^\infty (P_1, \mathbb{R})$ and $f_2 \in C^\infty (P_2, \mathbb{R})$. 
\end{defn}

\begin{defn}\label{B2}
Poisson map $\phi: P_1 \to P_2$ is complete if for any complete Hamiltonian vector field $X_H$ defined by $H\in C^\infty(P_2, \mathbb{R})$ follows the completeness of the Hamiltonian vector field $X_{H\circ \phi}$.  
\end{defn}

\begin{defn}\label{B3}
A surjective submersive symplectic realization $\Phi: M \to P$  of a Poisson space $P$ is called a full symplectic realization of $P$.
\end{defn}

\section{}\label{A}
Let us consider symplectic manifold $(T^*Q, d\Theta_Q)$, where $Q$ is a smooth manifold and $\Theta_Q$ is Liouville $1$-form on the cotangent bundle $T^*Q$. Suppose that $R$ is a submanifold of $Q$. By $\iota : R\to Q$ we denote the inclusion map. By $\pi^*_Q: T^*Q \to Q$ and $\pi^*_R: T^*R\to R$ we denote the bundle projections on the base. The Liouville form on $T^*R$ we denote by $\Theta_R$. The manifold 
$$ T^*Q|_{R} := \{(r, \varphi)\in R\times T^*Q: \iota (r) = \pi^*_Q(\varphi)\}$$
can be considered as a vector bundle $\pi^*_Q: T^*Q|_{R}\to \iota(R)\cong R$ over $R$. The embedding $\iota:R\to Q$ defines (tangent) morphism $T\iota: TR\to TQ$ of the tangent bundles and thus, by the duality, the submersive and surjective morphism $T^*\iota:T^*Q|_{R}\to T^*R$ of the bundle $T^*Q|_{R}$  on the bundle $T^*R$ cotangent to $R$. One defines $T^*\iota:T^*Q|_{R}\to T^*R$ as follows
\begin{equation}
T^*\iota:T^*Q|_{R}\ni (r, \varphi) \mapsto T^*\iota (r, \varphi) = \varphi \circ T(r) \in  T^*R. 
\end{equation}
We illustrate the above objects and the corresponding morphisms in the following diagram 
\begin{equation}\label{diagap}
\begin{tikzcd}
T^*Q|_R \arrow{r}{T^*\iota} \arrow{d}{\pi^*_Q} & T^*R \arrow{d}{\pi^*_R} \\
\iota (R)                                          & R \arrow{l}[swap]{\iota}    
\end{tikzcd}.
\end{equation}
\begin{prop}\label{prop:ap}
(i) Let $I_R: T^*Q|_R \hookrightarrow T^*Q$ be the inclusion morphism of the vector bundles. Then we have
\begin{equation}\label{a3}
(I_R)^*\Theta_Q = (T^*\iota)^*\Theta_R.
\end{equation} 
(ii) The reduced symplectic manifold $(T^*Q|_R/\sim , d \Theta_Q)$ is isomorphic with $(T^*R, d\Theta_R)$, where $T^*Q|_R/\sim$ is the quotient of $T^*Q|_R$ by the degeneracy leaves of $(I_R)^*d\Theta_Q$ and $d\Theta_Q$ is the reduction of $d\Theta_Q$ to $T^*Q|_R/\sim$. 
\end{prop}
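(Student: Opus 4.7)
The approach is to verify part (i) by unwinding the definition of the Liouville forms and exploiting the compatibility between the two bundle maps in the diagram \eqref{diagap}, and then to deduce part (ii) from (i) together with the fact that $T^*\iota$ is a surjective submersion and $d\Theta_R$ is non-degenerate.

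For part (i), I would recall that the Liouville form is characterized intrinsically by $\Theta_Q(\varphi)(X) = \varphi\bigl(T\pi^*_Q(X)\bigr)$ for any $X \in T_\varphi(T^*Q)$, and similarly for $\Theta_R$. The identity \eqref{a3} then reduces to the commutativity relation
\[
\pi^*_Q \circ I_R \;=\; \iota \circ \pi^*_R \circ T^*\iota,
\]
which is immediate from the defining formula $T^*\iota(r,\varphi) = \varphi\circ T\iota|_r$ together with the constraint $\pi^*_Q(\varphi) = \iota(r)$ built into the fibre product $T^*Q|_R$. Evaluating both $(I_R)^*\Theta_Q$ and $(T^*\iota)^*\Theta_R$ on an arbitrary tangent vector $Y \in T_{(r,\varphi)}(T^*Q|_R)$ and applying the chain rule, both sides collapse to $\varphi\bigl(T(\pi^*_Q\circ I_R)(Y)\bigr)$. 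As a sanity check I would also do the coordinate computation: in a slice chart $(q_1,\ldots,q_k,q_{k+1},\ldots,q_n)$ adapted to $R=\{q_{k+1}=\cdots=q_n=0\}$, with dual momenta $(p_1,\ldots,p_n)$, the map $T^*\iota$ simply forgets $(p_{k+1},\ldots,p_n)$, so both pullbacks equal $\sum_{i=1}^{k}p_i\,dq_i$ on the nose.

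For part (ii), I would first take exterior derivatives in \eqref{a3} to obtain $(I_R)^*d\Theta_Q = (T^*\iota)^*d\Theta_R$. Since $d\Theta_R$ is symplectic, the kernel of this pullback at any point $(r,\varphi)$ coincides with $\ker T(T^*\iota)|_{(r,\varphi)}$, which is the vertical distribution of the surjective submersion $T^*\iota\colon T^*Q|_R \to T^*R$. This distribution has constant rank and is automatically integrable, its leaves being precisely the (connected) fibres of $T^*\iota$. Consequently the quotient $T^*Q|_R/\!\sim$ is canonically diffeomorphic to $T^*R$ via the map induced by $T^*\iota$, and the reduced closed $2$-form corresponds exactly to $d\Theta_R$, which is symplectic.

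The only substantive point is the compatibility in (i); once that is in hand, (ii) is a direct instance of the standard fact that a closed $2$-form which is the pullback, by a surjective submersion, of a symplectic form has its degeneracy foliation equal to the fibration given by that submersion. I do not expect any serious obstacle here: either the coordinate-free chain-rule argument or the slice-chart computation settles (i) in a few lines, and (ii) follows formally.
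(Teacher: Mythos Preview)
Your proposal is correct and follows essentially the same approach as the paper: for (i) the paper also unwinds the intrinsic definition of the Liouville form and reduces the identity to the commutativity $\pi^*_Q\circ I_R = \iota\circ\pi^*_R\circ T^*\iota$, evaluating both sides on an arbitrary tangent vector; for (ii) the paper likewise takes $d$ of \eqref{a3} and identifies the degeneracy leaves with the fibres of $T^*\iota$. Your version is, if anything, a bit more explicit in part (ii) (naming the submersion argument and the connectedness of the fibres) and adds the slice-chart sanity check, but there is no substantive difference.
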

\begin{proof}
(i) For $\varphi \in T^*Q|_R$ and any $\xi(\varphi) \in T_\varphi (T^*Q|_R)$ we have 
\begin{multline}\label{a4}
\langle I_R^*\Theta_Q (\varphi), \xi (\varphi)\rangle = \langle \Theta_Q(I_R(\varphi)), TI_R(\varphi)\xi (\varphi)\rangle =\\
\langle I_R(\varphi), T\pi^*_Q (\varphi) TI_R(\varphi)\xi (\varphi) \rangle = \langle \varphi , T\pi^*_Q (\varphi)\xi(\varphi)\rangle. 
\end{multline}
On the other hand, from the diagram \eqref{diagap}, we have 
\begin{multline}\label{a5}
\langle \left((T^*\iota)^*\Theta_R\right) (\varphi) , \xi (\varphi)\rangle = \langle \Theta_R(T^*\iota (\varphi), T(T^*\iota)(\varphi)\xi (\varphi)\rangle = \\
\langle T^*\iota (\varphi) , T\pi^*_R (T^*\iota (\varphi ))\circ T(T^*\iota)(\varphi)\xi (\varphi)\rangle = \langle T^*\iota (\varphi), T(\pi^*_R \circ T^*\iota )(\varphi) \xi (\varphi )\rangle =\\
\langle \varphi \circ T\iota (r) , T(\pi^*_R \circ T^*\iota )(\varphi) \xi (\varphi )\rangle = \langle \varphi , T\iota (r) \circ T(\pi^*_R \circ T^*\iota )(\varphi) \xi (\varphi )\rangle =\\
\langle \varphi , T\iota ((\pi^*_R\circ T^*\iota)(\varphi)\circ T(\pi^*_R \circ T^*\iota))(\varphi) \xi (\varphi) \rangle = \langle \varphi , T(\iota \circ \pi^*_R \circ T^*\iota )(\varphi) \xi (\varphi) \rangle = \\
\langle \varphi , T\pi^*_Q (\varphi )\xi (\varphi) \rangle .
\end{multline}
Proving the above equalities we used $\pi^*_Q(\varphi ) = \iota (r) = (\iota \circ \pi^*_R\circ T^*\iota )(\varphi )$ and $\iota (r) = r$. 
Comparing \eqref{a4} with \eqref{a5} we obtain \eqref{a3}.\\
(ii) It follows from \eqref{a3} that degeneracy leaves of $(I_R)^*d\Theta_Q$ are the same as the ones for $(T^*\iota)^*\Theta_R$. Thus, one has the isomorphism $(T^*Q|_R/\sim , d \Theta_Q)\cong (T^*R, d\Theta_R)$.
\end{proof}

\thebibliography{99}
\bibitem{BF} Bolsinov A. V., Fomenko A. T., \textit{Integrable Hamiltonian systems}, 
\bibitem{C}  Clebsch A., \textit{Uber die Bewegung eines Korpers in einer Flussigkeit}, Math. Annalen, 3 (1871), 238- 262.
\bibitem{ZU} Dufour J.P., Zung N.T., \textit{Poisson structures and their normal forms}, Birkhäuser; 2005
\bibitem{G1} Guillemin V., Sternberg S., \textit{Remarks on a paper of Hermann}, Trans. Amer. Math. Soc. 130 (1968)110-116
\bibitem{G2} Guillemin V., Sternberg S.,\textit{The moment map and collective motion}, Ann. Physics 127 (1980) 220-253
\bibitem{IL} Imamoglu A., Lukin M. D., \textit{Nonlinear Optics and quantum entanglement of ultraslow single photons}, Phys. Rev. Lett. 84(7), 1419 (2000).

\bibitem{I}  Iwai T., \textit{A dynamical group $SU(2, 2)$ and its use in the MIC-Kepler problem}, J. Phys. A: Math. Gen. 26
(1993), 609–630.
\bibitem{SK} Kowalewski S., \textit{Sur le probleme de la rotation d'un corps solide autour d'un point fixe}, Acta Math., 12 (1889), 177-232 
\bibitem{L} Lipkin H.J., Meshkov N., Glick A.J., \textit{Validity of Many-Body Approximation Method: Exact Solutions and Perturbation Theory}, Nuclear Physics 62 (1965
\bibitem{Mos} Moser J.K., \textit{Regularization of Kepler's problem and the averaging method on a manifold}, Commun. Pure Appl. Math.23, 609–636 (1970)
\bibitem{O} Odzijewicz A., \textit{A conformal holomorphic field theory}, Commun. Math. Phys. 107 (1986), no. 4, 561-575.
\bibitem{O2} Odzijewicz A., \textit{Coherent states and geometric quantization}, Commun. Math. Phys. 150 (1992), no. 2, 385-413.
\bibitem{OSW} Odzijewicz A., Sliżewska A., Wawreniuk E., \textit{A Family of Integrable Perturbed Kepler Systems}, Russ. J. Math. Phys., Vol. 26, No. 3 (2019) 
\bibitem{OS} Odzijewicz A., Świetochowski M., \textit{Coherent states map for MIC–Kepler system}, J. Math. Phys. 38, 5010 (1997)
\bibitem{OW4} Odzijewicz A., Wawreniuk E., \textit{An integrable (classical and quantum) four-wave mixing Hamiltonian system}, J. Math. Phys. 61 (2020), no. 7, 1-18, 
\bibitem{P} Penrose R., \textit{Twistor algebra}, J. Math. Phys 8, 345-366 (1967)
\bibitem{Pe} Perelomov A.M., \textit{Integrable Systems of Classical Mechanics and Lie Algebras}, Birkhäuser Basel (1990)
\bibitem{RM} Marsden J.E., Ratiu T.S., \textit{Introduction to Mechanics and symmetry}, Springer, 1994
\bibitem{Mil} Milburn W., Walls D.F., \textit{Quantum Optics}, 1st ed., Springer-Verlag, 1994
\bibitem{MF} Mishchenko A. S., Fomenko A. T., \textit{Generalized Liouville method of integration of Hamiltonian systems}, Funct. Anal. Appl. 12 (1978) 113-121 (1978)
\bibitem{We} Weinstein A., \textit{The local structure of Poisson manifolds}, J. Differential Geom. 18(3): 523-557 (1983)
\bibitem{W} Weinstein A., Cannas da Silva A., \textit{Geometric Models for Noncommutative Algebras}, American Mathematical Society
\bibitem{Z} Zhukovski N.E., \textit{On the motion of a rigid body having cavities filled with homogeneous liquid}, Zh. Russk. Fiz-Khim. Obsch., 17 (1885), No.6, P. 81{113; No.7, P. 145{149; No.8, P. 231{280.

\end{document}
\begin{multline}\label{eq:631}
H_{\lambda}\circ \textbf{J}_{e, \nu}(\vec{y}, \vec{p}) = \frac{1}{8} \left[ p_1^2 \left(\frac{(y_2+y_1y_3)^2}{I_1} + \frac{\left(\frac{\vec{y}^2 -1}{2}- y_1^2\right)^2}{I_2} + \frac{(y_3-y_1y_2)^2}{I_3}\right) + \right.\\
p_2^2 \left(\frac{(y_1-y_2y_3)^2}{I_1} + \frac{(y_3+y_1y_2)^2}{I_2} + \frac{\left(\frac{\vec{y}^2-1}{2}- y_2^2\right)^2}{I_3}\right) + 
p_3^2\left( \frac{\left(\frac{\vec{y}^2-1}{2}- p_3^2\right)^2}{I_1} + \frac{(y_2-y_1y_3)^2}{I_2} + \frac{(y_1 + y_2y_3)^2}{I_3}\right) + \\
2p_1p_2 \left(\frac{1}{I_1}(y_1y_2(y_3^2-1)+y_3(y_2^2-y_1^2))+ \frac{1}{I_2}(y_1y_2+y_3)\left(y_1^2-\frac{\vec{y}^2-1}{2}\right) + \frac{1}{I_3}(y_1y_2-y_3)\left(y_2^2-\frac{\vec{y}^2-1}{2}\right)\right) + \\
2p_1p_3 \left(\frac{1}{I_1}(y_1y_3+y_2)\left(y_3^2-\frac{\vec{y}^2-1}{2}\right)+\frac{1}{I_2}(y_1y_3-y_2)\left(y_1^2-\frac{\vec{y}^2-1}{2}\right)+\frac{1}{I_3}\left(y_1y_3\left(y_2^2-1\right)+ y_2(y_1^2-y_3^2)\right)\right)+ \\
\left.2p_2p_3 \left(\frac{1}{I_1}(y_2y_3-y_1)\left(y_3^2- \frac{\vec{y}^2-1}{2}\right) + \frac{1}{I_2}(y_2y_3(y_1^2-1) + y_1(y_3^2 - y_1^2)) + \frac{1}{I_3}(y_2y_3+y_1)\left(y_2^2- \frac{\vec{y}^2-1}{2}\right)\right)\right]\\
+ \frac{\lambda_1}{2I_1}\left(\frac{\vec{y}^2-1}{2}p_3 - (\vec{y}\cdot \vec{p})y_3 + y_1p_2 - y_2p_1 + \lambda_1\right) + 
\frac{\lambda_2}{2I_2}\left(\frac{\vec{y}^2-1}{2}p_1 - (\vec{y}\cdot \vec{p})y_1 + y_2p_3 - y_3p_2 + \lambda_2\right)+\\
\frac{\lambda_1}{2I_1}\left(\frac{\vec{y}^2-1}{2}p_2 - (\vec{y}\cdot \vec{p})y_2 + y_3p_1 - y_1p_3 + \lambda_3\right) + (U\circ \vec{\Gamma})(\vec{y}),
\end{multline}
H_{\lambda}\circ \textbf{J}_{e, \mu }(\vec{p}, \vec{y}) = \frac{1}{2} \vec{p}^T G \vec{p} +\frac{\mu}{||\vec{y}||} \vec{p}\cdot \left((I^{-1}\vec{y})\times \vec{y}\right) +\frac{1}{2}\frac{\mu^2}{\vec{y}^2} \vec{y}^TI^{-1}\vec{y} + \\
(I^{-1}\vec{\lambda})\cdot \left(\vec{y}\times \vec{p} + \frac{\mu}{||\vec{y}||}\vec{y}\right) + \vec{\lambda} I^{-1}\vec{\lambda} + U(-\vec{y}),
where
\begin{equation}
G= \left(\begin{array}{ccc}
\frac{y_2^2}{I_3}+ \frac{y_3^2}{I_2} & \frac{-y_1y_2}{I_3} & \frac{-y_1y_3}{I_2}\\
\frac{-y_1y_2}{I_3} & \frac{y_1^2}{I_3}+ \frac{y_3^2}{I_1} & \frac{-y_2y_3}{I_1}\\
\frac{-y_1y_3}{I_2} & \frac{-y_2y_3}{I_1} & \frac{y_1^2}{I_2}+ \frac{y_2^2}{I_1},
\end{array}\right)
\end{equation}

\textbf{1. Hamiltonian system on $(\widetilde{\mathbb{T}}\cong \mathbb{R}^4\times \mathbb{R}^4, \Omega)$}

Using \eqref{eq:je} and .... one can realize the general Hamiltonian \eqref{Hht} describing the dynamics of a gyrostat with a fixed point as a function on $\mathbb{R}^4\times \mathbb{R}^4$ given by 
\begin{multline}
\textbf{J}_e^*H_{\lambda} = \frac{1}{8}\pi^TG \pi + \frac{\lambda_1}{2I_1}(q_0\pi_3+q_1\pi_2-q_2\pi_1 - q_3\pi_0 + \lambda_1) + \\
\frac{\lambda_2}{2I_2}(q_0\pi_1- q_1\pi_0+q_2\pi_3-q_3\pi_4+\lambda_2)+ \\
\frac{\lambda_3}{2I_3}(q_0\pi_2-q_1\pi_3-q_2\pi_0+q_3\pi_1 + \lambda_3) +  U(q), 
\end{multline}
where 
\begin{equation}
G=\left(\begin{array}{cccc}
\frac{q_3^2}{I_1} + \frac{q_1^2}{I_2}+\frac{q_2^2}{I_3} & \frac{q_2q_3}{I_1} - \frac{q_0q_1}{I_2}-\frac{q_2q_3}{I_3} & -\frac{q_1q_3}{I_1} + \frac{q_1q_3}{I_2}-\frac{q_0q_2}{I_3} & -\frac{q_0q_3}{I_1} - \frac{q_1q_2}{I_2}+\frac{q_1q_2}{I_3}\\
\frac{q_2q_3}{I_1} - \frac{q_0q_1}{I_2}-\frac{q_2q_3}{I_3} & \frac{q_2^2}{I_1} + \frac{q_0^2}{I_2}+\frac{q_3^2}{I_3} & -\frac{q_1q_2}{I_1} - \frac{q_0q_3}{I_2}+\frac{q_0q_3}{I_3}& -\frac{q_0q_2}{I_1} + \frac{q_0q_2}{I_2}-\frac{q_1q_3}{I_3}\\
-\frac{q_1q_3}{I_1} + \frac{q_1q_3}{I_2}-\frac{q_0q_2}{I_3} & -\frac{q_1q_2}{I_1} - \frac{q_0q_3}{I_2}+\frac{q_0q_3}{I_3} & \frac{q_1^2}{I_1} + \frac{q_3^2}{I_2}+\frac{q_0^2}{I_3} & \frac{q_0q_1}{I_1} - \frac{q_2q_3}{I_2}-\frac{q_0q_1}{I_3}\\
-\frac{q_0q_3}{I_1} - \frac{q_1q_2}{I_2}+\frac{q_1q_2}{I_3}&  -\frac{q_0q_2}{I_1} + \frac{q_0q_2}{I_2}-\frac{q_1q_3}{I_3}& \frac{q_0q_1}{I_1} - \frac{q_2q_3}{I_2}-\frac{q_0q_1}{I_3} & \frac{q_0^2}{I_1} + \frac{q_2^2}{I_2}+\frac{q_1^2}{I_3}
\end{array}\right)
\end{equation}

$\bullet $ Kovalevskaya top (do poprawy):
\begin{equation}
H = \frac{1}{2I} p^TGp + 2\chi(q_1q_2+q_3q_4) + 2\chi (q_2q_3-q_1q_4), 
\end{equation}
where 
\begin{equation}
G = \left(\begin{array}{cccc}
q_2^2+2q_3^2+q_4^2 & -(q_1q_2+q_3q_4) & -2q_1q_3 & q_2q_3-q_1q_4 \\
-(q_1q_2+q_3q_4) & q_1^2 +q_3^2 +2q_4^2 & q_1q_4 - q_2q_3 & -2q_2q_4 \\
-2q_1q_3 & q_1q_4 - q_2q_3 & 2q_1^2 + q_2^2 + q_4^2 & -(q_1q_2+q_3q_4)\\
q_2q_3-q_1q_4  & -2q_2q_4 & -(q_1q_2+q_3q_4) & q_1^2+2q_2^2 +q_3^2
\end{array}\right).

\textbf{Hamiltonians on $(\mathbb{R}^3\times \dot{\mathbb{R}}^3 , \Omega_\mu )$}

$\bullet $ General case:
\begin{multline}
H = \frac{1}{2} \vec{p}^T G \vec{p} +\frac{\mu}{||\vec{y}||} \vec{p}\cdot \left((I^{-1}\vec{y})\times \vec{y}\right) +\frac{1}{2}\frac{\mu^2}{\vec{y}^2} \vec{y}^TI^{-1}\vec{y} + \\
(I^{-1}\vec{\lambda})\cdot \left(\vec{y}\times \vec{p} + \frac{\mu}{||\vec{y}||}\vec{y}\right) + \vec{\lambda} I^{-1}\vec{\lambda} + U(\vec{y}). 
\end{multline}
where
\begin{equation}
G= \left(\begin{array}{ccc}
\frac{y_2^2}{I_3}+ \frac{y_3^2}{I_2} & \frac{-y_1y_2}{I_3} & \frac{-y_1y_3}{I_2}\\
\frac{-y_1y_2}{I_3} & \frac{y_1^2}{I_3}+ \frac{y_3^2}{I_2} & \frac{-y_2y_3}{I_1}\\
\frac{-y_1y_3}{I_2} & \frac{-y_2y_3}{I_1} & \frac{y_1^2}{I_2}+ \frac{y_2^2}{I_1}
\end{array}\right)
\end{equation}

$\bullet $ Kovalevskaya top:
\begin{multline}
H_K = \frac{1}{2I} \vec{p}^T \left(\begin{array}{ccc}
2y_2^2+y_3^2 & -2y_1y_2 & -y_1y_3 \\
-2y_1y_2 & y_3^2+2y_1^2 & -y_2y_3 \\
-y_1y_3 & -y_2y_3 & y_1^2+y_2^2 
\end{array}\right) \vec{p} + \\
\frac{\mu}{I||\vec{y}||}(p_2y_1y_3 -p_1y_2y_3) + \frac{\mu^2}{2I||\vec{y}||}\left(1+ \frac{y_3^2}{||\vec{y}||}\right) - \chi_1 y_1 - \chi_2 y_2. 
\end{multline}

\end{equation}
Rescaling $\zeta \mapsto \frac{1}{\sqrt{-\nu}}\zeta$ by the radius of $\mathbb{S}^3_{\sqrt{-\nu}}$ we can consider $\zeta$ as an element $\zeta \in \mathbb{S}^3$ of the $3$-sphere of radius one and thus, after rescalling, we find that $(\zeta , P)\in \mathbb{S}^3\times iH_0(2)\cong T^*\mathbb{S}^3$ is an element of the cotangent bundle $T^*\mathbb{S}^3 =T^*SU(2)$ of the special unitary group $SU(2)\cong \mathbb{S}^3$. Then the symplectic form $\Psi^*\Omega_\nu$ one can rewrite as 
\begin{equation}
\Psi^*\Omega_\nu = \nu d\mbox{Tr}(P(\zeta d\zeta^+ - d\zeta \zeta^+)), 
\end{equation}
where $(\zeta , P)\in SU(2)\times iH_0(2)$.  Let us note that the differential $1$-form $\zeta d\zeta^+ - d\zeta \zeta^+$ assumes values in $iH(2)$ and isomorphism between the Lie algebra $\textbf{su}(2)= iH_0(2)$ and its dual $\textbf{su}(2)^*$ is given by the trace $\mbox{Tr}$.